\def\dbra#1{\mathinner{\langle\!\langle{#1}|}}
\def\dket#1{\mathinner{|{#1}\rangle\!\rangle}}
    \newcommand{\inprod}[2]{\langle #1 | #2 \rangle}
	\newcommand{\proj}[2]{\ket{#1}\bra{#2}}
\DeclareMathOperator{\tr}{tr}
\newcommand{\C}{\mathcal}
\newtheorem{defi}{Definition}
\newtheorem{lemma}{Lemma}
\newtheorem{example}{Example}
\newtheorem{assumption}{Assumption}
\newtheorem{remark}{Remark}
\Crefname{thm}{Theorem}{Theorems}
\crefname{thm}{Thm.}{Thms.}
\Crefname{prop}{Proposition}{Propositions}
\crefname{prop}{Prop.}{Props.}
\Crefname{defi}{Definition}{Definitions}
\crefname{defi}{Def.}{Defs.}
\Crefname{prop}{Proposition}{Propositions}
\crefname{lemma}{Lemma}{Lemmas}
\crefname{coro}{Cor.}{Cors.}
\Crefname{coro}{Corollary}{Corollaries}
\crefname{exampe}{Ex.}{Exs.}
\Crefname{exampe}{Example}{Examples}
\crefname{remark}{Remark}{Remarks}
\crefname{conj}{Conj.}{Conjs.}
\Crefname{conj}{Conjecture}{Conjectures}
\newcommand{\ms}[1]{\textcolor{black!60!green}{ [#1]}}
\newcommand{\vv}[1]{\textcolor{blue}{ [#1]}}
\title{Mapping indefinite causal order processes to composable quantum protocols in a spacetime}
\author{Matthias Salzger}
\affiliation{International Centre for Theory of Quantum Technologies, University of Gda\'{n}sk, 80-309 Gda\'{n}sk, Poland}
\affiliation{Institute for Theoretical Physics, ETH Zurich, 8093 Z\"{u}rich, Switzerland}
\email{matthias.salzger@phdstud.ug.edu.pl}
\author{V. Vilasini}
\affiliation{Institute for Theoretical Physics, ETH Zurich, 8093 Z\"{u}rich, Switzerland}
\email{vilasini@phys.ethz.ch}
\date{April 2024}
\begin{document}

\maketitle

\begin{abstract}


Formalisms for higher order quantum processes provide a theoretical formalisation of quantum processes where the order of agents' operations need not be definite and acyclic, but may be subject to quantum superpositions. This has led to the concept of indefinite causal structures (ICS) which have garnered much interest. However, the interface between these information-theoretic approaches and spatiotemporal notions of causality is less understood, and questions relating to the physical realisability of ICS in a spatiotemporal context persist despite progress in their information-theoretic characterisation.  Further, previous work suggests that composition of processes is not so straightforward in ICS frameworks, which raises the question of how this connects with the observed composability of physical experiments in spacetime. To address these points, we compare the formalism of quantum circuits with quantum control of causal order (QC-QC), which models an interesting class of ICS processes, with that of causal boxes, which models composable quantum information protocols in spacetime. We incorporate the set-up assumptions of the QC-QC framework into the spatiotemporal perspective and show that every QC-QC can be mapped to a causal box that satisfies these set up assumptions and acts on a Fock space while reproducing the QC-QC's behaviour in a relevant subspace defined by the assumptions. Using a recently introduced concept of fine-graining, we show that the causal box corresponds to a fine-graining of the QC-QC, which unravels the original ICS of the QC-QC into a set of quantum operations with a well-defined and acyclic causal order, compatible with the spacetime structure. Our results also clarify how the composability of physical experiments is recovered, while highlighting the essential role of relativistic causality and the Fock space structure.

\end{abstract}

\pagenumbering{arabic}

\tableofcontents

\newpage

\section{Introduction}
\label{sec:intro}

In recent years, significant progress has been made in understanding causal influence in quantum theory, particularly regarding its deviation from classical intuitions due to phenomena like superpositions and entanglement. Various approaches to quantum causality have emerged, such as frameworks for quantum causal models \cite{Henson_2014, pienaar2020quantum, pienaar2015graph, Barrett2019, Barrett_2021, costa2016quantum}, that provide causal explanations for practical quantum experiments such as Bell scenarios, as well as theoretical formalisms for higher order quantum processes which lead to more exotic notions such as indefinite causal structures \cite{Hardy2005,Oreshkov_2012,Chiribella2013}. While the former entails quantum operations of agents occurring in a definite and acyclic order, the latter involves more general, abstract quantum protocols where the order of operations is no longer fixed and definite.\footnote{We do not necessarily endorse the terminology "indefinite" causal structures for such protocols in general but adhere to it in this paper for consistency with the literature.}

These indefinite causal structures have been extensively studied for their intriguing theoretical possibilities which extend beyond the standard quantum circuit paradigm, and the potential applications they may offer for information processing \cite{Chiribella_2012, Colnaghi_2012, Ara_jo_2014}. 
However, the physical realisability of these theoretical processes, as well as what constitutes a ``faithful" realisation thereof, remains a highly discussed open problem in the field \cite{Portmann_2017, Paunkovi__2020, Oreshkov_2019, Wechs_2021, Ormrod_2023}.


The framework of \emph{quantum circuits with quantum control of causal order (QC-QC)} \cite{Wechs_2021} describes a subset of processes which can be interpreted in terms of generalised quantum circuits. These allow for processes which are associated with an indefinite causal structure, and involve a quantum superposition of the order of agents' operations, which can be coherently controlled by the state of some quantum system, or where one agent's quantum/classical output may dynamically determine the order in which future agents act. QC-QCs are widely regarded as being physically realisable (in principle), but it is yet to be formalised in what precise sense. Moreover, a crucial question is whether this class encompasses \emph{all} physically realisable processes. 

A related question concerns the composability of physical processes, due to indications that composing even simple processes in typical frameworks for indefinite causality is not so straightforward \cite{Gu_rin_2019, Jia_2018}. Composability is central to our understanding of physics, as the composition of two physical experiments is another physical experiment. How does this potential difficulty with compositions in the process framework reconcile with the observed composability of real-world experiments? Previous approaches \cite{Jia_2018, Kissinger_2019} have proposed consistent rules for composing (higher order) quantum processes. However, the implications of these abstract rules for the composability of quantum experiments in space and time have not been previously considered.

The aforementioned approaches operate within an information-theoretic understanding of causality, based on the flow of information between systems. This is a priori distinct from relativistic  notions of causality related to spacetime \cite{VilasiniColbeck_2022, VilasiniColbeck_2022_loops, Vilasini_2022}. To formalise physical realisability and address related questions, it is necessary to link such information-theoretic structures to space and time and account for relativistic principles of causality. In \cite{Vilasini_2022}, a top-down framework was proposed for making precise the link between these notions and formalising relativistic principles for general quantum protocols in a \emph{fixed and acyclic background spacetime}. This led to a formalisation of the concept of realisation of a quantum process in such a spacetime, allowing for scenarios where quantum messages exchanged between agents\footnote{The agents and their labs are regarded as classical, although they can perform quantum operations on quantum systems entering their labs. This distinguishes the approach of \cite{Vilasini_2022} from frameworks for quantum reference frames where the reference frame/observer is itself regarded as a quantum system.} can take superpositions of trajectories in the spacetime. There are two important insights derived from this approach which are relevant for the physicality question: the first concerns the concept of \emph{fine-graining} and the second concerns the relation to the previously known framework of \emph{causal boxes} \cite{Portmann_2017}. We describe these two aspects in turn.

The concept of fine-graining of causal structures introduced in \cite{Vilasini_2022} demonstrates that an abstract information-theoretic causal structure, which is not acyclic, can unravel into an acyclic causal structure at a fine-grained level once realised in space and time, without violating relativistic causality principles. A simple and intuitive example of this phenomenon is the following: if the demand $D$ and price $P$ of a commodity causally influence each other, we have a cyclic causal structure between $D$ and $P$, although the fine-grained description of the physical process is an acyclic one where demand $D_1$ at time $t_1$ influences price $P_2$ at time $t_2>t_1$ which influences demand $D_3$ at time $t_3>t_2$ and so on. This enables information-theoretic and spatiotemporal causality notions to be consistently reconciled for quantum experiments in spacetime, both at a formal and a conceptual level (see \cite{Vilasini_2022} and \cref{sec:finegraining} for details).

The causal box framework \cite{Portmann_2017} describes composable information processing protocols within fixed acyclic spacetimes, allowing for scenarios where quantum states may be sent or received at a superposition of different spacetime locations. Originally developed for studying security notions in relativistic quantum cryptography which remain stable under composition of protocols \cite{Vilasini_2019}, the formalism guarantees that composition of two or more causal boxes results in yet another causal box. It has been shown through the top-down approach of \cite{Vilasini_2022} that the most general protocols that can be realised without violating relativistic causality in a background spacetime are those that can be described as causal boxes.\footnote{This statement applies to all protocols involving finite-dimensional quantum systems and a finite number of information processing steps, not to the infinite-dimensional case in its current form.} 

Here we focus on the question of physical realisability of processes in a fixed background spacetime--- arguably the regime in which current-day experiments operate. The above-mentioned results of \cite{Vilasini_2022} enable the question of physical realisability of processes in a fixed spacetime to be reduced to asking which subset of processes can be modelled as causal boxes.\footnote{Explicitly, since causal boxes describe the most general protocols in a fixed spacetime, a necessary condition for a process to be realisable in a fixed spacetime is that it admits a causal box model.} This motivates us to compare and map between the causal boxes and QC-QC frameworks. However, these two frameworks, while sharing some common features, are quite different. For example, causal boxes are composable and allow multiple rounds of information processing and superpositions of different numbers of messages while processes/QC-QCs are not immediately composable as suggested by \cite{Gu_rin_2019}, and only consider agents in closed labs acting once on a single message. Thus, mapping between the formalisms requires a careful analysis of the underlying objects, state spaces, and spacetime information. We undertake such an analysis here, and our results (which we summarise below) formally ground this important class of processes within a spatiotemporal and operational perspective, and shed light on their fine-grained causal structure as well as the composability of physical experiments. 


{\bf Summary of contributions} Starting with a review of higher-order quantum processes (including QC-QCs) in \cref{sec:pm} and the causal box formalism in \cref{sec:cb}, \cref{sec:statespace} delves into a detailed analysis of state spaces and operations within the QC-QC and causal box frameworks. Incorporating the set-up assumptions of the process formalism (such as the fact that each party can act exactly once) in terms of the spacetime picture, we formally address the question: what does it mean for a causal box to model/behave like a QC-QC under these assumptions? We term such a causal box an \textit{extension} of the associated QC-QC. 
In \cref{sec:qcqctopb}, we demonstrate the existence of a causal box extension for each QC-QC and construct multiple extensions. 
In \cref{sec:fine}, we show the causal box extension of a QC-QC is a fine-graining of the QC-QC according to the concept introduced in \cite{Vilasini_2022}. Importantly, although the QC-QC may be associated with an indefinite causal structure, we show that the resulting causal box always exhibits a definite and acyclic fine-grained causal structure. This holds even as the causal box reproduces the QC-QC's action on relevant states, permitting each party to act once on a physical system. In \cref{sec:composability}, we discuss how one can reconcile the suggested difficulties in composition within the process framework \cite{Gu_rin_2019} with the observation that physical experiments in spacetime are composable, emphasizing the interplay of relativistic causality principles, the Fock space structure and the set-up assumptions of the process/QC-QC frameworks in this regard. This work sets the foundation for a follow-up paper, where we explore the reverse mapping, from general causal boxes to QC-QCs, to provide a tighter characterisation of processes realisable in a fixed background spacetime.

\section{Review of higher order quantum processes}\label{sec:pm}
\subsection{General higher order quantum processes}\label{sec:generalpm}

The standard quantum circuit paradigm describes protocols with a well-defined acyclic ordering between the different operations (or quantum gates). This is consistent with a clear arrow of time. Recently, frameworks have been proposed which go beyond this to define abstract information-processing protocols without assuming the existence of a definite acyclic order between the different operations, or the existence of a background spacetime structure \cite{Barrett_2021, Oreshkov_2012, Chiribella2013, Wechs_2021}.


In general, we consider agents which apply quantum instruments $\C{M}_A^a = \{\C{M}_A^{x|a}\}_x$ where $\C{M}_A^{x|a}: \C{L}(\C{H}^{A^I}) \rightarrow \C{L}(\C{H}^{A^O})$ \cite{davies1970operational} where $\C{H}^{A^I}$ is the agent's input Hilbert space while $\C{H}^{A^O}$ is their output Hilbert space, $a$ is the measurement setting and $x$ the measurement outcome. However, for our purposes, it will actually be enough to just consider generic CP maps $\C{M}_A$ as we will be interested in generic outcome probabilities. We will refer to such operations as local operations. 

\begin{figure}
    \centering
    \includegraphics{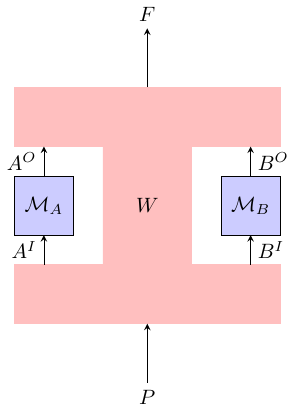}
    \caption{A higher order process (pink) taking two local operations (blue) as input. The result is a map from some global past $P$ to some global future $F$.}
    \label{fig:higherorder}
\end{figure}

We can then view a general quantum process as a higher order map $\C{W}$ which acts on $N$ local operations and maps them to another operation (CP map) $\C{M}: \C{L}(\C{H}^{P}) \rightarrow \C{L}(\C{H}^{F})$, where $\C{H}^{P/F}$ correspond to the Hilbert spaces associated with the global past/future (relative to the remaining operations)

\begin{equation}
  \C{W}:  \{\C{M}_{A_1},...,\C{M}_{A_N}\} \mapsto \C{M}.
\end{equation}

This is depicted diagrammatically in \cref{fig:higherorder} for the case of two agents $A$ and $B$. If $\C{H}^{P/F}$ are trivial (one-dimensional), $\C{W}$ encodes the joint probability associated with the CP maps of the local operations.

The Choi representation of such a higher order process is also known as the process matrix \cite{Oreshkov_2012}, $W\in \bigotimes_{i=1}^N\C{L}(\C{H}^{A_i^IA_i^O})\otimes \C{L}(\C{H}^{PF})$, which lives in the joint space of all the agents' in and outputs and the global past and future. The action of the supermap on the local operations can be written in terms of the process matrix. The result is the Choi representation of the map $\C{C}$ which lives in $\C{L}(\C{H}^{PF})$,


\begin{equation}\label{eq:supermap}
    W(\C{M}_{A_1},...,\C{M}_{A_N}) \coloneqq (M_{A_1} \otimes ... \otimes M_{A_N}) * W \in \C{L}(\C{H}^{PF})
\end{equation}

where $M_{A_i}$ denotes the Choi matrix of the local operation $\C{M}_{A_i}$ and $*$ is the link product. For a review of the link product we refer to \cref{sec:link}. The outcome probabilities for a given state $\rho^P \in \C{H}^P$, and local operations associated with particular outcomes and settings can be obtained via the generalised Born rule,

\begin{equation}\label{eq:pmprob}
 P(x_1,...x_N|a_1,...,a_N)=   \tr_{F} [(\rho^P \otimes M_{A_1}^{x_1|a_1} \otimes ... \otimes M_{A_N}^{x_N|a_N}) * W].
\end{equation}

For the remainder of this work we will work in the process matrix formulation. However, one could equivalently work in the supermap formulation of \cite{Chiribella2013} and as such our results apply to both equally. 

It is often much easier to work with process vectors \cite{Ara_jo_2015} instead of process matrices. While the process matrix can be viewed as the Choi matrix of the environment, the process vector is essentially the corresponding Choi vector. If the process vector is given by $\ket{w}$, then the process matrix is simply $W = \ket{w} \bra{w}$. Instead of \cref{eq:supermap}, we can then use

\begin{equation}\label{eq:pmprob_pure}
    (\dket{A_1} \otimes ... \otimes \dket{A_N}) * \ket{w}.
\end{equation}

where $\C{M}_{A_k}(\rho) = A_k \rho A_k^\dagger$ and $\dket{A_k}$ is the Choi vector of $A_k$ (cf. \cref{sec:choi}). In such cases, we will also refer to $A_k$ as the local operation.\footnote{Note that we use $A_k$ to refer to both the local agent and the single Kraus operator describing the pure operation that this agent applies. This will allow us to keep equations compact while it should be clear from context whether the agent or the Kraus operator is meant.}

\subsection{Subset of processes modelled by generalised quantum circuits}\label{sec:circuitpm}

In general, interpreting the causal structures described by process matrices is difficult. As discussed above, the framework can model very general scenarios as it does not assume a background spacetime and a long-standing open question is to understand which process matrices can be physically realised and under what assumptions and physical regimes. In particular, there are 
so-called non-causal processes that produce correlations which violate bounds known as causal inequalities\footnote{These bounds on correlations are set by the causal processes, i.e. those that are compatible with some definite acyclic causal order or a convex combination of several such orders.}. Such processes can provide an advantage over more conventional processes (e.g., those with a definite, acyclic order) in certain information-theoretic games \cite{Oreshkov_2012}.


The framework of quantum circuits with quantum control of causal order (QC-QC) \cite{Wechs_2021}, adopts a bottom-up approach to this problem and defines a broad class of process matrices that can be represented in terms of generalised quantum circuits. As we will see, they can be viewed as circuits in the sense that the local operations can be ``plugged in" with their causal order being quantum coherently or classically controlled. They are, however, more general than standard quantum circuits as they include processes which are regarded as having an indefinite causal structure. Nevertheless, QC-QCs have been shown to not violate any causal inequalities (analogous to entangled states that do not violate Bell inequalities).

A framework with similar results was developed in \cite{Purves_2021}, we will, however, focus on the QC-QC framework in this work. In this section, we will only briefly summarise the main features of the QC-QC framework in words. This should be enough to understand the core concepts and arguments presented in the main text. A more mathematical summary can be found in \cref{sec:qcqcdetails}, whereas for the full description we refer to the original paper \cite{Wechs_2021}.

\begin{figure}[t!]
    \centering
    \begin{subfigure}{1\textwidth}
    \centering
    \includegraphics[width=1\textwidth]{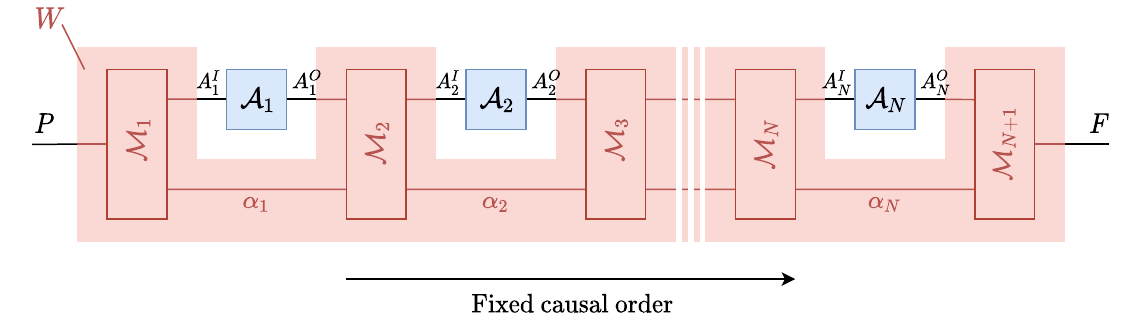}
    \caption{Quantum circuits with fixed order (QC-FO) are the standard quantum circuits where the order of operations is fixed and determined.}
    \label{fig:qcfo}
    \end{subfigure}
    \begin{subfigure}{1\textwidth}
    \centering
    \includegraphics[width=1\textwidth]{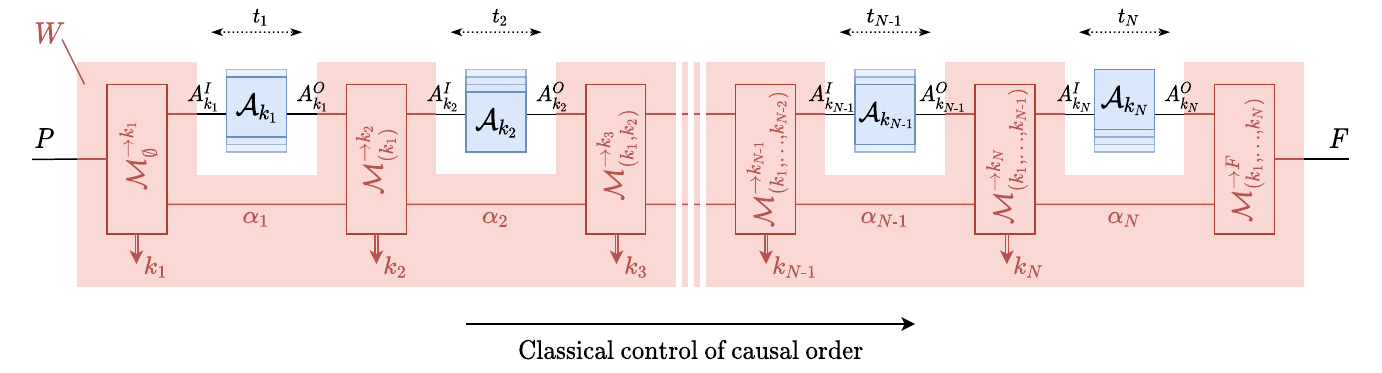}
    \caption{Quantum circuits with classical control of causal order (QC-CC) apply a measurement during each time step (to the target or to some ancilla or to both together). Each possible measurement outcome corresponds to an agent that has not acted so far (therefore, if there are $N$ agents, in the $n$-th time step there will be up to $N-n+1$ possible outcomes).}
    \label{fig:qccc}
    \end{subfigure}
    \begin{subfigure}{1\textwidth}
    \centering
    \includegraphics[width=1\textwidth]{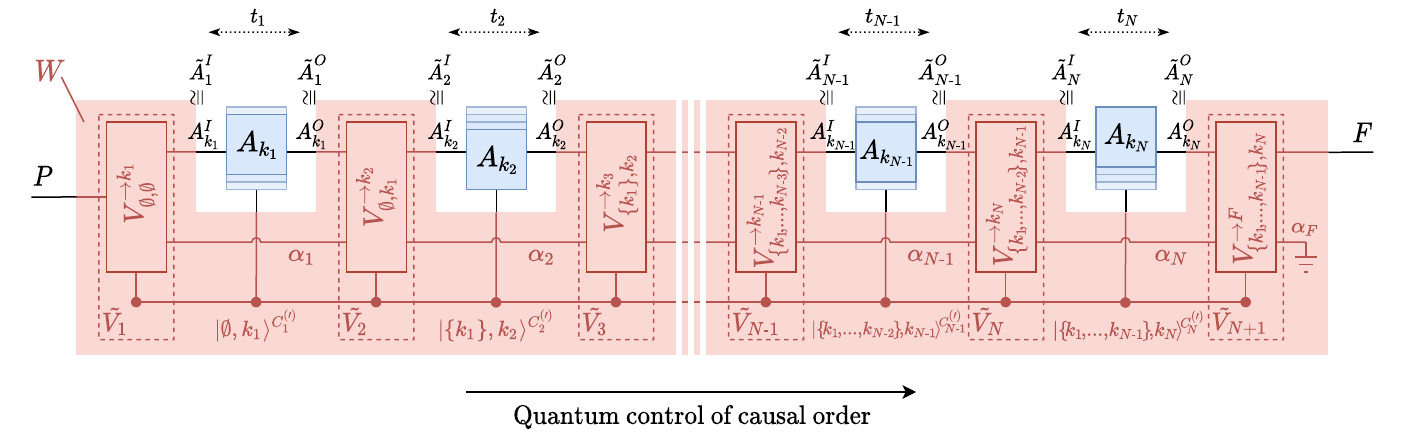}
    \caption{Instead of applying a measurement as is the case for QC-CCs, a quantum circuit with quantum control of causal order (QC-QC) sends the target system in coherent superposition to (a subset of) all agents that have not acted so far. Unlike in the case of QC-CCs, where the causal order is not predetermined but still well defined, the target and the control are thus in a superposition of different ``paths". }
    \label{fig:qcqc}
    \end{subfigure}
    \caption{The three types of classifications for objects in the QC-QC framework. The figures are taken from \cite{Wechs_2019}, where they are listed Fig. 4, Fig. 7 and Fig. 10}
\end{figure}

QC-QCs can be divided into three categories. Quantum circuits with fixed order (QC-FO) are essentially standard quantum circuits with ``slots'' where an external operation can be plugged in and the order in which operations are applied by the circuit is fixed and acyclic. These objects are also referred to as quantum combs \cite{Chiribella_2019}. Quantum circuits with classical control of causal order (QC-CC) allow for classical mixtures, including in a dynamical fashion, of definite acyclic orders. Here, the circuit applies a measurement during each time step (to the target, an ancilla or to both together) and depending  on the classical measurement outcome, sends the target system to an agent who has not yet acted so far. Thus at each time step, each possible measurement outcome corresponds to an agent that has not acted so far (therefore, if there are $N$ agents, in the $n$-th time step there will be up to $N-n+1$ possible outcomes). 
The order is thus ultimately classical, but is not predetermined and may be established dynamically during run time. Note that the applied measurement must in general depend on which agents already acted. This can be achieved by recording the agents that already acted in a control system which is carried along and updated in the circuit. QC-FOs are a subset of QC-CCs as they can be viewed as QC-CCs with a single measurement outcome in each time step.

Finally, there is general QC-QCs of which QC-CCs (and thus also QC-FOs) are a subset. The main difference is that instead of applying a measurement, the circuit sends the target in coherent superposition to the agents. This is controlled via a control system $\ket{\{k_1,...,k_{n}\}, k_{n+1}}$ where $k_1,...,k_n$ refers to the agents that acted previously (without revealing the order in which they acted) and $k_{n+1}$ refers to the agent the circuit sends the target to. We will usually write this in the more compact form $\ket{\C{K}_n, k_n}$ with $\C{K}_n = \{k_1,...,k_n\}$. Unlike in the case of QC-CCs, which have only classical uncertainty in the causal order, QC-QCs enable a quantum uncertainty in the order of the operations. 

We note that which agent receives the target at a given time step can also depend dynamically on the state of the target (in addition to depending on the state of the control system). A simple example from the physical world is given by a circuit consisting of a beam splitter: Alice sends a photon to the circuit, and depending on the polarization of Alice's photon, the beam splitter then reflects it to Bob or transmits it to Charlie. Additionally, the circuit may apply some transformation to the target before sending it to the next agent.

\section{Review of the causal box framework}\label{sec:cb}

The causal box framework \cite{Portmann_2017} models information processing protocols satisfying relativistic causality in a fixed background spacetime (such as Minkowski spacetime), where quantum messages may be exchanged in superpositions of different spacetime locations. 
The framework is, however, very different from the process matrix or QC-QC approaches as it allows for multiple rounds of information processing, is closed under arbitrary composition and does not partition protocols into local operations of agents and processes describing an inaccessible (to the agents) environment. Additionally, it explicitly models sending ``nothing" with a vacuum state $\ket{\Omega}$. The existence of such a state and the possibility of sending superpositions of ``something" and ``nothing", $\alpha \ket{\psi} + \beta \ket{\Omega}$, has physical relevance. For example, the coherently controlled application of an unknown unitary to a target system was shown to be impossible in theoretical formalisms that do not model the vacuum, but has been experimentally realised due to the physical possibility of such ``vacuum'' superpositions \cite{Friis_2014, Zhou_2011}.

\subsection{Messages and Fock spaces}\label{sec:fock}

We now review the formal aspects of the causal box framework. Causal boxes are defined on a background spacetime which is modelled as a partially ordered set $\C{T}$, capturing the light cone structure of the spacetime. However, for our purposes it will suffice to consider finite and totally ordered sets (in which case $\C{T}$ can be interpreted as a set of time stamps). For simplicity, we will therefore restrict to this case in this review as well. We refer to \cite{Portmann_2017} for the general case. A message corresponds to a Hilbert space $\C{H}^A$ together with a time stamp encoded in the sequence space $l^2(\C{T})$ of $\C{T}$ with bounded 2-norm. We can write the state of an arbitrary message as $\ket{\psi}^A\otimes  \ket{t} \in \C{H}^A \otimes l^2(\C{T})$, where the content of the message is encoded in $\ket{\psi}^A$, while $t \in \C{T}$ contains the time information of when the message is sent or received. 
We will frequently write $\ket{\psi, t}^A$ instead of $\ket{\psi}^A \otimes \ket{t}$.

For process matrices and QC-QCs, the state space of a wire is a finite-dimensional Hilbert space. As mentioned earlier, the causal box framework allows the sending of multiple messages or, in other words, there can be any number of messages on a wire. To capture this, the state space of the wire in the causal box framework is modelled as the symmetric Fock space of the single-message space 

\begin{equation}\label{eq:fockdef}
    \C{F}(\C{H}^A \otimes l^2(\C{T})) = \bigoplus_{n=0}^\infty \vee^n (\C{H}^A \otimes l^2(\C{T}))
\end{equation}

where $\vee^n (\C{H}^A \otimes l^2(\C{T}))$ is the symmetric subspace of $(\C{H}^A \otimes l^2(\C{T}))^{\otimes n}$. The one-dimensional space $(\C{H}^{A} \otimes l^2(\C{T}))^{\otimes 0}$ corresponds to the vacuum state $\ket{\Omega}$. For more details on the symmetric tensor product, see \cref{sec:fockinner}. The notation in \cref{eq:fockdef} is quite cumbersome so we will often abbreviate it by writing $\C{F}^{\C{T}}_{A} := \C{F}(\C{H}^A \otimes l^2(\C{T}))$.



\paragraph{Wire isomorphisms:} There exist two useful isomorphisms regarding the splitting of wires. For any $\C{H}^A, \C{H}^B$ and any $\C{T}' \subseteq \C{T}$, we have

\begin{gather}\label{eq:wireiso}
\begin{aligned}
    \C{F}^{\C{T}}_{C} &\cong \C{F}^{\C{T}}_{A} \otimes \C{F}^{\C{T}}_{B} \\
    \C{F}^{\C{T}}_{A} &\cong \C{F}^{\C{T}'}_{A} \otimes \C{F}^{\C{T} \backslash \C{T}'}_{A}
\end{aligned}
\end{gather}

where $\C{H}^C = \C{H}^A \oplus \C{H}^B$.

The first isomorphism implies that two wires, one carrying messages from one Hilbert space $\C{H}^B$ and the other carrying messages from another Hilbert space $\C{H}^C$, are equivalent to a single wire carrying messages from the direct sum of the two Hilbert spaces. This will allow us to formally define causal boxes with just a single input and a single output wire. 

The second isomorphism applied recursively implies that there is an equivalence between one wire carrying the messages from all times $t\in \C{T}$ and having a separate wire for each $t \in \C{T}$. Throughout this paper, we will treat messages with different time stamps are being associated with different wires, such that states of multiple messages associated with distinct time stamps can be written using the regular tensor product (rather than a symmetrised product). Furthermore, for any $\C{T}'\subseteq \C{T}$, there is a natural embedding of $\C{F}^{\C{T}'}_{A}$ in $\C{F}^{\C{T}}_{A}$ given by appending the vacuum state on $\C{T} \backslash \C{T}'$.

\begin{equation}
\label{eq:wireisostate}
   \C{F}^{\C{T}'}_{A} \cong \C{F}^{\C{T}}_{A}\otimes \ket{\Omega, \C{T} \backslash \C{T}'}. 
\end{equation}



\subsection{Definition of causal boxes}\label{sec:defcb}

We give now a simplified definition of causal boxes that captures those causal boxes defined on a finite and totally ordered $\C{T}$. 

\begin{defi}[Causal boxes \cite{Portmann_2017}]\label{def:causalbox}
A causal box is a system with an input wire $X$ and an output wire $Y$, together with a CPTP map

\begin{equation}
    \C{C}: \mathcal{L}(\C{F}^{\C{T}}_X) \rightarrow \mathcal{L}(\C{F}^{\C{T}}_Y)
\end{equation}

that fulfills for all $t \in \C{T}$

\begin{equation}\label{eq:cbreq}
    \tr_{>t} \circ \C{C} = \tr_{>t} \circ \C{C} \circ \tr_{>t} 
\end{equation}

where $\tr_{>t}$ corresponds to tracing out all messages with time stamps larger than $t$.

\end{defi}


\cref{eq:cbreq} encodes the causality condition, i.e. that we can calculate the outputs up to time $t$ from the inputs up to $t$ which is equivalent to saying that inputs after $t$ cannot influence outputs up to $t$. Note that from this definition it directly follows that a causal box restricted to some subset of timestamps less than $t$ is once again a causal box.

\subsection{Representations of causal boxes}\label{sec:repcb}

Causal boxes admit two alternative representations \cite{Portmann_2017}. One of these is the Choi representation. For infinite-dimensional Hilbert spaces, the usual Choi operator can be unbounded and an alternate definition has to be used in those cases \cite{Holevo_2011}. For the definition in this general case, see \cite{Portmann_2017}. However, we note that the space $\bigoplus_{n=0}^N \vee^n (\C{H}^A \otimes l^2 (\C{T}))$ of up to $N$ messages is finite-dimensional for all $N$. By restricting the causal box to this domain, we can thus use the normal Choi operator. This restriction will not lead to any loss of generality in our results because 
defining the causal box for every $N$ uniquely determines the causal box on the full Fock space where we have a direct sum from $N=0$ to $N=\infty$, as explained in \cref{remark:infinite}.

The other representation is the sequence representation, which is essentially a consequence of the Stinespring dilation \cite{Stinespring_1955} of a CPTP map. As causal boxes are CPTP maps, they also admit Stinespring dilations \cite{Portmann_2017}. One can use this fact to decompose a causal box into a sequence of isometries, each of which describes the behaviour of the causal box during some disjoint subset of $\C{T}$ (see \cref{fig:sequencerep}). This is called a sequence representation of the causal box. This representation can thus be viewed as a causal unraveling of the causal box. On the one hand, this is a useful way to visualise the behaviour of a causal box and on the other hand, we can also use this to construct causal boxes. A sequence of isometries $V_n$ with appropriate input and output spaces will always yield a causal box.

\begin{figure}
    \centering
    \includegraphics{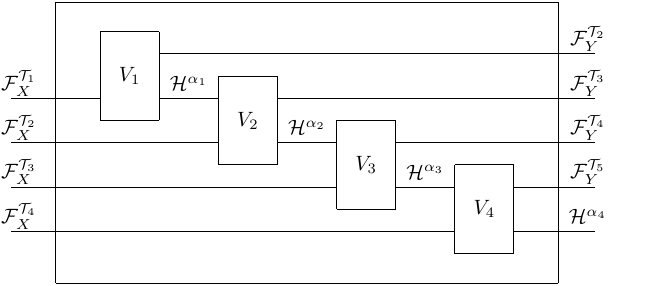}
    \caption{Graphical representation of the sequence representation of a causal box. The causal box is decomposed into a sequence of isometries $V_n$ (in this case there are four such isometries). Each isometry acts on messages during a specific time slice (named $\C{T}_n$ for $n=1,2,3,4,5$ in the figure) and outputs messages during the next time slice. A causal box can thus be understood via its action during each time slice.}
    \label{fig:sequencerep}
\end{figure}







\begin{remark}\label{remark:infinite}
Note that defining a linear map on each $n$-message subspace does not a priori mean it is well-defined on the full Fock space. This is because an infinite-dimensional Hilbert space is not simply the (finite) linear span of some orthonormal basis ${\ket{j}}_{j \in \mathbb{N}}$ but the metric completion of such a span (which contains also any elements corresponding to the limits of Cauchy sequences). A counter-example is the linear map $V\ket{j} = j\ket{j}$. We then have, for example, $V \sum_{j \in \mathbb{N}} 1/j \ket{j}= \sum_{j \in \mathbb{N}} \ket{j}$ which is divergent in its norm. We additionally need that the map is continuous. That this then suffices follows straightforwardly from the fact that the action of a continuous function $f$ on a limit $a$, $f(a)$, must be the limit of $f(a_n)$ for a sequence $\{a_n\}_{n\in\mathbb{N}}$ that converges towards $a$. In the case of causal boxes, we can always consider a purification, which is an isometry and thus continuous. Therefore, it suffices to work with maps on the $n$-message subspaces. Since Choi vectors and matrices uniquely determine the corresponding maps (once we fix a basis) the same argument justifies working with the finite-dimensional $n$-message Choi representations as well.
\end{remark}

\subsection{Composition of causal boxes}\label{sec:composition}

Causal boxes can be composed in the following ways: \emph{parallel composition} of two causal boxes is simply given by their tensor product, \emph{sequential composition} involves connecting an output of a causal box at some time to the input of a causal box at a later time, and \emph{loop composition} is an operation on a single causal box which feeds back an output of the box to an input of the same box associated with a later time. Sequential composition of two causal boxes can be expressed in terms of the other two by first taking the parallel composition of the two boxes and then performing an appropriate loop composition \cite{Portmann_2017, Vilasini_2022}. It can be shown that the set of valid causal boxes are closed under arbitrary compositions of these types. 



We give the definition of loop composition below. As for the Choi representation, we will once again only consider the finite-dimensional case. This is again sufficient to determine the behaviour on the full Fock space because of the reasoning given in \cref{remark:infinite}.

\begin{defi}[Loop composition \cite{Portmann_2017, }]\label{def:comp}
Consider a CP map $\C{M}: \mathcal{L}(\mathcal{H}^{AB})\rightarrow\mathcal{L}(\mathcal{H}^{CD})$ with input systems $A$ and $B$ and output systems $C$ and $D$ with $\C{H}^B \cong \C{H}^C$. Let $\{\ket{k}^C\}_k$ be any orthonormal basis of $\mathcal{H}^C$, and denote with $\{\ket{k}^B\}_k$ the corresponding basis of $\mathcal{H}^B$ i.e. for all $k$, $\ket{k}^C\cong\ket{k}^B$. The new system resulting from looping the output system $C$ to the input system $B$, $\C{M}^{C\hookrightarrow B}$ is given as 
\begin{equation}
    \label{eq:loopfinite}
    \C{M}^{C\hookrightarrow B} (\ket{\psi}^A \bra{\phi}^A) = \sum_{k,l} \bra{k}^C\C{M}(\ket{\psi}^A\ket{k}^B\bra{l}^B\bra{\phi}^A)\ket{l}^C.
\end{equation}
\end{defi}

The sequential composition of two CP maps $\C{M}_A: \C{L}(\C{H}^A) \rightarrow \C{L}(\C{H}^C)$ and $\C{M}_B: \C{L}(\C{H}^B) \rightarrow \C{L}(\C{H}^D)$ is then given by $(\C{M}_A \otimes \C{M}_B)^{C\hookrightarrow B}$.

\begin{remark}[Basis dependence of the loop composition] Note that the composition is basis-dependent in the same sense that the Choi isomorphism and the link product are basis-dependent. The bases we use for composition should thus be the same bases that we use to calculate Choi matrices and link products to obtain consistent results.
\end{remark}



\section{Extending QC-QCs to causal boxes}\label{sec:qcqctopb}

\subsection{Overview of the results}
\label{sec:resultsoverview}

Here, we outline the ingredients behind the main theorems of this paper, the statement of the results along with their relevant implications for causality and composability. As highlighted in the introduction, characterising the relation between QC-QCs and causal boxes is important for understanding the physicality and composability of abstract higher order quantum processes, in the context of their realisations in a background spacetime. The frameworks are rather different a priori, and it is necessary to identify the subset of protocols described by causal boxes that satisfy the set-up assumptions of the process matrix/QC-QC frameworks \cite{Vilasini_2020}. We describe these assumptions below.

\begin{assumption}[Acting once and only once]
\label{assumption: assump1}
Whenever a party in a protocol described by a QC-QC acts on a $d$-dimensional system, then in any causal box description of the protocol, the party must act on exactly one non-vacuum message, and this non-vacuum message must also be of $d$-dimensions.
\end{assumption}

\begin{assumption}[Order of local in/output events]
\label{assumption: assump2}
  Whenever a party in a protocol described by a QC-QC has non-trivial in and output spaces, then in any causal box description of the protocol, that party must receive a non-vacuum input to their lab before they send out any non-vacuum output. 
\end{assumption}

We refer to the conjunction of these two assumptions as the \emph{spatiotemporal closed labs assumption}. These assumptions and consequently the following theorem, which is a main result of this paper, will be mathematically formalised in \cref{sec:statespace}. The proof of the theorem is a consequence of \cref{prop:anotherequivalence} in \cref{sec:another}.
\begin{restatable}{theorem}{mainth}\label{mainth}
   Every protocol described by a QC-QC can be mapped to a protocol described by valid causal boxes in Minkowski spacetime, which reproduce the action of the QC-QC on a well-defined subspace and also respect the spatiotemporal closed labs assumption. 
\end{restatable}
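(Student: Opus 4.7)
The plan is to build the causal box extension explicitly via the sequence representation (\cref{sec:repcb}), so that each isometry in the sequence implements one ``round'' of the QC-QC, thereby reducing the proof to a checkable local construction plus a global causality argument. Fix a totally ordered finite set $\C{T}=\{t_0,t_1,\dots,t_{N+1}\}\subset\mathbb{R}$ of time stamps in Minkowski spacetime, with $t_0$ associated to the global past $P$, $t_{N+1}$ to the global future $F$, and $t_n$ for $n\in\{1,\dots,N\}$ associated with the $n$-th round in which exactly one of the $N$ agents acts. The external wires of the causal box are the global past wire (a single message at $t_0$ carrying $\C{H}^P$), the global future wire (a single message at $t_{N+1}$ carrying $\C{H}^F$), and, for each agent $k\in\{1,\dots,N\}$, an input wire $A_k^I$ and output wire $A_k^O$ carrying messages in $\C{F}_{A_k^I}^{\C{T}}$ and $\C{F}_{A_k^O}^{\C{T}}$ respectively. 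Using the wire isomorphisms in \cref{eq:wireiso}, these can be merged into a single input and a single output wire whenever convenient.

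Next I would define the internal isometries $V_n$ of the sequence representation using the data that define a QC-QC (cf.\ \cref{sec:circuitpm} and \cref{sec:qcqcdetails}): the target Hilbert space, the ancillary control system $\ket{\C{K}_n,k_{n+1}}$ that records the set of agents who have already acted together with the next agent to act, and the unitaries that update this control and act on the target between successive rounds. Concretely, $V_n$ takes as input (i) the state of the target-plus-control stored in the internal memory of the box after round $n-1$, and (ii) the outputs of those agents who acted before round $n$, which arrive through the external input wires at the appropriate time stamps. It applies the QC-QC's round-$n$ unitary, and produces as output (a) a non-vacuum message routed coherently to agent $k_n$'s input wire $A_{k_n}^I$ at time $t_n$, controlled by the control register, with vacuum sent to every other agent's input wire at $t_n$, and (b) the updated internal state forwarded to $V_{n+1}$. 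The final isometry $V_{N+1}$ collects the target-plus-control and the last agent's output and maps them to a message on the global future wire at $t_{N+1}$, producing the state prescribed by the QC-QC.

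Given this construction, three things remain to verify. First, that the resulting map is a causal box: because it is defined as a finite sequence of isometries with the time-slicing shown in \cref{fig:sequencerep}, the causality condition \cref{eq:cbreq} is automatic, and extension from each $n$-message subspace to the full Fock space is justified by \cref{remark:infinite}. Second, that the spatiotemporal closed labs assumption is satisfied: by construction each agent $k$ receives exactly one non-vacuum message of dimension $\dim\C{H}^{A_k^I}$ (the superposition of orders is carried by the time stamp and control, not by the number of messages), and the agent's output wire only carries non-vacuum messages at times strictly after the corresponding input, so Assumptions~1 and 2 both hold. Third, that on the relevant subspace, obtained by restricting inputs to single non-vacuum messages of the right dimension and the right input-before-output ordering, the input-output behaviour of the causal box coincides with $W(\C{M}_{A_1},\dots,\C{M}_{A_N})$ as defined in \cref{eq:supermap}; this reduces to a direct comparison of the Kraus/Choi expressions of the composed isometries $V_{N+1}\circ\cdots\circ V_1$ with the standard decomposition of the QC-QC process vector given in \cite{Wechs_2021}.

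The routine parts are steps one and two: the wire-isomorphism bookkeeping and the causality check fall out of the sequence representation. The main obstacle is the third step, namely showing faithful reproduction of the QC-QC's action on the relevant subspace while the causal box acts on the much larger Fock space. The delicate point is that the causal box must coherently route a single message through a quantum superposition of agents at a superposition of times, which in the Fock-space picture requires balancing non-vacuum ``photons'' against vacuum components on all other wires and all other time slots so that the wire isomorphisms \cref{eq:wireiso,eq:wireisostate} correctly identify the two Hilbert spaces. Making this identification precise, and then invoking \cref{prop:anotherequivalence} once the relevant subspace has been formalised in \cref{sec:statespace}, is where the real work lies; everything else is essentially translation between the two formalisms.
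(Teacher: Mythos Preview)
Your approach is essentially the paper's: build the causal box via the sequence representation, with one isometry per QC-QC round, and then verify causality, the closed-labs assumptions, and agreement with the QC-QC on the relevant subspace. You even invoke \cref{prop:anotherequivalence} at the end, which is exactly where the paper locates the proof.

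Two points where your sketch is thinner than the paper and would need filling in. First, your time set $\{t_0,\dots,t_{N+1}\}$ has only one stamp per round, so an agent's non-vacuum input and non-vacuum output in round $n$ would carry the same time stamp $t_n$; this clashes with \cref{assumption: assump2} and with the causality condition. The paper fixes this by taking $\C{T}=\{1,\dots,2N+2\}$, splitting each round into an input time $2n$ and an output time $2n+1$.

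Second, and more substantively, your appeal to \cref{remark:infinite} for ``extension from each $n$-message subspace to the full Fock space'' skips the real issue: the QC-QC data only tell you how $\bar V_{n+1}$ acts on \emph{one}-message states whose control register matches the wire carrying the non-vacuum message. You have not said what $\bar V_{n+1}$ does on (i) $m$-message inputs with $m\neq 1$, or (ii) one-message inputs where the control says ``agent $k_n$'' but the non-vacuum state sits on some other agent's wire. The paper handles this explicitly by writing $\bar V_{n+1}=\bigoplus_{m\geq 0}\bar V_{n+1}^m$ and, on the one-message sector, further decomposing into a ``correct'' subspace $\C{H}^{O_n}_{\text{corr}}$ (where the QC-QC isometry $V_{n+1}$ is applied verbatim) and an orthogonal ``mismatched'' subspace $\C{H}^{O_n}_{\text{mis}}$; on the mismatched piece and on all $m\neq 1$ sectors one takes an \emph{arbitrary} isometry with image orthogonal to the correct sector's image. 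That orthogonality is what makes \cref{prop:anotheriso} go through and is also what makes the mismatched terms drop out in the proof of \cref{prop:anotherequivalence}. You correctly flag this as the delicate step, but the resolution---arbitrary isometric completion on the complement---is the missing idea in your outline.
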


As a consequence of this theorem, the subset of causal boxes in the image of our mapping have certain natural and operationally motivated properties which we consider necessary (making no claims about sufficiency)\footnote{Satisfying these assumptions, while necessary for faithfully realising an indefinite causal order process in spacetime, is not sufficient for regarding the realisation as a genuinely indefinite causal structure. Even when these properties are satisfied, such as in the experimental realisations of the quantum switch, there has been a long-standing debate about whether these implement or simulate indefinite causality. In fact, the results of \cite{Vilasini_2022} and \cref{sec:finegraining} show that such spacetime realisations will always, as a consequence of relativistic causality, admit an explanation in terms of a definite acyclic causal structure, lending support to the side of the debate that they are simulations of ICO. These conditions ensure that regardless of whether these realisations are regarded as simulations or implementations, they are faithful to the set up assumptions respected by the original abstract process.} for regarding the causal box as a faithful spatiotemporal realisation of the QC-QC. Moreover, the causal boxes in the image of our mapping also encode the spatiotemporal degrees of freedom in a minimal manner, and satisfy the following property that simplify their mathematical representation. We map $N$-partite QC-QCs to causal boxes where each party can act at $N$ distinct pairs of in/output times $\{t^I_1,t^O_1,...,t^I_N,t^O_N\}$ such that every non-vacuum input at $t_i^I$ yields a non-vacuum output at the corresponding output time $t_i^O>t_i^I$.

\paragraph{Fine-grained causal structure of QC-QCs}

When considering the realisation of indefinite causal order processes in a fixed spacetime, one is faced with a fundamental question: how can an indefinite information-theoretic causal structure be consistent with a definite spacetime causal structure? This question has been resolved in \cite{Vilasini_2022} where it is shown that a realisation of any (possibly indefinite causal order) process satisfying relativistic causality in a fixed spacetime will ultimately admit a fine-grained description in terms of a fixed and acyclic causal order process which is compatible with the light cone structure of the spacetime. Intuitively, the picture painted by this result is similar to the use of cyclic information-theoretic causal structures in classical statistics to describe physical situations with feedback (e.g., the demand and price of a commodity influence each other), but one is aware that this is a coarse-grained description of an acyclic fine-grained causal structure (where demand at a given time influences price at a later time and vice-versa). In this previous work, the set of processes that can indeed be realised in a spacetime in this manner, was not characterised. Here we link the causal boxes in the image of the mapping provided in \cref{mainth} to the definition of spacetime realisation of a process proposed in \cite{Vilasini_2022}, which allows us to regard the causal box description as a fine-graining of the QC-QC description, and consequently as a fixed spacetime realisation of the QC-QC in the formal sense defined in \cite{Vilasini_2022}.

\begin{restatable}{theorem}{theoremfinegraining}\label{theoremfinegraining}
 Every QC-QC can be mapped to a causal box which 
 reproduces the action of the QC-QC on a subspace, satisfies the spatiotemporal closed labs assumption, is a fine-graining of the QC-QC, where the fine-grained causal order is definite and acyclic and consistent with relativistic causality principles in spacetime.
\end{restatable}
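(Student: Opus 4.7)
The plan is to build on \cref{mainth} and verify that the causal box produced by the mapping satisfies the fine-graining definition introduced in \cite{Vilasini_2022}. Since \cref{mainth} already establishes the existence of a causal box which reproduces the QC-QC's action on the relevant subspace and respects \cref{assumption: assump1,assumption: assump2}, only two further claims need to be established to obtain \cref{theoremfinegraining}: that this causal box qualifies as a fine-graining of the QC-QC, and that the induced fine-grained causal structure is definite, acyclic, and compatible with relativistic causality.

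First, I would explicitly identify the fine-grained spacetime nodes. In the causal box produced by the mapping, each of the $N$ parties $A_k$ is associated with $N$ distinct pairs of in/output times $(t^I_n, t^O_n)$, and by \cref{assumption: assump1} exactly one such slot carries a non-vacuum message while the remaining $N-1$ slots effectively perform a trivial identity on the vacuum. I would treat each (party, time-slot) pair as a separate fine-grained node, giving $N^2$ nodes with which to label the spacetime causal structure, and argue that the single QC-QC operation slot for $A_k$ is recovered by coarse-graining over which time slot carries the non-vacuum. To verify that this is a fine-graining in the sense of \cite{Vilasini_2022}, I would check that marginalising the fine-grained process, together with the control system tracking $\ket{\C{K}_n, k_{n+1}}$, reproduces the reduced description \eqref{eq:supermap} of the QC-QC; this is where the reproduction guarantee of \cref{mainth} on the relevant subspace does the heavy lifting.

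Second, for the acyclicity and definiteness of the fine-grained order, I would invoke the sequence representation of \cref{sec:repcb}, which decomposes the causal box into a sequence of isometries $V_n$ acting on successive, disjoint time slices. The total ordering $t^I_1 < t^O_1 < \dots < t^I_N < t^O_N$ in $\C{T}$ then directly induces a definite acyclic order on the fine-grained nodes. Compatibility with relativistic causality principles in a fixed background spacetime is automatic from \cref{def:causalbox} and condition \eqref{eq:cbreq}: outputs up to time $t$ depend only on inputs up to time $t$, so the induced order on the fine-grained nodes embeds consistently into any light-cone structure in which the time stamps of $\C{T}$ are placed, e.g.\ Minkowski spacetime.

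The main obstacle will be formalising the coarse-graining relation precisely so that it matches the fine-graining definition of \cite{Vilasini_2022}. In particular, I must argue that the $N-1$ ``trivial'' vacuum slots per party do not introduce spurious operational degrees of freedom when coarse-grained back, and that tracing out the control and time registers yields exactly the QC-QC supermap action on each party's input/output wires. This will likely require combining the link-product-level identity \eqref{eq:pmprob_pure} with the vacuum isomorphism \eqref{eq:wireisostate} to identify the fine-grained marginal with the process vector $\ket{w}$ of the QC-QC; the rest of the argument (acyclicity and relativistic causality) is essentially a corollary of the causal box structure itself.
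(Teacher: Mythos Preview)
Your proposal has the right high-level shape (build on \cref{mainth}, then verify the fine-graining conditions, then read off acyclicity from the sequence representation), and your treatment of acyclicity and relativistic causality via the sequence representation and \cref{eq:cbreq} matches the paper's Step~5. However, there are two genuine gaps in the part where you argue that the causal box is a fine-graining.

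First, the fine-graining definition of \cite{Vilasini_2022} has \emph{two} independent requirements: an encoding--decoding scheme satisfying $\mathcal{W}=\mathrm{Dec}\circ\mathcal{C}\circ\mathrm{Enc}$, and preservation of signalling relations (if $I$ signals to $O$ in $\mathcal{W}$ then the corresponding fine-grained systems signal in $\mathcal{C}$). You address only the first. The paper devotes a separate step to signalling preservation, and it is not automatic: one must exhibit, for any local operation $\mathcal{N}$ on a QC-QC input that witnesses signalling, a corresponding operation $\bar{\mathcal{N}}$ on the causal box inputs, and then argue that the decoder cannot make indistinguishable outputs distinguishable. Without this, you have not shown that $\mathcal{C}$ is a fine-graining.

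Second, your mechanism for recovering the QC-QC from the causal box---``marginalising'' and ``tracing out the control and time registers''---does not match what the definition demands and would not obviously work. The paper's decoder is not a partial trace: it is a projection onto the ``correct'' one-message subspace (discarding the orthogonal branches where control mismatches occurred) followed by a non-trace-preserving rescaling map that removes the order-dependent amplitudes introduced by the encoder. Correspondingly, the encoder is not an embedding into a single time slot but a genuine superposition over all orderings $(k_1,\dots,k_N)$ with carefully chosen coefficients $\lambda^{(k_1,\dots,k_N)}_{i_1,\dots,i_N}$. A simple trace over time stamps would mix branches with different internal operations $V^{\rightarrow k_{n+1}}_{\mathcal{K}_{n-1},k_n}$ and would not reproduce the link-product form of $\ket{w}$. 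You anticipate difficulty here, but the tools you name (the vacuum isomorphism and \cref{eq:pmprob_pure}) are not sufficient; the construction is more delicate. The paper also extends the encoder--decoder and signalling arguments from the bare map $\mathcal{W}$ to all subnetworks of the full network (QC-QC composed with local operations), which your plan does not address.
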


We give an outline of the proof of this theorem in \cref{sec:fine}, while the detailed proof can be found in the appendix, \cref{sec:fineproof}. 

Our results imply that the set of processes realisable in a spacetime without violating relativistic causality, must be at least as large as QC-QCs. This also establishes that for each QC-QC, the indefinite causal structure (if present) can be explained in terms of a fine-grained acyclic causal structure (that of the causal box in the image of our mapping). Showing that a causal box is a fine-graining of a QC-QC also provides an explicit mapping back from the causal box (fine-grained picture) to the QC-QC (coarse-grained picture) through an encoding-decoding scheme, which is analogous to the mapping from physical layer operations and corresponding logical layer description of a quantum computation. 

\paragraph{Insights on composability of QC-QCs}  The notion of composition as well as the closedness of the set of physical operations under composition is absolutely central to physical protocols and experiments (and is respected by causal boxes). However, there are no-go theorems indicating apparent limitations for the composability of process matrices, even those which have a fixed causal order (i.e which are QC-FOs). Our mapping from QC-QCs to causal boxes sheds new light on how process matrices, despite this apparent composability issue, can consistently recover our standard intuitions about composability when physically realised in a spacetime. This connects to a previous resolution of the issue proposed within a purely informational and abstract approach \cite{Jia_2018, Kissinger_2019}, while offering new dimensions to the solution coming from relativistic causality, the use of Fock spaces, as well as the interplay between the sets of causal boxes that do or do not respect the set up assumptions of the process matrix and QC-QC frameworks. We discuss this in \cref{sec:composability}.

\subsection{When does a causal box model a QC-QC?}\label{sec:statespace}

As a first step towards our goal of mapping QC-QCs to causal boxes, we need to ask when can we regard a causal box as reproducing the behaviour of a QC-QC. As we have already pointed out, the two frameworks differ in a number of points. In particular, causal boxes account for spacetime information and use Fock spaces as their state spaces, neither of which are explicitly modelled in the QC-QC framework. However, when we map a QC-QC to a causal box we want the two to behave in ``the same way" at least ``in situations of interest". The aim of this section is to formalise this statement. More specifically, we want a mapping from local operations in the QC-QC framework to the causal box framework and of the QC-QC supermap to the causal box supermap such that when composing the local operations with the supermap in the respective frameworks yields the same transformation.


In the following, let us assume that $\C{T} = \{1,2,...,2N+2\}$ for some $N \in \mathbb{N}$ as this will simplify our discussion. Additionally, let us split $\C{T}$ into the set of input times $\C{T}^I = \{2, 4, ..., 2N+2\}$ and the set of output times $\C{T}^O = \{1, 3, ..., 2N+1\}$\footnote{These assumptions and others we will make in this section may seem too restrictive at a glance. One may think that they come with some loss of generality. However, in the end, for each QC-QC, we only wish to find one corresponding causal box. So, in principle, we can make arbitrary restrictions to the kind of causal boxes we consider as long as we can still achieve this goal. However, naturally, these restrictions make it impossible for us to find \textit{all} causal box extensions of a given QC-QC. One could also say that the QC-QC describes a general and abstract information-theoretic protocol whereas the causal box describes a specific implementation in spacetime of an experiment, of which there may be many.}.

We now model \cref{assumption: assump1}. The QC-QC picture involves agents $A_k$ with input and output spaces $\C{H}^{A^I_k}$ and $\C{H}^{A^O_k}$. It models superpositions of orders by identifying these spaces with the generic spaces $\C{H}^{\tilde{A}^I_n}$ and $\C{H}^{\tilde{A}^O_n}$ which are labelled by the index $n$. The label $n$ of the generic space assigned to an agent $A_k$ specifies the order in which the agent acts, and is determined (possibly coherently) by a control system (cf. \cref{fig:qcqc}). Thus, it is natural to map these generic spaces to the state space in the causal box picture at a specific time, i.e. $n \leftrightarrow t=2n$ or $n \leftrightarrow t=2n+1$ depending on whether we are looking at outputs or inputs.

\begin{figure}[t!]
    \centering
    \begin{subfigure}{1\textwidth}
    \centering
    \includegraphics[]{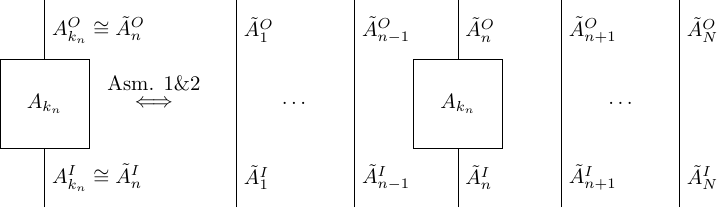}
    \caption{In the QC-QC framework each agent is initially associated with a single input and output wire in between which they apply their operation (left). The output spaces are then identified with generic spaces, which specify the order in which the agent acts, and we can think of there being one generic space for each time step, yielding the picture to the right. Superpositions of orders are modelled by assigning generic spaces to agents in a coherently controlled manner. In a given branch of the superposition, the agent acts during one time step (specified by the label $n$ of the generic space assigned to them) and is inactive during all other times.}
    \label{fig:qcqcstatespaces}
    \end{subfigure}
    \begin{subfigure}{1\textwidth}
    \centering
    \includegraphics[]{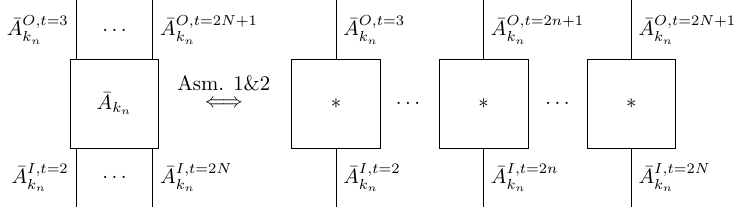}
    \caption{In the causal box framework the agent can be viewed as having an input and output wire for each time step (left) and acting during all time steps which yields the picture to the right. Superpositions of orders are achieved through superpositions of vacuum and non-vacuum states, in each branch of the superposition, the agent acts on a non-vacuum state at one time and the vacuum state at all other times (as ensured by Assumptions~\ref{assumption: assump1} and \ref{assumption: assump2}). On a non-vacuum input, the original QC-QC operation $A_{k_n}$ is applied while a vacuum state input is mapped to a vacuum output at the next time step. Thus, under these assumptions, at each time step, the agent effectively applies the same operation $*$, which stands for $A_{k_n}\otimes \ket{t+1}\bra{t} + \proj{\Omega,t+1}{\Omega, t}$. This is precisely a single tensor factor of \cref{eq:local_pb}, with $A_{k_n}\otimes \ket{t+1}\bra{t}$ taking the place of $\bar{A}^{x,t}$ for all times $t$, capturing the time-independence of the local operations as in the QC-QC picture.}
    \label{fig:pbstatespaces}
    \end{subfigure}
    \caption{Local operations in the QC-QC vs causal boxes pictures, and different ways to capture superpositions of orders. }

\end{figure}

\begin{figure}[t!]
    \centering
    \begin{subfigure}{1\textwidth}
    \centering
    \includegraphics[]{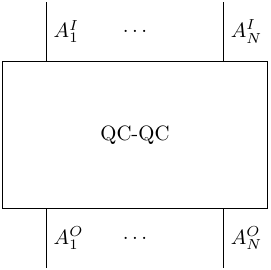}
    \caption{The QC-QC is a supermap acting on the channels corresponding to the local operations of the agents. It can be equivalently represented as a channel in itself, a simplified picture of the statespace of a QC-QC is shown here (in its channel representation) where it is a map from the agents' outputs to their inputs. There is a single input/output wire for each agent.}
    \label{fig:qcqcspace}
    \end{subfigure}
    \begin{subfigure}{1\textwidth}
    \centering
    \includegraphics[]{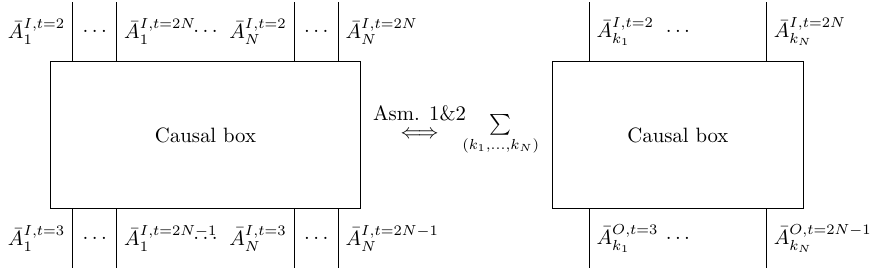}
    \caption{In the causal box framework, the causal box modelling the supermap can be seen as having wires not just for each agent but also for each time stamp. Once we impose that agents act once and only once (\cref{assumption: assump1}) and that agents only send a non-vacuum message after receiving a non-vacuum message (\cref{assumption: assump2}), we obtain the picture on the right, where we can view the causal box as having only a single wire for each agent, which is the wire along which they receive a non-vacuum state in a given ``branch'' of the superposition (hence the sum on the right).}
    \label{fig:cbspace}
    \end{subfigure}
    \caption{The supermap (represented as a channel) in the QC-QC and causal boxes pictures, which would act on the local operations of 
\cref{fig:qcqcstatespaces,fig:pbstatespaces} respectively through composition of systems with the same labels.}
\end{figure}

\Cref{fig:qcqcstatespaces,fig:pbstatespaces} illustrate the correspondence. The input space $\C{H}^{A^I_k}$ is isomorphic to all $N$ generic input spaces $\C{H}^{\tilde{A}^I_n}$, but in a given branch of the temporal superposition, the QC-QC framework only identifies it with one of them. This is because of the assumption that each agent only acts once during a run of the experiment modelled by the QC-QC. The same holds for the output spaces. On the other hand, we model the agent $\bar{A}_k$ (where from here on out we will put a bar over objects in the causal box framework to distinguish them from the corresponding objects in the QC-QC framework) in the causal box picture as another causal box with an input wire for each $t \in \C{T}^I$ and an output wire for each $t \in \C{T}^O$. \cref{assumption: assump1} then corresponds to imposing that in a given branch of the temporal superposition, only one of these wires will contain a non-vacuum state of the dimension given by the corresponding in/output system of the QC-QC\footnote{To be clear, this assumption is not satisfied when composing the agent with arbitrary causal boxes. This is thus an imposition on the whole set-up, and not just the agent alone. For example, consider the causal box with a single output wire and trivial input wire that outputs a qubit at $t=1$ and another one at $t=2$. If we compose this causal box with an agent, the agent receives multiple messages.}

\begin{gather}
\begin{aligned}
    \C{H}^{\tilde{A}^I_n} &\leftrightarrow \C{H}^{\bar{A}^I_k} \otimes \ket{t=2n} \cong \C{H}^{\bar{A}^{I, t=2n}_k}\\
    \C{H}^{\tilde{A}^O_n} &\leftrightarrow \C{H}^{\bar{A}^O_k} \otimes \ket{t=2n+1} \cong \C{H}^{\bar{A}^{O, t=2n+1}_k},
\end{aligned}
\end{gather}
where $\C{H}^{\tilde{A}^I_n}$ and $\C{H}^{\bar{A}^I_k}$ are isomorphic and similar for the output spaces, and there is an equivalence between writing timestamps in kets (middle) and writing them as system labels (right).
On the level of states, we can then make the following identification

\begin{gather}
\begin{aligned}
    \ket{\psi}^{\tilde{A}^I_n} \otimes \ket{\C{K}_{n-1}, k_n} &\leftrightarrow \ket{\psi, t=2n}^{\bar{A}^I_{k_n}} \ket{\Omega, \C{T}^I \backslash 2n}^{\bar{A}^I_{k_n}} \ket{\C{K}_{n-1}, k_n}^{C_n}  \\
    \ket{\psi}^{\tilde{A}^O_{n}} \otimes \ket{\C{K}_{n-1}, k_n} &\leftrightarrow \ket{\psi, t=2n+1}^{\bar{A}^O_{k_n}} \ket{\Omega, \C{T}^O \backslash 2n+1}^{\bar{A}^O_{k_n}} \ket{\C{K}_{n-1}, k_n}^{C_n}
\end{aligned}
\end{gather}
where we now also added the control system which allows us to make the previous correspondence one-to-one. 

Given these correspondences, let us now consider how the agents apply their local operations in each framework. In terms of these time stamps, in the QC-QC framework, the agent $A_{k_n}$ applies their local operation on an input system only at the time step $t=2n$ (and outputting the result at $t=2n+1$), associated with the generic space labelled by $n$, which can be determined coherently by a control. The agent remains fully inactive during all other time steps, as depicted in \cref{fig:qcqcstatespaces}. In the causal box picture on the other hand, each agent's operation acts between every pair of input/output time stamps but it may act on a vacuum or non-vacuum state depending coherently on a control system. This gives us the picture in \cref{fig:pbstatespaces} for causal boxes.

\cref{assumption: assump2} is then modelled by imposing that the agents' operations leave vacuum states invariant. To formalise this, we consider the Kraus operators associated with the local operations. Let $A_i^x: \C{H}^{A^I_i} \rightarrow \C{H}^{A^O_i}$ denote the Kraus operator associated with an outcome $x$ obtained by the $i$-th agent in the QC-QC picture, where we take $A^I_i$ and $A^O_i$ to be $d$-dimensional spaces.\footnote{Here, $d$ is arbitrary and if the two spaces are not of the same dimension the smaller one can be trivially enlarged so that the dimensions match, as is also done in the QC-QC framework \cite{Wechs_2021}.} As noted above, in the causal box picture the corresponding operation acts at each time step, and we consider the operation of the agent associated with obtaining the outcome $x$ when acting on a $d$-dimensional non-vacuum input at time $t$, $\bar{A}_i^{x,t}: \C{H}^{\bar{A}^I_i}\otimes \ket{t} \rightarrow \C{H}^{\bar{A}^O_i}\otimes \ket{t+1}$. As the local operation $A_i^x$ in the QC-QC picture is defined independently of the order in which it is applied by the QC-QC, it is natural to replicate this in the causal box picture by requiring $\bar{A}_i^{x,t}:=A_i^x \otimes \proj{t+1}{t}$ for all $t\in \C{T}^I$. 

For vacuum states, we then need to impose that $\ket{\Omega,t}$ maps to $\ket{\Omega,t+1}$ under the local operation in the causal box picture. This ensures that a non-vacuum output at time $t+1$ must be preceded by a non-vacuum input at $t$, thereby imposing the local order condition of \cref{assumption: assump2}. If the agent obtains the outcome $x$ by measuring a non-vacuum message at time $t$, then \cref{assumption: assump1} and \cref{assumption: assump2} impose that at all other times, they receive and send vacuum states. Hence, we have the map

\begin{equation}\label{eq:kraustx}
    \bar{A}^{x,t}_i \otimes \proj{\Omega, (t+1)^c}{\Omega, t^c} 
\end{equation}
where $\proj{\Omega, (t+1)^c}{\Omega, t^c} = \bigotimes_{t' \neq t} \proj{\Omega, t'+1}{\Omega, t'}$, and $\bar{A}^{x,t}_i \ket{\Omega, t} = 0$. 

Generally, the agent may receive a non-vacuum state at any time $t\in \C{T}_i^I$ (possibly in a superposition of different arrival times). Thus, the operation associated with obtaining $x$ (at any $t$) is the sum over all $t \in \C{T}^I$ of \cref{eq:kraustx},

\begin{equation}\label{eq:notime}
   \bar{A}^{x}_i\coloneqq  \sum_{t \in \C{T}^I_i} \bar{A}^{x,t}_i \otimes \proj{\Omega, (t+1)^c}{\Omega, t^c}.
\end{equation}

This is a complete set of Kraus operators on the one-message subspace whenever the operators $\bar{A}^{x,t}_i$ form a complete set for each time $t$, and this is the case for us as we construct these operators from the QC-QC operators $A^x_i$ for each $t$. We can explicitly check that $\sum_x  (\bar{A}^{x}_i)^{\dagger}  \bar{A}^{x}_i$ is equal to the identity on the one message space,

\begin{equation}
\sum_x    \Bigg(\sum_{t \in \C{T}^I_i} (\bar{A}^{x,t}_i)^\dagger \otimes \proj{\Omega, (t+1)^c}{\Omega, t^c}\Bigg)\Bigg(\sum_{t' \in \C{T}^I_i} \bar{A}^{x,t'}_i \otimes \proj{\Omega, (t'+1)^c}{\Omega, t'^c}\Bigg)
\end{equation}
Then denoting the identity operator on the $m$-message space at time $t$, up to shifting $t$ to $t+1$, as $\mathbb{1}_m^t$, we see that $\proj{\Omega, (t+1)}{\Omega, t}=\mathbb{1}_0^t$. We will use the natural short form $\proj{\Omega, (t+1)^c}{\Omega, t^c}=\mathbb{1}_0^{t^c}=\bigotimes_{t'\neq t}\mathbb{1}_0^{t}$. Then the above simplifies as follows, when we also note that $\bar{A}^{x,t}\ket{\Omega,t}=0$ and $\inprod{t}{t'}=\delta_{t,t'}$,

\begin{equation}
\sum_{t \in \C{T}^I_i}   \sum_x  (\bar{A}^{x,t}_i)^\dagger \bar{A}^{x,t}_i \otimes \mathbb{1}_0^{t^c}=\sum_{t \in \C{T}^I_i}  \mathbb{1}_1^t \otimes \mathbb{1}_0^{t^c},
\end{equation}
which is precisely the identity operator on the overall one-message space (over all times). Here we used the normalisation of the one-message Kraus operators $\bar{A}^{x,t}$ at each time.

Further, \cref{eq:notime} preserves the coherence of temporal superpositions, in particular, if we have $A^x_i\ket{x}=\ket{x}$ for the original QC-QC operators, then an eigenstate $\alpha\ket{x,t_1}+\beta\ket{x,t_2}$ of an outcome $x$ arriving at a superposition of distinct times $t_1$ and $t_2$ (for any non-zero amplitudes $\alpha$ and $\beta$) will remain unaltered by Kraus operators constructed above for the causal box picture. 

We can also obtain an alternative representation of the local map in the causal box picture, where we do not start by assuming \cref{assumption: assump1}, but when we do, we recover the above operators. The map $\bar{A}_i^{x,t}$ acts on $d$-dimensional non-vacuum messages arriving at time $t$. We can extend this to a map on the $d+1$-dimensional space, with trivial action on vacuum (as required by \cref{assumption: assump2}), by considering $\bar{A}_i^{x,t}+\ket{\Omega,t+1}\bra{\Omega,t}$. As this map acts at each time step, we have the following overall map in the causal box picture,

\begin{equation}\label{eq:local_pb}
    \bigotimes_{t \in \C{T}^I_i} (\bar{A}^{x,t}_i + \proj{\Omega,t+1}{\Omega, t}).
\end{equation}

We can see that the maps defined in \cref{eq:notime} and \cref{eq:local_pb} are equivalent on the one-message state space of each agent (i.e. when imposing \cref{assumption: assump1}) by direct calculation,

\begin{gather}
\begin{aligned}
    \Bigg(\sum_{t \in \C{T}^I_i}& \bar{A}^{x,t}_i \otimes \proj{\Omega, (t+1)^c}{\Omega, t^c}\Bigg) \ket{\psi, t_i}^{\bar{A}^I_i} \ket{\Omega, t_i^c}^{\bar{A}^I_i}  \\
    =& \bar{A}^{x,t_i}_i \ket{\psi, t_i}^{A^I_i} \otimes \proj{\Omega, (t_i+1)^c}{\Omega, t_i^c} \ket{\Omega, t_i^c}^{\bar{A}^I_i} \\
    =& (\bar{A}^{x,t_i}_i + \proj{\Omega, t_i+1}{\Omega, t_i}) \ket{\psi, t_i}^{\bar{A}^I_i} (\bar{A}^{t_i^c, x}_i + \proj{\Omega, (t_i+1)^c}{\Omega, t_i^c}) \ket{\Omega, t_i^c}^{\bar{A}^I_i} \\
    =& \bigotimes_{t\in \C{T}^I_i}  (\bar{A}^{x,t}_i + \proj{\Omega, t+1}{\Omega, t}) \ket{\psi, t_i}^{\bar{A}^I_i} \ket{\Omega, t_i^c}^{\bar{A}^I_i}
\end{aligned}
\end{gather}
where we used that $\braket{\psi|\Omega} = 0$ and that $\bar{A}^{x,t}_i \ket{\Omega, t} = 0$ for all $t \in \C{T}^I_i$. By linearity of the two maps, it is clear that this equivalence extends to superpositions and mixtures of states living in the one-message space.

Finally, we note that it suffices to consider agents with a single non-vacuum outcome and thus a single Kraus operator $A_i$ (or $\bar{A}_i$ depending on the framework). Any statement that holds for such agents also holds for the general case via linearity. Based on the above, we define an extension of a local operation in the QC-QC framework to one in the causal box framework as follows.

\begin{defi}[Extension of local operations]\label{def:local_equivalence}
Let $A_i$ be a local agent in the QC-QC framework who applies a local operation $A_i: \C{H}^{A^I_i} \rightarrow \C{H}^{A^O_i}$. We define the extension of this local agent $\bar{A}_i$ in the causal box framework as a causal box with input/output Fock space $\C{F}(\C{H}^{\bar{A}_i^{I/O}} \otimes l^2(\C{T}_i^{I/O}))$. We then call the local operation $\bar{A}_i$ in the causal box framework the extension of the local operation $A_i$ in the QC-QC framework if $\bar{A}_i = \bigotimes_{t \in \C{T}^I_i} (A_i \otimes \proj{t+1}{t} + \proj{\Omega, t+1}{\Omega, t})$ on zero and one message states where the action of $A_i$ on $\C{H}^{\bar{A}^I_i}$ is defined via the trivial isomorphism.
\end{defi}

A slightly modified version of this definition could also be applied to the global past. We restrict the possible output times to $t=1$ so that the global past is indeed in the past of all other agents. Since it has a trivial input, i.e. corresponds to a state $\ket{\psi}^P$, the corresponding version in the causal box picture is then simply the state $\ket{\psi}^P \otimes \ket{t=1}$.

\begin{remark}[A full set of Kraus operators]\label{rem:fullkraus}
As previously mentioned, the Kraus operators \cref{eq:kraustx} are only a complete set (i.e., normalised) on the one-message subspace. As this is the case relevant for modelling QC-QCs, it is not necessary to define the action of the agents on multi-message states. However, a rather trivial extension to the full Fock space can be achieved by adding (1) a single additional Kraus operator $A^{m>1}_i$ (for each agent $A_i$) which is the projection onto the space $\bigoplus_{m>1} \vee^m (\C{H}^{A_i^I} \otimes l^2(\C{T}))$, i.e. the space of states consisting of more than one message (up to relabelling the input system $A_i^I$ to output system $A_i^O$ and shifting the time-stamp $t$ to $t+1$), and (2) the all-time vacuum projector $A^{m=0}_i=\bigotimes_{t\in \C{T}_i^I}\ket{\Omega,t+1}\bra{\Omega, t}$ for the zero-message space. $A^{m>1}_i$ and $A^{m=0}_i$ act as the identity on the $m>1$ and $m=0$ message subspaces respectively, and together with the complete set we have constructed above for $m=1$, these yield a complete set of Kraus operators for the full space. 
\end{remark}

\begin{remark}[Vacuum extensions]
We note that our definition of Kraus operators can be viewed as a vacuum extension \cite{Chiribella_2019}. In the framework of \cite{Chiribella_2019}, the vacuum extension of a channel $A^x$ defined on a non-vacuum space, takes the form $\bar{A}^x = A^x \oplus \gamma^x \proj{\Omega}{\Omega}$ with $\sum_{x} |\gamma^x|^2 = 1$. \Cref{eq:kraustx} (or rather the operators $\bar{A}^{x,t}_i$ which are related to \cref{eq:kraustx} via the vacuum isomorphism \cref{eq:wireisostate}) together with vaccuum projector $\bar{A}^{\Omega,t}_i:=\ket{\Omega,t+1}\bra{\Omega,t}$ thus yield a vacuum extension, with $\gamma^x=0$ for all $x \neq \Omega$ and $\gamma^{\Omega}=1$. Physically, this corresponds to the idea that if agents were to implement a measurement with Kraus operators of this form, they can always accurately distinguish whether there are messages on the wire ($x\neq \Omega$) or not ($x=\Omega$). Note however that when this measurement is not performed, any initial superposition in the temporal order in which non-vacuum states arrive to a party is preserved coherently (as explained before). Moreover, other choices of Kraus operators in the vacuum-extended picture are possible which will correspond to different causal boxes. Indeed, \cite{Chiribella_2019} provides a different vacuum extension of the quantum SWITCH, which can also be considered a causal box. 
Furthermore, while \cref{eq:local_pb} may look like a vacuum extension (or a tensor product of vacuum extensions), it is only an effective version of \cref{eq:kraustx} on the one-message subspace. The normalization condition $\sum_{x} |\gamma^x|^2 = 1$ is not fulfilled as $\gamma^x = 1$ for all $x$. Nevertheless, these operators are normalised on the one-message subspace which follows from the fact that they are equivalent to the original definition on this subspace. Note however that the equivalence breaks when we include states from the zero-message (vacuum at all times) subspace or $n>1$ message subspace.

\end{remark}

We can now formulate what it means for a causal box to be an extension of a QC-QC. This will be easier to do if we first show that sequential composition in the causal box framework is the same as the link product. 

\begin{restatable}[Composition is equivalent to the link product]{lemma}{compislink}
\label{lemma:compislink}
Let $\C{M}_A: \C{L}(\C{H}^A) \rightarrow \C{L}(\C{H}^C)$ and $\C{M}_B: \C{L}(\C{H}^B) \rightarrow \C{L}(\C{H}^D)$ be two CP maps and denote their parallel composition as $\C{M} = \C{M}_A \otimes \C{M}_B$. Then, the Choi matrix of their sequential composition $\C{M}^{C \hookrightarrow B} = (\C{M}_A \otimes \C{M}_B)^{C \hookrightarrow B}$ can be expressed with the link product as

\begin{equation}
    M^{C \hookrightarrow B} = M_A * M_B
\end{equation}
if one takes $\C{H}^B = \C{H}^C$ for the purposes of the link product.
\end{restatable}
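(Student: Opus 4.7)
The plan is to compute the Choi matrix of the loop-composed CP map $\C{M}^{C \hookrightarrow B}$ directly from \cref{def:comp} and recognise the resulting expression as the link product $M_A * M_B$ recalled in \cref{sec:link}. The three ingredients I expect to use are the Choi--Jamio\l{}kowski isomorphism relating CP maps and Choi matrices, the tensor-product structure of the parallel composition $\C{M}=\C{M}_A\otimes\C{M}_B$, and the basis-dependent identification $\ket{k}^C\leftrightarrow\ket{k}^B$ that is built into both the loop composition of \cref{def:comp} and the definition of the link product.

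Concretely, I would first expand $M^{C\hookrightarrow B}$ in the standard Choi form $M^{C\hookrightarrow B}=\sum_{i,j}\ket{i}\!\bra{j}^A\otimes\C{M}^{C\hookrightarrow B}(\ket{i}\!\bra{j}^A)^D$ and substitute \cref{eq:loopfinite} with $\C{M}=\C{M}_A\otimes\C{M}_B$; the tensor-product structure then allows the action on systems $A$ and $B$ to be separated. The resulting expression is a sum over basis indices $k,l$ of the shared system in which $\bra{k}^C\cdots\ket{l}^C$ appears on the $\C{M}_A$-factor while $\ket{k}\!\bra{l}^B$ appears on the input side of the $\C{M}_B$-factor. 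Re-expressing each CP-map action via its Choi matrix, the sum over $k,l$ assembles into a contraction over the shared system, and the mismatch between how the indices are paired on the two sides produces exactly the partial transpose that features in the link product. Reading off the result yields $M^{C\hookrightarrow B}=M_A * M_B$.

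The main obstacle is not mathematical depth but bookkeeping and alignment of conventions: both the loop composition and the link product are basis-dependent, as flagged in the remark following \cref{def:comp}, so the identity hinges on using the same basis throughout the computation. As a sanity check I would first verify the identity in the pure case $\C{M}_A(\rho)=A\rho A^\dagger$, $\C{M}_B(\rho)=B\rho B^\dagger$, where the Choi matrices reduce to rank-one operators $\dket{A}\!\dbra{A}$ and $\dket{B}\!\dbra{B}$ and the statement collapses to the familiar Choi-vector identity encoding ordinary matrix composition $BA$; the general statement then follows by linearity and a Kraus decomposition. Since no deeper structural obstruction arises, the overall argument is expected to be routine.
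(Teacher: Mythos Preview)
Your proposal is correct and follows essentially the same approach as the paper's proof: both expand the loop composition via \cref{def:comp}, use the tensor-product structure of $\C{M}_A\otimes\C{M}_B$ to separate the two factors, convert to Choi matrices, and recognise the resulting index contraction as the link product. The only cosmetic difference is that the paper computes the action of $\C{M}^{C\hookrightarrow B}$ on a generic input $\ket{\psi}\!\bra{\phi}^A$ and identifies it with $(M_A*M_B)*(\ket{\psi}\!\bra{\phi}^A)$, whereas you compute the Choi matrix directly; this is the same bookkeeping in a different order.
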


This allows us to write the composition of a causal box $\C{C}$ with local agents as

\begin{equation}
    (M_{\bar{A}_1} \otimes ... \otimes M_{\bar{A}_N}) * C
\end{equation}

or in terms of the Choi vector

\begin{equation}
    (\bar{A}_1 \otimes ... \otimes \bar{A}_N) * \dket{\bar{V}},
\end{equation}

which are very reminiscent of the general Born rules \cref{eq:pmprob,eq:pmprob_pure}. Finally, we define what it means for a causal box to be an extension of a QC-QC.

\begin{defi}[Extensions of QC-QCs to causal boxes]\label{def:pbqceq}
Let $\ket{w}$ be the process vector of an $N$-partite QC-QC and $\dket{\bar{V}_w}$ the Choi vector of an $N$-partite causal box with a totally ordered set of positions (of the form depicted on the left side of \cref{fig:cbspace}). We say that the causal box is an extension of a QC-QC if for any set of local operations of the QC-QC $\{A_1,...,A_N\}$ and any set of local operations of the causal box $\{\bar{A}_1,...,\bar{A}_N\}$ such that $\bar{A}_n$ is an extension of $A_n$ (\cref{def:local_equivalence}) for all $n \in \{1,...,N\}$ and any $\ket{\psi}^P \in \C{H}^P$, it holds that
\begin{equation}
    (\dket{A_1} \otimes .... \otimes \dket{A_N} \otimes \ket{\psi}^P) * \ket{w} \otimes \ket{t=2N+2} = (\dket{\bar{A}_1} \otimes ... \otimes \dket{\bar{A}_N} \otimes \ket{\psi, t=1}^P) * \dket{\bar{V}_w}.
\end{equation}
\end{defi}


For simplicity, we just consider pure processes in the above definition. This is not a problem as all QC-QCs and all causal boxes are purifiable \cite{Portmann_2017, Wechs_2021}. 

\begin{remark}[Extensions vs. realisations]
In \cite{Vilasini_2022}, a formal definition of what it means to realise an abstract QC-QC protocol in spacetime was proposed, using the concept of fine-graining introduced there. In \cref{sec:fine}, we connect the mathematical notion of extensions defined here to the more physical concepts of spacetime realisations and fine-graining, showing that the causal box extensions that we construct for QC-QCs can be understood as fine-grained descriptions of their spacetime realisations. 
\end{remark}

\begin{remark}[Hilbert spaces vs. Fock spaces]
The in/output state spaces of QC-QCs/process matrices are finite-dimensional Hilbert spaces of a fixed dimension, while that of causal boxes is potentially an infinite-dimensional Fock space. Mathematically speaking, Fock spaces are a specific type of Hilbert spaces. However, each finite-dimensional Hilbert space of dimension $d$ can be seen as the one-message subspace of an infinite-dimensional Fock space of $d$-dimensional messages. This is the viewpoint we take in our mapping.
\end{remark}


\subsection{Photonic causal box realisation of the Grenoble process}\label{sec:switchtopb}

As a warm up to the general result, in this section, we consider the specific example of the Grenoble process \cite{Wechs_2021} and provide a causal box description for it. Unlike the frequently discussed quantum switch, the Grenoble process features dynamical control of causal order. This means that the causal order depends not just on some fixed control bit in the global past but on the outputs of the agents produced during the run time of the protocol. Additionally, the Grenoble process is not causally separable. 

The Grenoble process is a process between three agents. In a first time step, a qubit is sent in superposition to all three agents. Dynamical control of the causal order happens when the agents send the qubit back to the circuit. In this second time step, the qubit acts as a quantum control for which agent receives it next (so for example, if the circuit receives the qubit from $A_1$, it sends it to $A_2$ if the state of the qubit is $\ket{0}$ and to $A_3$ if the state is $\ket{1}$). In the final step, the circuit sends the qubit to the last remaining agent, attaching an additional ancillary state and applying a CNOT gate to the target-ancilla pair.

A proposed experimental realisation of this QC-QC using photons \cite{Wechs_2021} is depicted in \cref{fig:new_QCQC}. This realisation uses well-known components like coherent COPY and CNOT gates and polarizing beam splitters (PBS). The COPY gates copy the photon's state onto its polarization which the PBS then use to achieve the dynamical control of the causal order.

\begin{figure}
\centering
\includegraphics[width=0.8\textwidth]{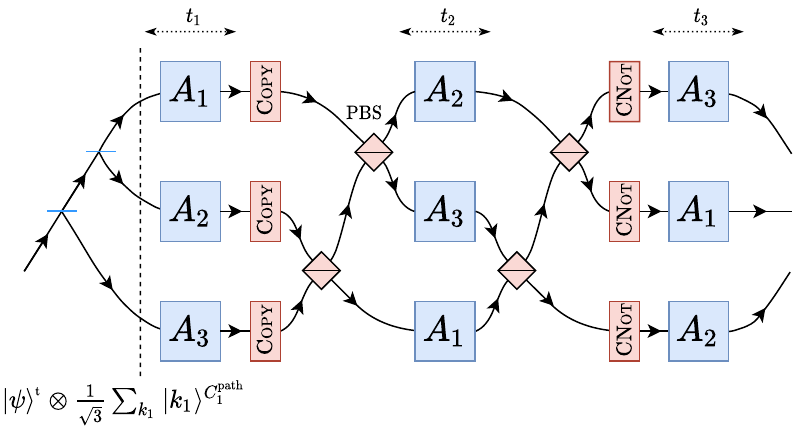}
\caption{Schematic of an experimental implementation of the Grenoble process using photons. The figure is taken from \cite{Wechs_2021}. Initially, the photon is sent to all agents in superposition with the path acting as the control system. In the next step, the state of the target is copied onto the polarization via the COPY gates. This then allows the polarizing beam splitters (PBS) to guide the photon to the correct agent. Finally, PBS along with CNOT gates acting jointly on the target and the polarization implement the final internal operation.}
\label{fig:new_QCQC}
\end{figure}

Let us now consider how we obtain a causal box from this experimental set-up. For this, start with the action of the gates on non-vacuum, qubit states. The COPY gate in \cref{fig:new_QCQC} implements the following operation on an incoming qubit state $V_{\text{Copy}} = \sum_{i=0,1} \ket{i}^{O} \ket{i}^{\alpha} \bra{i}^I$ while the CNOT gates apply the following operation on the two input qubits, $V_{\text{CNOT}} = \sum_{ij} \ket{i}^O \ket{i \oplus j}^\alpha \bra{i}^I \bra{j}^\alpha$. It can be easily checked that both of these are isometries. A beam splitter consists of two input wires $I^0$ and $I^1$ and two output wires $O^0$ and $O^1$, each of which can carry two-dimensional messages. The action of the beam splitter is that it reflects one type of polarization (say photons in state $\ket{0}$) and transmits those of the orthogonal polarization ($\ket{1}$). We can thus say that a polarizing beam splitter sends a photon in state $\ket{i}$ arriving on wire $I^j$ to the output wire $O^{i \oplus j}$. If there is only one photon, it is clear that the beam splitter acts as an isometry.



For a causal box representation, we also need to consider the possibility of multiple (qubit) messages as well as no messages on each of the wires. There are many ways in which one can extend the action of these gates to the Fock space. One natural extension involves having it act on each photon independently essentially. To do this we define orthonormal occupation number bases for qubits and pairs of qubits (to model the joint system of target and ancilla). A basis for qubits is given by $\ket{m, n}$ for $m,n \in \mathbb{N}$, which is the state with $m$ zero-qubits and $n$ one-qubits. For the ququarts we similarly have $\ket{m,n,k,l}$ for $m,n,k,l \in \mathbb{N}$, which is the state with $m$ qubit pairs in the state $00$, $n$ $01$-pairs, $k$ $10$-pairs and $l$ $11$-pairs. The action of the aforementioned gates is then

\begin{gather}\label{eq:grenoble_extension}
\begin{aligned}
    V_{\text{Copy}}(\ket{m,n}^I) &= \ket{m,0,0,n}^{O \alpha} \\
    V_{\text{CNot}}(\ket{m,n,k,l}^{I \alpha}) &= \ket{m,n,l,k}^{O \alpha} \\
    V_{\text{PBS}}(\ket{m,n}^{I_0} \ket{k,l}^{I_1}) &= \ket{m,l}^{O_0} \ket{k,n}^{O_1} 
\end{aligned}
\end{gather}

The extended gates are then also isometries since they map an orthonormal basis to an orthonormal basis in an injective manner. They also reproduce the correct behaviour in the one-message subspace and are thus indeed Fock space extensions of the corresponding gates after which they are named. The one-message subspace is spanned by the states $\ket{1,0} \equiv \ket{0}$ and $\ket{0,1} \equiv \ket{1}$ for qubits and $\ket{1,0,0,0} \equiv \ket{00}, \ket{0,1,0,0} \equiv \ket{01}, \ket{0,0,1,0} \equiv \ket{10}, \ket{0,0,0,1} \equiv \ket{11}$ for pairs of qubits. Using these equivalences we see that $V_{\text{Copy}}$ in \cref{eq:grenoble_extension} correctly maps $\ket{1,0}^I \equiv \ket{0}^I$ to $\ket{1,0,0,0}^{O\alpha} \equiv \ket{0}^O \ket{0}^\alpha$ and $\ket{0,1}^I \equiv \ket{1}^I$ to $\ket{0,0,0,1}^{O \alpha} \equiv \ket{1}^O \ket{1}^\alpha$. Correct behavior for the CNOT gate and the polarising beam splitter can be shown similarly. 

Finally, we need to add the time stamps. We can do so after composing the gates and beam splitters into the various isometries that take the outputs of the local agents to their inputs. If $\bar{V}$ is such an isometry we can replace it with $\bar{V} \otimes \C{O}$ where $\C{O}$ simply increases the time by 1, $\C{O}\ket{t} = \ket{t+1}$. The operator $\C{O}$ is an isometry and therefore $\bar{V} \otimes \C{O}$ is an isometry as well. 

In conclusion, arbitrary compositions and tensor products of COPY and CNOT gates and polarizing beam splitters are isometries. Any sequence representation made up of these components is thus a causal box. Further, if these components are composed in some way to form a QC-QC, then the causal box obtained via the same composition is an extension. This is because when we restrict the components to the zero- and one-message space (at each time), they must behave the same (in the sense of \cref{def:pbqceq}) as in the original QC-QC picture by construction. At the same time, the components conserve the number of messages which means that once we compose the causal box with local agents as defined in \cref{def:local_equivalence}, which do the same, the number of messages in the circuit never changes. We can thus make the restriction to the zero- and one-message space because we start out with a single photon (this is necessarily the case because otherwise the composition would not define a QC-QC). This models \cref{assumption: assump1} as discussed in \cref{sec:statespace}.

In particular, this shows that the resultant causal box satisfies \cref{assumption: assump1} and \cref{assumption: assump2} (as formalised in \cref{sec:statespace}) and we have found a causal box extension of the Grenoble process which respects the requirements of \cref{mainth}.

\begin{figure}
    \centering
    \includegraphics{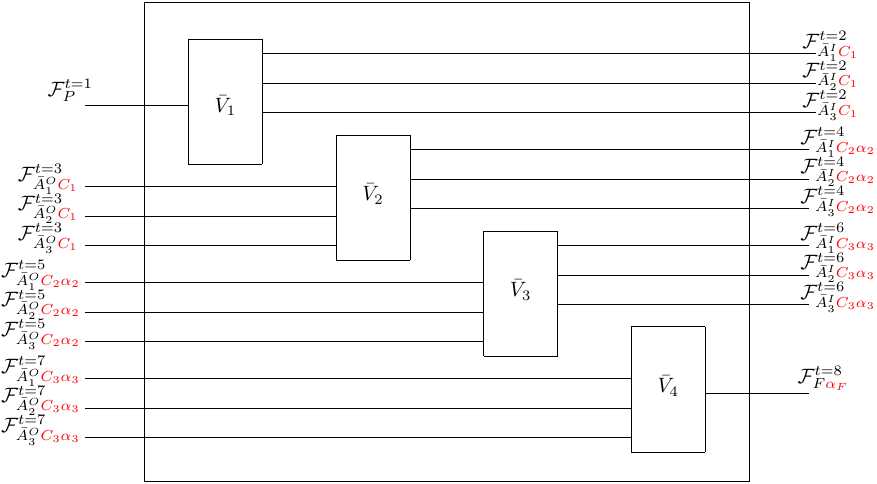}
    \caption{Schematic depiction of the sequence representation causal box extension of the Grenoble process. The internal operations $\bar{V}_n$ are themselves compositions of other operations, namely the COPY and CNOT gates and polarizing beam splitters. The control and ancilla are part of the target system and as such are sent to the agents together with the target. However, we assume that the agents do not act on them. Once this causal box is composed is composed with agents' local operations only the single-message sector is relevant, yielding behaviour that can be seen as equivalent to that of the QC-QC.}
    \label{fig:grenobleextensionl}
\end{figure}

\subsection{Isometric extension}\label{sec:another}

We now construct a general scheme to extend QC-QCs to causal boxes. However, this will not be a direct generalization of the extension provided for the Grenoble process in the previous section. One reason is that the latter was based on a photonic experimental realisation proposed for the Grenoble process while no such explicit proposal exists for general abstract QC-QCs (nevertheless, we give an alternative general extension inspired by such photonic experiments in \cref{sec:controlintarget}, which uses more abstract operations of the QC-QC framework instead of COPY and CNOT gates and polarizing beam splitters). 
A key difference arising between the photonic realisation and the abstract QC-QC description is that in the latter, the ancilla and the control are ``inside" the QC-QC, whereas for the former, these systems and the target were simply different degrees of freedom of the same photon and thus travel ``outside'' the QC-QC and into the local operations (although the devices implementing the local operations only act on the target degree of freedom). Thus, in the photonic case, a new control and ancilla are automatically introduced, whenever a new photon is introduced in the experiment. On the other hand, when keeping these systems internal to the QC-QC, there will only be one control and ancilla.

A concrete issue that arises when trying to formulate a causal box extension where the control is internal is that we can have states with a mismatch between the control and target degrees of freedom, which by construction never arises in the QC-QC picture. That is the control system is in a state $\ket{\C{K}_n, k_{n+1}}$ (which indicates that agent $k_{n+1}$ was the last to perform a non-trivial action), but there is a vacuum state on the wire of the agent $k_{n+1}$ and a non-vacuum state on the wire of a different party. 
Consider a causal box that during the first time step simply forwards the target system in a superposition to every agent. The control system is then $\sum_{k_1 \in \C{N}} \ket{\emptyset, k_1}$. If we then input a non-vacuum state $\ket{\psi}$ on a single wire (say wire 1, for example) in the next time step, the isometry $\bar{V}_2$ will receive $\sum_{k_1 \in \C{N}} \ket{\psi, t=3}^{\bar{A}^O_1} \otimes \ket{\emptyset, k_1}$. Only the term corresponding to $k_1 = 1$ is valid. The way we will account for this is by noting that ``correct" messages and ``mismatched" messages lie in different, orthogonal Hilbert spaces. We will thus define the action of the causal box on each subspace separately, and as long as each defines an isometry on its own domain, the full map will also be an isometry. Moreover, on the correct one-message subspace, it will reproduce the action of the QC-QC.

\begin{figure}
    \centering
    \includegraphics{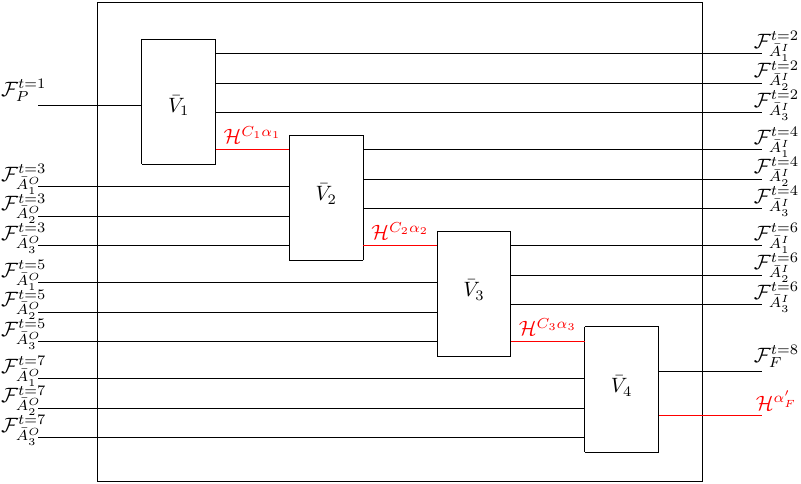}
    \caption{Schematic depiction of the sequence representation causal box extension of a QC-QC, here for three agents. The internal operations $\bar{V}_n$ conserve the number of message coming from the agents. The control and ancilla are internal wires and are marked in red in the figure. Once this causal box is composed is composed with agents' local operations only the single-message sector is relevant, yielding behaviour that can be seen as equivalent to that of the QC-QC.}
    \label{fig:another}
\end{figure}

A similar procedure can be used to deal with multiple message. We define the internal operation of the causal box as a direct sum

\begin{equation}\label{eq:directsum}
    \bar{V}_{n+1} = \bigoplus_{m=0}^{\infty} \bar{V}_{n+1}^m
\end{equation}

where $\bar{V}_{n+1}^m$ acts on states with a total of $m$ messages on the agents' output wires at time $t=2n+1$ and outputs to an $m$-messages state over the input wires of the agents at time $t=2n+2$\footnote{Note that previously when we talked about an $m$-message state, in the context of local operations, we were referring to the number of messages at all times but on the wire of a single agent whose operation was being considered (\cref{fig:pbstatespaces}). Here, we are considering the internal operations of the causal box modelling the supermap, where each operation acts on $m$ messages at a fixed time but spread over the wires of all agents (\cref{fig:another}). Note that \cref{assumption: assump1} and \cref{assumption: assump2} enforce that in the former case $m=1$ but not in the latter case. However, by construction in the QC-QC framework (and w.l.o.g), at each (time) step, only one message is received by its internal operation, which is why we recover the original QC-QC action on the one-message space here as well.  } , i.e.

\begin{gather}
\begin{aligned}
    \bar{V}_1^m: &\vee^m (\C{H}^P \otimes \ket{t=1}) \rightarrow \Big(\bigoplus_{\substack{m_1,...,m_N: \\ \sum_i m_i = m}} \bigotimes_{k_1} \vee^{m_{k_1}}(\C{H}^{\bar{A}^I_{k_1}} \otimes \ket{t=2})\Big) \otimes \C{H}^{C_1} \otimes \C{H}^{\alpha_1} \\
    \bar{V}_{n+1}^m: &\Big(\bigoplus_{\substack{m_1,...,m_N: \\ \sum_i m_i = m}} \bigotimes_{k_n} \vee^{m_{k_n}} (\C{H}^{\bar{A}^O_{k_n}} \otimes \ket{t=2n+1})\Big) \otimes \C{H}^{C_n} \otimes \C{H}^{\alpha_n} \\
    &\rightarrow \Big(\bigoplus_{\substack{m_1,...,m_N: \\ \sum_i m_i = m}} \bigotimes_{k_{n+1}} \vee^{m_{k_{n+1}}}(\C{H}^{\bar{A}^I_{k_{n+1}}} \otimes \ket{t=2n+2})\Big) \otimes \C{H}^{C_{n+1}} \otimes \C{H}^{\alpha_{n+1}} \\
    \bar{V}_{N+1}^m: &\Big(\bigoplus_{\substack{m_1,...,m_N: \\ \sum_i m_i = m}} \bigotimes_{k_N} \vee^{m_{k_N}} (\C{H}^{\bar{A}^O_{k_N}} \otimes \ket{t=2N+1})\Big) \otimes \C{H}^{C_N} \otimes \C{H}^{\alpha_N} \rightarrow \vee^m (\C{H}^F \otimes \ket{t=2N+2}) \otimes \C{H}^{\alpha'_F}
\end{aligned}
\end{gather}

where $\alpha'_F$ is an ancilla such that the corresponding ancilla $\alpha_F$ from the QC-QC is a subspace. Note that \cref{remark:infinite} ensures that defining the operations for each $m$ as above is sufficient to have well-defined causal box on the full Fock space (assuming, of course, that we are actually dealing with isometries, which we prove below).


Let us now deal with $\bar{V}_{n+1}^1$, which acts on the one-message subspace, by formalising what we outlined earlier. We define for all $n \in \C{N}=\{1,...,N\}$ two subspaces,

\begin{equation}
    \C{H}^{O_n}_{\text{corr}} = \text{span}\{\C{H}^{\bar{A}^O_{k_n}} \otimes \ket{t=2n+1} \otimes \ket{\C{K}_{n-1}, k_n} \otimes \C{H}^{\alpha_n}\}_{k_n, \C{K}_{n-1}}
\end{equation}

and its orthogonal complement (relative to the one-message subspace)
\begin{equation}
    \C{H}^{O_n}_{\text{mis}} = (\C{H}^{O_n}_{\text{corr}})^\perp = \text{span}\{\C{H}^{\bar{A}^O_{k_n}} \otimes \ket{t=2n+1}) \otimes \ket{\C{K}_{n-1}, k'_n} \otimes \C{H}^{\alpha_n}\}_{k_n \neq k'_n, \C{K}_{n-1}}.
\end{equation}

For the edge case of the global past, we simply have $\C{H}^{O_0}_{\text{corr}} = \C{H}^P \otimes \ket{t=1}$ and thus $\C{H}^{O_0}_{\text{mis}}$ is trivial.

For the agents' input spaces, we similarly define

\begin{gather}
\begin{aligned}
    \C{H}^{I_{n+1}}_{\text{corr}} &= \text{span}\{\C{H}^{\bar{A}^I_{k_{n+1}}} \otimes \ket{t=2n+2}) \otimes \ket{\C{K}_n, k_{n+1}} \otimes \C{H}^{\alpha_n}\}_{k_{n+1}, \C{K}_n} \\
    \C{H}^{I_{n+1}}_{\text{mis}} &= \text{span}\{\C{H}^{\bar{A}^I_{k_{n+1}}} \otimes \ket{t=2n+2}) \otimes \ket{\C{K}_n, k'_{n+1}} \otimes \C{H}^{\alpha_n}\}_{k_{n+1} \neq k'_{n+1}, \C{K}_n}.
\end{aligned}
\end{gather}

For the edge case of the global future, we define

\begin{gather}
\begin{aligned}
    \C{H}^{I_{N+1}}_{\text{corr}} &= \C{H}^F \otimes \ket{t=2N+2} \otimes \C{H}^{\alpha_F} \\
    \C{H}^{I_{N+1}}_{\text{mis}} &=  \C{H}^F \otimes \ket{t=2N+2} \otimes (\C{H}^{\alpha_F})^\perp.
\end{aligned}
\end{gather}

The spaces with the subscript corr can be seen as messages where the control system is in the correct state, whereas the spaces with the subscript mis correspond to states where the control is mismatched. In order to reproduce the action of the internal operations $V_{n+1}$ of the QC-QC, the action on $\C{H}^{O_n}_{\text{corr}}$ must then necessarily be

\begin{equation}\label{eq:anotherone}
    \bar{V}_{n+1}^1 \ket{\psi, t=2n+1}^{\bar{A}^O_{k_n}} \ket{\C{K}_{n-1}, k_n} \ket{\alpha}^{\alpha_n} = V_{n+1} (\ket{\psi}^{\bar{A}^O_{k_n}} \ket{\C{K}_{n-1}, k_n} \ket{\alpha}^{\alpha_n}) \ket{t=2n+2} \in \C{H}^{I_{n+1}}_{\text{corr}}
\end{equation}

where $\ket{\psi} \neq \ket{\Omega}$ and we interpret $V_{n+1} (\ket{\psi}^{\bar{A}^O_{k_n}} \ket{\C{K}_{n-1}, k_n} \ket{\alpha}^{\alpha_n})$ via the obvious isomorphism between the spaces $\C{H}^{A^{I/O}_{k_n}}$ and $\C{H}^{\bar{A}^{I/O}_{k_n}}$. The restrictions to the respective ``correct" spaces of the edge cases $\bar{V}_1^1, \bar{V}_{N+1}^1$ are defined analogously.

On the other hand, we simply demand that the restriction of $\bar{V}_{n+1}^1$ to $\C{H}^{O_n}_{\text{mis}}$ is an isometry and that it is a map of the form

\begin{equation}
    \bar{V}_{n+1}^1|_{\C{H}^{O_n}_{\text{mis}}}: \C{H}^{O_n}_{\text{mis}} \rightarrow \C{H}^{I_{n+1}}_{\text{mis}}
\end{equation}

For $m \neq 1$, it similarly suffices for $\bar{V}_{n+1}^m$ to be an arbitrary isometry. 

\begin{restatable}[Sequence representation for QC-QCs]{prop}{anotheriso}\label{prop:anotheriso}
There exist maps $\bar{V}_1,...,\bar{V}_{N+1}$ which are isometries and fulfill all the conditions of \cref{sec:another}.
\end{restatable}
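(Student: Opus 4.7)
The plan is to verify that each $\bar{V}_{n+1}$, as defined via the direct sum in \cref{eq:directsum}, is an isometry by reducing the problem to showing each summand $\bar{V}_{n+1}^m$ is an isometry on its own domain. Since the $m$-message subspaces are mutually orthogonal in both the input and output Fock spaces (this is a standard feature of the symmetric Fock space decomposition), an orthogonal direct sum of isometries whose images lie in mutually orthogonal subspaces is itself an isometry. This observation reduces the task to handling each $m$ separately.

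For $m = 0$, the domain and codomain are both the vacuum subspace (one-dimensional up to the trivial time-stamp shift), and $\bar{V}_{n+1}^0$ can be taken as the map $\ket{\Omega, t=2n+1} \otimes \ket{\emptyset}^{C_n} \otimes \ket{\Omega}^{\alpha_n} \mapsto \ket{\Omega, t=2n+2} \otimes \ket{\emptyset}^{C_{n+1}} \otimes \ket{\Omega}^{\alpha_{n+1}}$, which is trivially norm-preserving. For $m = 1$, I would further split the domain into $\C{H}^{O_n}_{\text{corr}} \oplus \C{H}^{O_n}_{\text{mis}}$ and treat the two pieces separately. On $\C{H}^{O_n}_{\text{corr}}$, the action is prescribed by \cref{eq:anotherone}, so I would verify that (i) the image indeed lies in $\C{H}^{I_{n+1}}_{\text{corr}}$, which is immediate from the form $V_{n+1} = \sum_{\C{K}_{n-1}, k_n, k_{n+1}} V^{\rightarrow k_{n+1}}_{\C{K}_{n-1}, k_n} \otimes \ket{\C{K}_{n-1} \cup k_n, k_{n+1}} \bra{\C{K}_{n-1}, k_n}$, whose control output $\ket{\C{K}_n, k_{n+1}}$ is matched to the wire carrying the non-vacuum output of $V^{\rightarrow k_{n+1}}_{\C{K}_{n-1}, k_n}$, and (ii) norm preservation follows from the isometricity of $V_{n+1}$ guaranteed by the QC-QC framework \cite{Wechs_2021} (up to the trivial time-stamp tensor factors). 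On $\C{H}^{O_n}_{\text{mis}}$, we merely need any isometry into $\C{H}^{I_{n+1}}_{\text{mis}}$, and for $m \geq 2$ we likewise need any isometry between the $m$-message domain and codomain.

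The main obstacle will therefore be the dimension-counting step: ensuring that $\dim \C{H}^{O_n}_{\text{mis}} \leq \dim \C{H}^{I_{n+1}}_{\text{mis}}$ and that the analogous inequality holds for each $m \geq 2$. In general this need not hold with the ancilla $\alpha_{n+1}$ inherited directly from the QC-QC, so my strategy is to enlarge the ancilla spaces $\C{H}^{\alpha_{n+1}}$ (and similarly $\C{H}^{\alpha'_F}$) as needed, for example by appending auxiliary factors that do not affect the action on the ``corr'' one-message subspace. Since $\C{H}^{I_{n+1}}_{\text{mis}}$ already carries a tensor factor of the full ancilla $\C{H}^{\alpha_{n+1}}$, enlarging $\alpha_{n+1}$ increases $\dim \C{H}^{I_{n+1}}_{\text{mis}}$ without touching $\dim \C{H}^{O_n}_{\text{mis}}$; a similar argument works for the $m$-message sectors, which only involve finitely many distinct configurations of the $N$ output wires at fixed $m$ and so have finite dimension.

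Finally, I would handle the edge cases $\bar{V}_1$ and $\bar{V}_{N+1}$ separately, noting that the structure is simpler there: $\C{H}^{O_0}_{\text{mis}}$ is trivial, so $\bar{V}_1^1$ only needs to reproduce the QC-QC preparation $V_1$ (which is an isometry by definition) followed by the obvious embedding into $\C{H}^{I_1}_{\text{corr}}$; and for $\bar{V}_{N+1}$ the mismatch subspace on the output side is absorbed into the orthogonal complement $(\C{H}^{\alpha_F})^\perp$ of the enlarged ancilla $\alpha'_F$, which again can be enlarged as needed. Once all these isometries are in place, putting them together via the direct sum in \cref{eq:directsum} yields the claimed isometric maps $\bar{V}_1, \ldots, \bar{V}_{N+1}$.
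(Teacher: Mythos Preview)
Your proposal is correct and follows essentially the same approach as the paper: decompose $\bar{V}_{n+1}$ as a direct sum over message-number sectors, use the QC-QC isometry $V_{n+1}$ for the one-message ``corr'' part, and note that any isometry suffices on the remaining sectors. The only difference is cosmetic: where you argue existence of the remaining isometries via dimension counting and ancilla enlargement, the paper gives an explicit construction (essentially the identity on the target wires, updating the control by appending $\min(\C{N}\backslash\C{K}_n)$ and recording this in an enlarged ancilla), with the ancilla-enlargement point relegated to a footnote.
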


From this it immediately follows that we have constructed a sequence representation of a causal box. We can then show that this causal box is also an extension of the original QC-QC.

\begin{restatable}[Causal box description of QC-QCs]{prop}{anotherequivalence}\label{prop:anotherequivalence}
The causal box with sequence representation given by any set of isometries $\bar{V}_1,..., \bar{V}_{N+1}$ satisfying the conditions in \cref{sec:another} is an extension of the QC-QC described by the process vector $$\ket{w} = \sum_{(k_1,..., k_N)} \dket{V^{\rightarrow k_1}_{\emptyset, \emptyset}} * ... * \dket{V^{\rightarrow k_N}_{\{k_1,...,k_{N-2}\}, k_{N-1}}} * \dket{V^{\rightarrow F}_{\C{N}\backslash k_N, k_N}}.$$
\end{restatable}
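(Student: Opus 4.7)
The plan is to rewrite both sides of the equation in \cref{def:pbqceq} as sequential applications of the internal operations interleaved with the local operations, and then to show by induction on $n$ that the state on the causal box side tracks the QC-QC state on the ``correct'' one-message subspace throughout the computation. First I would invoke \cref{lemma:compislink} to rewrite every link product as an ordinary sequential composition. Since the Choi vector of the causal box satisfies $\dket{\bar{V}_w} = \dket{\bar{V}_1} * \dots * \dket{\bar{V}_{N+1}}$ (the causal box being obtained by composing the isometries of its sequence representation), the right-hand side of \cref{def:pbqceq} becomes the sequential application of $\bar{V}_1$, then $\bar{A}_1 \otimes \dots \otimes \bar{A}_N$ on the agents' wires, then $\bar{V}_2$, and so on, starting from $\ket{\psi, t=1}^P$. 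On the left-hand side I would group the summands of $\ket{w}$ into the full QC-QC internal isometries $V_{n+1} = \sum_{\C{K}_{n-1}, k_n, k_{n+1}} V^{\rightarrow k_{n+1}}_{\C{K}_{n-1}, k_n} \otimes \ket{\C{K}_{n-1}\cup k_n, k_{n+1}}\bra{\C{K}_{n-1}, k_n}$ so that the same organisation applies, reducing the LHS to the sequential application of $V_1,\, A_1 \otimes \dots \otimes A_N,\, V_2, \dots$ to $\ket{\psi}^P$.

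The core argument is induction on $n$, with hypothesis that the state just before applying $\bar{V}_{n+1}$ lies in $\C{H}^{O_n}_{\text{corr}}$ (tensored with vacua on all other agents' wires and at all other time-stamps) and equals the corresponding QC-QC state before $V_{n+1}$, up to adjoining the time-stamp $\ket{t=2n+1}$ to the non-vacuum component and vacuum states at the remaining time-stamps. The base case $n=0$ holds since $\ket{\psi, t=1}^P \in \C{H}^{O_0}_{\text{corr}}$. For the inductive step, applying $\bar{V}_{n+1}$ lands the state in $\C{H}^{I_{n+1}}_{\text{corr}}$ because the restriction $\bar{V}_{n+1}^1|_{\C{H}^{O_n}_{\text{corr}}}$ reproduces $V_{n+1}$ exactly by \cref{eq:anotherone}. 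Applying $\bar{A}_1 \otimes \dots \otimes \bar{A}_N$ next picks out, in each branch labelled by the control $\ket{\C{K}_n, k_{n+1}}$, the action of $A_{k_{n+1}}$ on the non-vacuum message on $\bar{A}^I_{k_{n+1}}$ at time $t=2n+2$ (shifting the time-stamp to $t=2n+3$) while leaving vacua on all other input times of $\bar{A}_{k_{n+1}}$ and on all wires of agents $i \neq k_{n+1}$ untouched. Since the local operations do not act on the control, the output lies in $\C{H}^{O_{n+1}}_{\text{corr}}$ and coincides with the QC-QC state after the $(n+1)$-th local operation, closing the induction.

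The hard part will be the careful verification of this local-operation step, which amounts to showing that the extension $\bar{A}_i = \bigotimes_{t \in \C{T}^I_i}(A_i \otimes \proj{t+1}{t} + \proj{\Omega, t+1}{\Omega, t})$ sends the branch ``non-vacuum on $\bar{A}^I_{k_{n+1}}$ at $t=2n+2$, vacuum at all other times'' to ``$A_i(\phi)$ on $\bar{A}^O_i$ at $t=2n+3$, vacuum elsewhere'' when $i = k_{n+1}$, and to the all-vacuum output when $i \neq k_{n+1}$; the calculation displayed just after \cref{def:local_equivalence} handles the single-branch version and linearity extends it to the coherent superposition over $k_{n+1}$ and $\C{K}_n$. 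After the final isometry $\bar{V}_{N+1}^1$, the output lies in $\C{H}^{I_{N+1}}_{\text{corr}} = \C{H}^F \otimes \ket{t=2N+2} \otimes \C{H}^{\alpha_F}$ and equals $((\dket{A_1} \otimes \dots \otimes \dket{A_N} \otimes \ket{\psi}^P)*\ket{w})\otimes\ket{t=2N+2}$, verifying \cref{def:pbqceq}. One minor subtlety worth flagging is that the mismatched subspaces $\C{H}^{O_n}_{\text{mis}}$ and the higher-message blocks $\bar{V}_{n+1}^m$ with $m \neq 1$ are never accessed in this computation, so their (arbitrary but isometric) choice plays no role in the equivalence and is needed only for each $\bar{V}_{n+1}$ to be a well-defined isometry on the full Fock space.
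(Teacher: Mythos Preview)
Your proof is correct and rests on the same core insight as the paper's --- that starting from $\ket{\psi, t=1}^P$ and composing with the extended local operations, the state never leaves the one-message ``correct'' subspace --- but the organisation differs. The paper works entirely at the Choi-vector level: it first uses the direct-sum decomposition $\bar{V}_{n+1} = \bigoplus_m \bar{V}_{n+1}^m$ to observe that $\dket{\bar V_{N+1}}*\cdots*\dket{\bar V_1}=\sum_m \dket{\bar V_{N+1}^m}*\cdots*\dket{\bar V_1^m}$ and that only $m=1$ survives after linking with $\ket{\psi,t=1}^P$; it then expands the product $\bigotimes_k\bigotimes_t(\dket{A_k}+\ket{\Omega}\ket{\Omega})$ and reads off the surviving terms in one algebraic step, invoking \cref{lemma:accept} to dispose of mismatched-control contributions. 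You instead invoke \cref{lemma:compislink} to unravel both sides into a sequential computation and run an explicit induction on $n$, verifying at each step that the state lies in $\C{H}^{O_n}_{\text{corr}}$; this amounts to proving \cref{lemma:accept} inline rather than citing it. Your route is more operational and arguably more transparent, while the paper's is more compact. One small imprecision: on the QC-QC side the sequential picture interleaves $V_{n+1}$ with the \emph{controlled} local operation $\sum_{\C{K}_{n-1},k_n}A_{k_n}\otimes\ket{\C{K}_{n-1},k_n}\bra{\C{K}_{n-1},k_n}$, not literally $A_1\otimes\cdots\otimes A_N$; your branch-by-branch induction handles this correctly, but the phrasing in your plan could be tightened.
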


\subsection{Projective extension}\label{sec:internalcontrol}

The extension that we just constructed allows for multiple messages to be sent to the causal box at the same time. However, since the extension is equivalent to the QC-QC, once it is actually composed with local agents, there is at any time at most one non-vacuum state on the input wires of the causal box. An experimentalist interested in simulating just the usual action of the QC-QC might therefore not care about the multi-message space and build their device in such a way that it simply aborts the procedure if an unexpected (meaning unobtainable in the QC-QC framework) input is encountered. We can capture this idea by applying a projective measurement $\{\C{P}^{I_n}_{accept}, 1 - \C{P}^{I_n}_{accept}\}$ on the inputs and $\{\C{P}^{O_n}_{accept}, 1 - \C{P}^{O_n}_{accept}\}$ on the outputs, each with outcomes $\{accept, abort\}$, before and after every internal operation. Here, $\C{P}^{I_n/O_n}_{accept}$ projects on $\C{H}^{I_n/O_n}_{\text{corr}}$ defined in the previous section. If the outcome is $abort$, the circuit outputs the state $\ket{abort}$. If the outcome is $accept$, the circuit applies the isometry $\bar{V}_{n+1}$. 

When we compose this with local agents satisfying \cref{assumption: assump1} and \cref{assumption: assump2}, this extension never aborts.

\begin{restatable}[No aborts]{lemma}{accept}\label{lemma:accept}
When composing the projective extension with local agents satisfying \cref{assumption: assump1} and \cref{assumption: assump2}, the projective measurements yield the outcome $accept$ with unit probability.
\end{restatable}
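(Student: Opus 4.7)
The plan is to proceed by induction on the step index $n$, showing that at every stage the state lies entirely in the appropriate ``correct'' subspace, so that each projective measurement accepts with probability one. The central observation is that both the internal isometries restricted to their correct domain (via \cref{eq:anotherone}) and the local agents (via \cref{def:local_equivalence} together with \cref{assumption: assump1} and \cref{assumption: assump2}) preserve the matching between the wire carrying the non-vacuum message and the last label stored in the control register. First I would note that \cref{assumption: assump1}, combined with the direct-sum decomposition $\bar V_{n+1} = \bigoplus_m \bar V_{n+1}^m$ which conserves message number, forces the state to stay in the one-message sector throughout, so that only the restrictions $\bar V_{n+1}^1$ are ever invoked and only the one-message structure of the correct/mismatched subspaces is relevant.

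For the base case, the global past state $\ket{\psi,t=1}^P$ lies trivially in $\C{H}^{O_0}_{\text{corr}} = \C{H}^P \otimes \ket{t=1}$, so the projector applied before $\bar V_1$ accepts with certainty, and $\bar V_1^1$ then sends the state into $\C{H}^{I_1}_{\text{corr}}$ by construction. For the inductive step, I would take an arbitrary state in $\C{H}^{I_{n+1}}_{\text{corr}}$, namely a superposition of terms of the form $\ket{\psi}^{\bar A^I_{k_{n+1}}} \otimes \ket{t=2n+2} \otimes \ket{\C{K}_n,k_{n+1}} \otimes \ket{\alpha}^{\alpha_n}$ in which the non-vacuum message sits on the wire whose label matches the final control entry. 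Applying the local agents in their extended form $\bar A_k = \bigotimes_t (A_k \otimes \proj{t+1}{t} + \proj{\Omega,t+1}{\Omega,t})$, agent $k_{n+1}$ maps the non-vacuum input on wire $\bar A^I_{k_{n+1}}$ at $t=2n+2$ to a non-vacuum output on wire $\bar A^O_{k_{n+1}}$ at $t=2n+3$, while every other agent maps vacuum to vacuum; since the control and ancilla are internal to the causal box and untouched by the agents, each term is sent to $A_{k_{n+1}}\ket{\psi}^{\bar A^O_{k_{n+1}}} \otimes \ket{t=2n+3} \otimes \ket{\C{K}_n,k_{n+1}} \otimes \ket{\alpha}^{\alpha_n}$, which lies in $\C{H}^{O_{n+1}}_{\text{corr}}$ by definition. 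Linearity then places the full state in $\C{H}^{O_{n+1}}_{\text{corr}}$, so $\C{P}^{O_{n+1}}_{accept}$ accepts, and a further application of $\bar V_{n+2}^1$ returns us to $\C{H}^{I_{n+2}}_{\text{corr}}$ via \cref{eq:anotherone}, closing the induction.

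The main subtlety is precisely this wire-control matching at the interface with the local agents, because the maps $\bar V_{n+1}^1|_{\C{H}^{O_n}_{\text{mis}}}$ and the $\bar V_{n+1}^m$ for $m \neq 1$ were left completely unspecified in the construction and could in principle produce mismatched or multi-message states. The argument therefore relies on showing that these ``off-script'' sectors are never visited when the environment consists of assumption-compliant local agents, which is exactly what the one-message preservation and vacuum invariance built into $\bar A_k$ by \cref{def:local_equivalence} guarantee. Everything else reduces to routine bookkeeping of the linear superposition over $(\C{K}_n,k_{n+1})$ and noting that the internal degrees of freedom commute through the local operations.
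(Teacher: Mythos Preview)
Your proposal is correct and follows essentially the same approach as the paper: an induction on the step index showing that the state never leaves the correct one-message subspace, using that $\bar V_{n+1}^1|_{\C{H}^{O_n}_{\text{corr}}}$ maps into $\C{H}^{I_{n+1}}_{\text{corr}}$ by \cref{eq:anotherone} and that the local agents conserve message number and cannot introduce a wire--control mismatch. Your write-up is more explicit about the mechanism by which the local operations preserve the matching, but the structure and key ingredients are the same.
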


Further, note that a projective measurement which always yields the same outcome on some space acts as the identity on that space. Therefore, this causal box acts equivalently to the one from \cref{sec:another} when composed with local operations which are extensions of the QC-QC operations and acts as a trace preserving map in this context. However, on the full Fock space, this causal box extension can decrease the trace unlike the extension of \cref{sec:another} which remains an isometry (and hence trace preserving) on the full space.  

\begin{figure}
    \centering
    \includegraphics{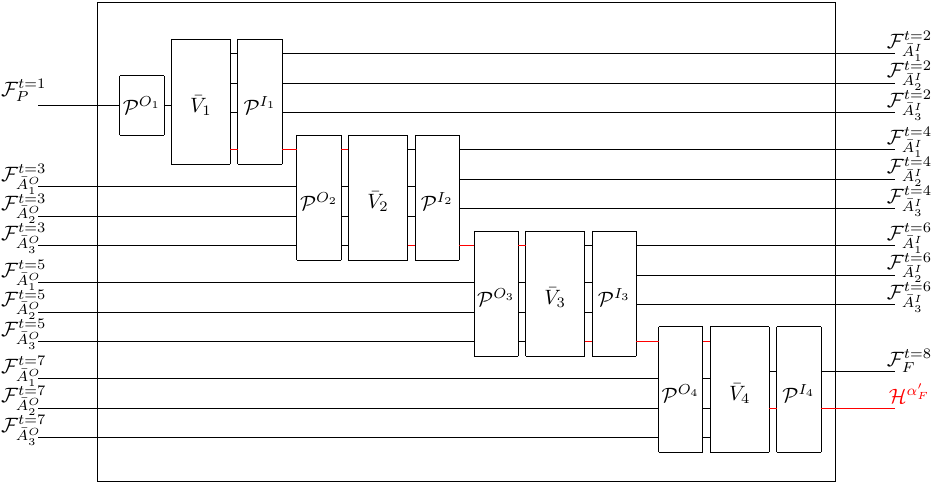}
    \caption{Schematic depiction of the sequence representation causal box extension of a QC-QC, here for three agents, which uses projective measurements to check that there is only a single message on the wires and the control matches the wire. The wires carrying the ancilla and control are marked in red. Once this causal box is composed with agents' local operations the projective measurements always yield the outcome $accept$.}
    \label{fig:another2}
\end{figure}

An additional consequence is that whenever we consider causal boxes which fulfill the requirements laid out in \cref{sec:statespace} and only care about their composition with local agents, we can replace the Choi vector with an effective Choi vector \cite{Vilasini_2020},

\begin{equation}
    \bigotimes_{n=0}^N \dket{\C{P}^{I_n}_{accept}} * \dket{\C{P}^{O_n}_{accept}} * \dket{\bar{V}_w}.
\end{equation}

Naturally, an analogous statement can be made for Choi matrices.

\section{Insights on physical realisability in a spacetime}
\label{sec:finegraining}

The previous sections focused on \cref{mainth}, which is about the mapping from QC-QCs to causal boxes. Here, we describe the relevant concepts behind \cref{theoremfinegraining} and provide a proof sketch of the same, which enables us to regard the causal box in the image of our mapping as a fine-grained description of the QC-QC and consequently as a spacetime realisation of the given QC-QC.

In \cite{Vilasini_2022}, a formal definition of a spacetime realisation of an abstract quantum network, and a relativistic causality condition necessary for ensuring the absence of superluminal causal influences were formalised. These quantum networks include in particular, all protocols described by process matrices and the definition applies to all globally hyperbolic spacetimes i.e. those without closed timelike curves. This definition places minimal but essential conditions that relate the abstract information-theoretic description of the protocol to the description of its spacetime realisation. It shares analogies with how we define the physical realisation of a quantum computation on $n$ logical qubits in terms of $m\geq n$ physical qubits. The concept of fine-graining of causal structures was introduced to formalise this definition, and allows one to interpret the spacetime realisation as a fine-grained description of the abstract process matrix description, as the former can in general involve a larger number of degrees of freedom. We review the core aspects of this definition here, before moving on to the proof sketch of \cref{theoremfinegraining}. In the appendix we provide the full proof of the theorem.

\subsection{Fine-graining and spacetime realisations}

\paragraph{Fine-graining of quantum protocols} Suppose we have a quantum channel (a CPTP map) $\C{M}: \C{L}(\C{H}^I)\rightarrow \C{L}(\C{H}^O)$ which we wish to ``implement''  by some physical means where the input $I$ is encoded in a set $\{I_1,...,I_n\}$ of physical input systems and the output $O$ is decoded from a set $\{O_1,...,O_m\}$ of physical output systems. For example, in quantum computing, $\C{M}$ can be a quantum gate such as the bit flip on a single logical qubit and it may be implemented on a physical hardware where each logical qubit corresponds to $n$ physical qubits (in this example, $n=m$). The physical layer map from inputs $\{I_1,...,I_n\}$ to outputs $\{O_1,...,O_m\}$ is denoted as $\C{M}^f$.

To regard $\C{M}^f$ as an ``implementation'' of $\C{M}$, or in the language we will use, a \emph{fine-graining} of $\C{M}$, it is necessary that there exists an encoding and decoding scheme between the coarse in/outputs $I$ and $O$ and fine in/outputs $\{I_1,...,I_n\}$ and $\{O_1,...,O_m\}$ which can recover $\C{M}$ from $\C{M}^f$. Formally, we require that there exists a pair of maps $(\mathrm{Enc}, \mathrm{Dec})$, where $\mathrm{Enc}: \C{L}(\C{H}^I) \rightarrow \bigotimes_{i=1}^n\C{L}(\C{H}^{I_i})$ is CPTP and $\mathrm{Dec}: \bigotimes_{j=1}^m\C{L}(\C{H}^{O_j}) \rightarrow \C{L}(\C{H}^{O})$ is CPTP on the image of $\C{M}^f\circ\mathrm{Enc}$ such that
    \begin{equation}
    \label{eq: finegraining}
        \C{M}=\mathrm{Dec}\circ \C{M}^f\circ\mathrm{Enc}
    \end{equation}

\begin{figure}
    \centering
    \includegraphics{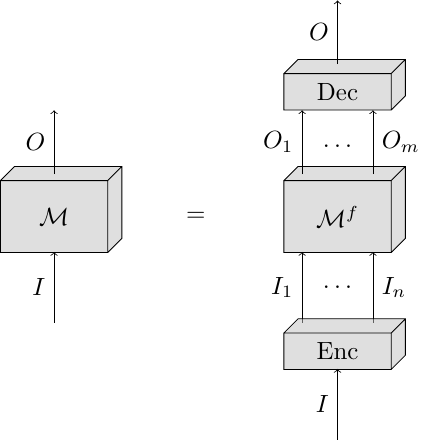}
    \caption{Illustration of \cref{eq: finegraining}. Figure taken from \cite{Vilasini_2022}.}
    \label{fig:encoder-decoder}
\end{figure}

This is illustrated in \cref{fig:encoder-decoder}. A second condition we will require is that the causal properties of $\C{M}$ are preserved in $\C{M}^f$, where the relevant property we care about are the signalling relations: the ability to communicate information between in/output systems of the channel. Formally, we say that $\{I\}$ signals to $\{O\}$ in $\C{M}$ if there exists a map $\C{N}^I: \C{L}(\C{H}^I)\rightarrow \C{L}(\C{H}^I)$ such that $\C{M}\neq \C{M}\circ \C{N}^I$, which captures that an agent with access to $I$ can communicate information to an agent with access to $O$ by a suitable local operation $\C{N}^I$ on $I$. More generally, one can define signalling between arbitrary subsets of in and output systems in CPTP maps with multiple in and output systems (hence the set symbol for $I$ and $O$ here) \cite{Vilasini_2022} (see also \cite{Ormrod_2023} for equivalent definitions of signaling). Then we will require that: $\{I\}$ signals to $\{O\}$ in $\C{M}$, then $\{I_i\}_{i=1}^n$ signals to $\{O_j\}_{j=1}^m$ in $\C{M}^f$. 

More generally, given a channel $\C{M}$ with any number of inputs and outputs, another map $\C{M}^f$ is called a fine-graining of $\C{M}$ if each input $I$ of $\C{M}$ can be identified with a distinct set $\bar{I}$ of inputs of $\C{M}^f$, and likewise for outputs such that (1) there is an encoding-decoding scheme relating these in/output systems which satisfies \cref{eq: finegraining} and (2) the signalling relations of $\C{M}$ are preserved in $\C{M}^f$. The concept has also been extended to quantum networks formed by the composition of multiple CPTP maps, which covers a large class of abstract quantum information protocols, even those with a cyclic causal structure. For defining fine-graining for such general networks of CPTP maps, one applies the above definition for each CPTP map involved in the network. We refer to \cite{Vilasini_2022} for further details of this definition.



\paragraph{Spacetime realisation of abstract quantum protocols} How does this concept of fine-graining relate to the spatiotemporal picture? When we realise a CPTP map such as $\C{M}$ above (or more generally a network of CPTP maps) in a spacetime, we are assigning a spacetime region to each in/output quantum system of the map (or network). In the most minimal, yet operational sense, we can model an acyclic spacetime (such as Minkowski spacetime) as a partially ordered set $\C{T}$ and regard any subset of points in the partially ordered set as defining a spacetime region.\footnote{Note that generally, if the subsets of $\C{T}$ defining the regions are not individual elements of $\C{T}$, then the possibility for causal influence between regions need not be transitive. However, in the case where we choose regions that coincide with individual elements of $\C{T}$ (or are sufficiently localised around single points), we do have transitivity as $\C{T}$ is a poset.} Suppose that we have a spacetime embedding $\C{E}$ which assigns to the in and output systems $I$ and $O$, the spacetime regions $\C{E}(I)=\{P_1,...,P_n\}$ and $\C{E}(O)=\{Q_1,...,Q_m\}$ consisting of some spacetime locations $P_i, Q_j\in \C{T}$. Then, a spacetime realisation of $\C{M}$ relative to this embedding would be a fine-graining $\C{M}^f: \bigotimes_{i=1}^n\C{L}(\C{H}^{I_i})\rightarrow \bigotimes_{j=1}^m\C{L}(\C{H}^{O_j})$ of $\C{M}: \C{L}(\C{H}^{I})\rightarrow \C{L}(\C{H}^{O})$ where each input $I_i$ of $\C{M}^f$ is associated with a spacetime location $P_i$ and each output $O_j$ is associated with a corresponding spacetime location $Q_j$.

Generally, the $P_i$ and $Q_j$ can themselves be subregions rather than individual spacetime events. The ability to partition a larger region into subregions translates operationally into the ability of agents to perform more fine-grained interventions, where they may perform a different operation within different spacetime subregions of a larger protocol.

\paragraph{Relativistic causality} Since signalling in $\C{M}$ is preserved in $\C{M}^f$, relativistic causality implies that whenever $\{I\}$ signals to $\{O\}$ in $\C{M}$, then the embedding must be such that at least one $Q_j\in \C{E}(O)$ must be in the future light cone of some $P_i\in \C{E}(I)$, denoted $P_i\prec Q_j$. There will in general be additional relativistic causality conditions at the fine-grained level, for instance if $\{I_i\}$ signals to $\{O_j,O_k\}$ in fine-graining $\C{M}^f$, then the spacetime point $P_i$ associated with $I_i$ must be in the past light cone of at least one of the spacetime points $Q_j$ or $Q_k$ associated with the corresponding outputs. This captures that even under the more fine-grained interventions, agents cannot signal outside the future light cone. Again, most generally, we can extend this operational definition of relativistic causality for spacetime realisations of arbitrary networks of CPTP maps \cite{Vilasini_2022}.

\begin{figure}[ht]
    \centering
\subfloat[]{\includegraphics[scale=0.5]{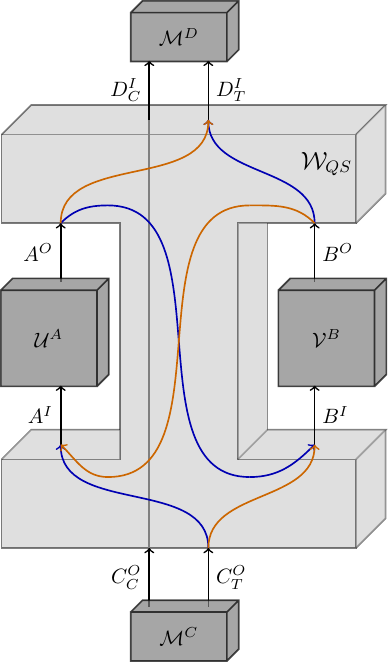}}\qquad\qquad\qquad\subfloat[]{    \includegraphics[scale=0.5]{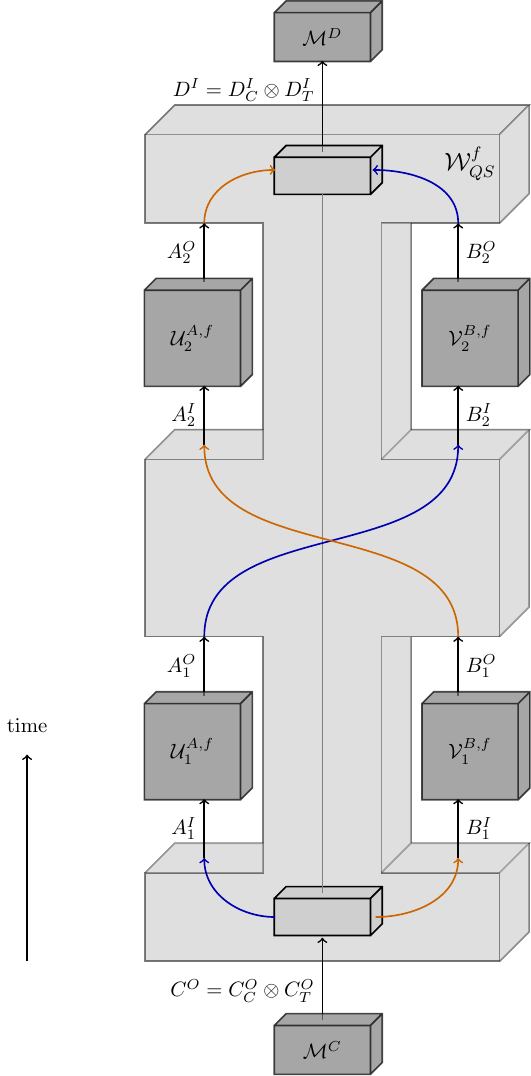}}
    \caption{(a) The coarse-grained process matrix/QC-QC of the quantum switch, which admits an indefinite causal order. Here the target takes the blue path (Alice before Bob) when the control is $\ket{0}$ and the orange path (Bob before Alice) when the control is $\ket{1}$ and a superposition of the two in general. (b) A causal box model of the quantum switch, which corresponds to a fine-grained description that is associated with a definite and acyclic causal structure. The non-vacuum system takes the blue path (and the vacuum takes the orange path) when the control is $\ket{0}$ and vice versa when the control is $\ket{1}$. This figure is reproduced from \cite{Vilasini_2022}. In this work, we have shown that every QC-QC admits a causal box extension with such an acyclic fine-grained causal structure, providing a similar physical interpretation to all QC-QCs in the context of a background space-time.}
    \label{fig:QS}
\end{figure}

\subsection{All QC-QCs admit an acyclic fine-grained causal structure}\label{sec:fine}
We sketch the proof of \cref{theoremfinegraining}. The full proof can be found in \cref{sec:fineproof}.

\begin{enumerate}
    \item \emph{QC-QC to causal box mapping} We consider the image of our mapping from QC-QCs to causal boxes, given by the isometric extension (\cref{sec:another}) and recall that every protocol described by a QC-QC acting on some local operations can be described as a network of CPTP maps formed through parallel or loop composition, as defined and proven in \cite{Vilasini_2022}.
    \item \emph{Encoding-decoding scheme} We then construct an explicit encoding-decoding scheme which relates the causal box in the image of the above mapping to the original QC-QC as per \cref{eq: finegraining} where the causal box plays the role of the fine-grained map $\C{M}^f$ and the QC-QC plays the role of the coarse-grained map $\C{M}$.
    \item \emph{Preserving signalling relations} We show that the signalling relations of the QC-QC are preserved in the causal box, as a consequence of the properties that a causal box must satisfy to be regarded as an extension of the QC-QC satisfying the spatiotemporal closed labs assumption (according to \cref{mainth}).
    \item \emph{Extending the argument to the full network} 
    We also take into account the local operations and show that the whole network formed by composing the causal box modelling the QC-QC with corresponding local operations is a fine-graining of the original QC-QC network. 
    \item \emph{Acyclic fine-grained causal structure} The fact that the fine-grained network is completely described by valid causal boxes with sequence representation in terms of a totally ordered set $\C{T}$ guarantees that its information-theoretic causal structure is acyclic and compatible with the time ordering given by $\C{T}$.

\end{enumerate}
 
We have shown that the causal box realises the action of the QC-QC in a spatiotemporal setting while respecting the spatiotemporal closed labs assumptions and relativistic causality. The causal box preserves the signalling structure of the QC-QC as well as other relevant information captured by \cref{eq: finegraining}. However, \cref{theoremfinegraining} highlights that this does not imply that we have genuinely implemented the indefinite causal structure of the QC-QC. Indefinite causal structures in the sense of the process matrix framework can be viewed as special cases of cyclic information-theoretic causal structures and cyclic causal structures can become acyclic under fine-graining \cite{Vilasini_2022}. This accords with the physical interpretation of cyclic causal structures describing feedback, such as mutual causal influence between demand and price, where the fine-grained description accounting for spacetime degrees of freedom is acyclic. The causal boxes in the image of our mapping from QC-QCs provide an analogously clear and physical interpretation to the causal structure of the QC-QC, at least in the context of their realisations in a background spacetime. We already discussed the physical relevance of this result in \cref{sec:resultsoverview}, and the intuition is illustrated in \cref{fig:QS} for the quantum switch process which is the simplest example of a QC-QC.

\section{Insights on composability}\label{sec:composability}

Composability is central to our understanding of the physical world, as it helps us comprehend how complex structures are built up from constituent parts. Moreover, we expect the composition of any two physical experiments to be yet another physical experiment, i.e. we expect physical processes to be closed under composition. This is indeed the case for causal boxes. They describe information processing protocols in spacetime such that we can compose two protocols described by causal boxes to obtain a new protocol that can also be described as a causal box. 

On the other hand, previous work  \cite{Gu_rin_2019} suggests that composition in the process matrix framework is not so straightforward, with no-go results pointing to difficulties in composing even fixed order processes. In light of this, further works have proposed consistent rules for composition of higher order processes \cite{Kissinger_2019} as well as necessary and sufficient conditions for identifying when the composition of two processes is valid \cite{Jia_2018}. This prior literature focuses on abstract information-theoretic approaches, and it is not immediate how this discussion relates to the observed composability of physical experiments in spacetime, where both the information and spatiotemporal aspects play a role. Here, we address this relation in light of our results.



We start by discussing an example from \cite{Gu_rin_2019, Jia_2018}, which was used to highlight the difficulties in consistently composing process matrices. Consider two bipartite processes, $W_1$ over the parties $A_1$ and $B_1$, and $W_2$ over the parties $A_2$ and $B_2$. Suppose $W_1$ is a fixed order process with the order $A_1\prec B_1$ and $W_2$ is a fixed order process with the opposite order, $B_2\prec A_2$. The parallel composition of the two processes, namely $W_1\otimes W_2$, is no longer a valid bipartite process over the parties $A_1A_2$ and $B_1B_2$. This is because $W_1\otimes W_2$ involves a channel in both directions between the parties $A_1A_2$ and $B_1B_2$ which corresponds to a paradoxical causal loop, as illustrated and explained in \cref{fig:pmcomp}.

\begin{figure}\label{fig:nocomp}
    \centering
    \includegraphics[]{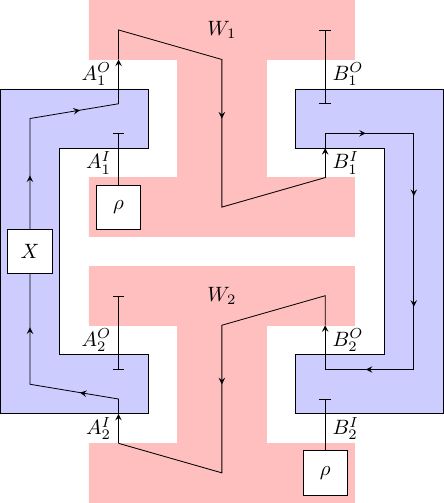}
    \caption{Composition of two process matrices $W_1$ and $W_2$. $W_1$ sends a state to Alice and then sends her output to Bob, while $W_2$ does the same but with the order of Alice and Bob reversed. The composition then includes a loop and if Alice applies the Pauli-$X$ unitary while Bob applies the identity the generalised Born rule yields a probability of 0, even though it should have yielded 1 if this composition were well defined.}
    \label{fig:pmcomp}
\end{figure}

However, notice that $W_1\otimes W_2$ is still a valid four-party process over the parties $A_1$, $A_2$, $B_1$ and $B_2$. The difference being that, when we treat it as a bipartite process, we allow for arbitrary operations $\mathcal{M}_{A_1A_2}: \C{L}(\C{H}^{A_I^1 A_I^2}) \rightarrow \C{L}(\C{H}^{A_O^1 A_O^2})$ and $\mathcal{M}_{B_1 B_2}: \C{L}(\C{H}^{B_I^1 B_I^2}) \rightarrow \C{L}(\C{H}^{B_O^1 B_O^2})$ and when we treat it as a four-party process, only product operations are allowed, $\mathcal{M}_{A_1A_2}=\mathcal{M}_{A_1}\otimes \mathcal{M}_{A_2}$ and $\mathcal{M}_{B_1 B_2}=\mathcal{M}_{B_1}\otimes \mathcal{M}_{B_2}$. Generally, composition of processes allowing only product local operations and composition allowing general local operations acting jointly on in/outputs from the two processes can be formulated as two distinct types of composition \cite{Kissinger_2019}. In particular, if we compose two $N$-party processes $W_1$ and $W_2$ always through the former type of composition (parallel composition), then $W_1\otimes W_2$ is always a valid process, but over $2N$ parties.


The solution to only use product operations aligns with the motivation of the process framework where each party is considered to be in a closed and isolated lab which can communicate to the outside world only through the in/output interface that they have with the process matrix. However, in practice, proposed realisations of process matrices entail table top experiments where the different parties do not physically correspond to closed and isolated labs, and relativistic causality does permit non-product operations in these scenarios. It remains unclear what physical principles impose the restriction to parallel composition in physical experiments in spacetime. Moreover, notice that even in the abstract framework, $W_1\otimes W_1$ and $W_2\otimes W_2$ from the previous example are both valid bipartite processes on the joint Alice and Bob systems, and are consistent under composition with arbitrary joint operations of the two Alices and two Bobs. Therefore, the restriction to parallel composition for all processes, is not necessary for consistency (see also \cite{Jia_2018}). 



We now discuss the insights provided by our work, regarding the relation to the spatiotemporal picture. Our results which map QC-QCs to causal boxes shows how composability of physical experiments is recovered, without restricting to parallel composition by fiat. We illustrate this for the above example. We can map the fixed order processes (and hence QC-QCs) $W_1$ and $W_2$ to the causal box picture (which adds a (space-)time stamp to each message) and consider general physical operations that $A_1A_2$ and $B_1B_2$ can perform. The problematic composition in \cref{fig:pmcomp} is due to the loop from $A^O_1$ to $B^I_1$ to $B^{O}_2$ to $A^{I}_2$ to $A^O_1$ again. In the causal box description, we would have messages with (space-)time labels going around the wires defining this loop. W.l.o.g. for this argument, we can start on any of the systems in the loop, say $A^O_1$, and consider a message, say some state $\ket{0}$, with a time stamp $t_1$ sent by Alice. Due to relativistic causality, each channel in the loop can only send the message forward in time, such that the message reaches $B^I_1$ only at some $t_2>t_1$, then $B^{O}_2$ at some $t_3>t_2$, $A^{I}_2$ at some $t_4>t_1$ (so far all are identity channels so the message is still $\ket{0}$) and finally back to the wire $A^O_1$ at time $t_5>t_4$ after the bit flip, which gives $\ket{1,t_5}^{A^O}$, and the cycle can continue indefinitely.\footnote{Note that if the timestamps don't satisfy the right ordering, the loop cannot be formed in the first place.} There is no contradiction because the mismatched messages on $A^O_1$ are associated with distinct times: $\ket{0,t_1}^{A^O_1}$ and $\ket{1,t_5}^{A^O_1}$ with $t_1<t_5$, and this will always be the case even if we continue the loop.

We thus see that the composition of causal boxes may also lead to cycles when we compose with non-product operations on $A_1A_2$ and $B_1B_2$, but these are well defined due to relativistic causality (indeed, such loops are only a coarse-grained representation of a fine-grained acyclic causal structure, as corroborated by \cref{sec:finegraining}). 
Generally, such compositions can lead to scenarios with superpositions of receiving two messages (e.g., $\ket{0,t_1}\ket{1,t_5}$) and one message (e.g., $\ket{0,t_1}\ket{\Omega,t_5}$). Fully understanding the composability of experimental realisations of QC-QCs therefore requires the use of
Fock spaces (or a similar space that models sending of multiple messages), which we have carefully accounted for in our results and mapping of QC-QCs to causal boxes.

Observe that if such a loop occurs in the causal box picture, the associated causal box protocol would no longer satisfy the set-up assumptions of the process framework which requires in particular that each party acts exactly once (which we referred to as the spatiotemporal closed lab assumption). Therefore, while causal boxes (like the set of physical experiments) are closed under arbitrary composition, the subset of causal boxes which model the set-up assumptions of the process framework are not closed under composition \cite{Vilasini_2020}. The image of our mapping from QC-QCs to causal boxes lies within this subset as we have shown. This highlights that whenever the composition of the causal boxes in the image of two QC-QCs yields a causal box that violates the set-up assumptions (e.g., by allowing an agent who acts on a $d$-dimensional system in the process picture to receive or send multiple $d$-dimensional messages), then the corresponding composition of the QC-QCs would be invalid in the QC-QC framework. In other words, invalid QC-QC compositions still map to physical experiments, but these will be experiments that violate the set up assumptions of the QC-QC/process matrix framework. This highlights the important role played by the interplay of relativistic causality, Fock spaces and the set up assumptions of the process framework in recovering the composability of physical experiments.

\section{Conclusions and outlook}
\label{sec:conclusion}

We compared two quantum causality frameworks— QC-QCs \cite{Wechs_2021}, an abstract information-theoretic model, and causal boxes \cite{Portmann_2017}, which involve both information and spatiotemporal structures. Our findings reveal that every QC-QC can be mapped to a causal box, adhering to the process/QC-QC framework's set-up assumptions (such as each party acting once) and operating on a Fock space while replicating the QC-QC's behaviour in the relevant subspace as defined by these assumptions. The resulting causal box exhibits a definite, acyclic fine-grained causal structure consistent with spacetime, even if the QC-QC displays an indefinite causal structure (the physical interpretation of which is discussed in \cref{sec:resultsoverview}). We established multiple such mappings, illustrating that the same abstract protocol (described by a QC-QC) may have multiple distinct physical realisations (described by causal boxes). Moreover, our results demonstrate that the failure to consistently compose two processes in the abstract framework does not lead to any paradoxes in the physical world (as we should rightfully expect), but translates to a physical protocol that violates the set up assumptions of the process framework, by producing in/outputs in the larger Fock space with multiple messages. Our results offer concrete reconciliations of the notions of causality and composability between the information-theoretic and spatiotemporal viewpoints, providing a foundation for further explorations of this interface.

There is a large volume of work (for example, \cite{Chiribella_2012, Colnaghi_2012, Ara_jo_2014}) suggesting advantages of indefinite causal order processes over standard quantum circuits for information processing tasks. However, there are important distinctions between the abstract picture and the fine-grained spatiotemporal picture as highlighted by our work. More specifically, a QC-QC is a supermap that accepts as inputs arbitrary quantum operations as ``black boxes'' while in the spacetime realisation, adherence to the set-up assumptions of acting once on a non-vacuum system, constrains the local operations to leave the vacuum state invariant, making them ``gray boxes''. This distinction has also been crucial in previous contexts where tasks that are theoretically impossible with black box unitaries (such as coherently controlling them) become experimentally feasible with ``gray box'' unitaries, where the action even on a one-dimensional subspace is known \cite{Friis_2014,Zhou_2011}. Together with previous works such as \cite{Vilasini_2022, vanrietvelde2021universal}, our results can inform a systematic investigation of whether and how proposed theoretical advantages from higher order processes translate into practical quantum experiments in spacetime, while providing more fine-grained tools for this purpose.

The original use of symmetric Fock spaces in the causal box framework aims to encapsulate the fact that all ordering information is contained in the position labels \cite{Portmann_2017}. In the causal box extension of the photonic realisation of the Grenoble process in \cref{sec:switchtopb} and another similar extension found for arbitrary QC-QCs in \cref{sec:controlintarget}, the symmetrization aligns with the perspective of viewing the QC-QC as a photonic (i.e. bosonic) circuit. There is scope for further inquiry into whether causal boxes could be realised using antisymmetric Fock spaces to describe fermionic circuits. The use of Fock spaces in the causal box framework also motivates its utility in studying quantum protocols in a quantum field theory (QFT) setting. In the context of applying the process formalism in QFT, an issue that was noted in \cite{Faleiro2020} is that some QFT processes would require a variable number of operations (e.g. radiative beta decay has a variable number of outgoing photons). As the causal box framework does not assume that each agent acts exactly once and allows superpositions of number of messages, it can be a promising candidate for addressing such questions.

This work focused on the mapping from QC-QCs to causal boxes, which lays the foundation for a follow-up work where we consider mappings in the reverse direction. Specifically, we are interested in causal box protocols which satisfy the set up assumptions of the process framework. These assumptions are operationally relevant also in the context of so-called causal inequalities which are known to be violated by certain theoretical processes, but interestingly, not by QC-QC type processes. If these assumptions are not respected in an experiment, this can open up loopholes which enable the trivial violation of causal inequalities, akin to violating Bell inequalities using classical communication \cite{Vilasini_2019}. Recalling the result of \cite{Vilasini_2022} that causal boxes describe the most general finite quantum protocols in a fixed spacetime, such a mapping will enable a characterisation of the largest set of processes realisable in a fixed spacetime in accordance with the set-up assumptions. We note that neither \cite{Vilasini_2022} nor our current or follow-up work implies that process matrices should be a subset of causal boxes. Indeed, we expect neither framework to be a subset of the other. On the one hand, there are causal boxes which violate the set-up assumptions. They thus trivially cannot correspond to an $N$-party process (although, they could be understood as an $M$-party process for some $M > N$ by splitting agents such that the set-up assumptions are respected). On the other hand, we expect that there exist at least some processes which cannot be realised in a fixed background spacetime. Consequently, these cannot be modelled by a causal box.

A related framework that accounts for vacuum states is that of routed quantum circuits. The concept of routes gives a natural and general way to model constraints on how unitaries act within orthogonal subspaces (such as vacuum vs. non-vacuum spaces), but the formalism does not explicitly consider relativistic causality in spacetime \cite{Ormrod_2023, vanrietvelde2021routed,vanrietvelde2021universal,vanrietvelde2022consistent}. An interesting open conjecture in the formalism states that processes without weak loops (which are cyclic causal influences relative to a more refined notion of causal structure for processes proposed in \cite{vanrietvelde2022consistent}) are precisely those that do not violate causal inequalities. Our results may be seen as further evidence towards this conjecture. QC-QCs which do not violate causal inequalities have an acyclic fine-grained causal structure and are thus realisable in an acyclic background spacetime (in the sense of \cite{Vilasini_2022}) while our follow-up work mentioned above, suggests a tighter relation between QC-QCs and the set of processes realisable in an acyclic spacetime. This acyclicity bears similarities to the absence of weak loops (in the sense of \cite{vanrietvelde2022consistent}). Further work is needed to make precise the link between causal boxes and routed circuits. This can provide a more comprehensive set of tools towards the characterisation of different classes of quantum causal structures, their realisability in different physical regimes/spacetimes, as well as the resource responsible for any potential information-processing advantages that one class of processes may provide over another. 

Finally, we note that there is potential for generalising our results beyond fixed background spacetimes, thanks to the general modelling of spacetime as a partially ordered set. For quantum protocols in fixed spacetimes (with classical reference frames/agents) this partial order as well as the localisation of operational events in spacetime is reference frame independent i.e. all agents will agree on whether or not a quantum message is localised in spacetime. This may no longer be the case when quantum reference frames are used, and it would be interesting to consider a generalisation of the causal box approach and our results by allowing the partial order to be reference frame or agent dependent. This can inform a better understanding of the interplay of operational and spatiotemporal notions of causality and the role of composability in quantum gravitational scenarios, where an absolute background partial order may not exist. By connecting two distinct frameworks featuring different causality notions, this work provides a step forward towards the larger goal of build a more unifying understanding of quantum and relativistic causality.

\bigskip

{\it Acknowledgements} We are thankful to Augustin Vanrietvelde for useful comments regarding the composability of higher order processes, to Renato Renner for insightful discussions on Fock spaces and to \v{C}aslav Brukner for interesting discussions on the notion of events and fine-grained causal structures. The authors also thank the anonymous referees for QPL 2022 and TQC 2023 for their reviews, which have helped to improve this work. MS acknowledges support from National Science Centre, Poland, grant number 2021/41/B/ST2/03149. MS thanks the Quantum Information Theory group at the Institute of Theoretical Physics, ETH Zurich, for financial support to present the results of this work at QPL 2022, Oxford, UK. VV's research has been supported by an ETH Postdoctoral Fellowship. VV also acknowledges support from the ETH Zurich Quantum Center, the Swiss National Science Foundation via project No.\ 200021\_188541 and the QuantERA programme via project No.\ 20QT21\_187724.

\newpage

\printbibliography

\appendix

\section{Mathematical tools}\label{sec:tools}


\subsection{Choi isomorphism}\label{sec:choi}

Frequently, it is more convenient to express channels as matrices. For this purpose, we use the Choi isomorphism. 

\begin{defi}[Choi matrix \cite{CHOI1975285}]
Given some Hilbert spaces $\C{H}^X$ with some basis $\{\ket{i}^X \}_i$ and $\C{H}^Y$ with some basis $\{\ket{i}^Y \}_i$, let $\C{M}: \C{L}(\C{H}^X) \rightarrow \C{L}(\C{H}^Y)$ be some linear map. The Choi matrix of $\C{M}$ is defined as

\begin{equation}
    M \coloneqq \C{I}^X \otimes \C{M} \dket{\mathbb{1}} \dbra{\mathbb{1}}^{XX}
\end{equation}

where $\dket{\mathbb{1}}^{XX} = \sum_i \ket{i}^X \ket{i}^X$ denotes the maximally entangled state.

\end{defi}

If $\C{M}$ is pure, i.e. there exists a linear operator $V: \C{H}^X \rightarrow \C{H}^Y$ such that $\C{M}(\rho) = V \rho V^\dagger$ for any $\rho \in \C{L}(\C{H}^X)$, we can instead work with the simpler Choi vector.

\begin{defi}[Choi vector  \cite{CHOI1975285, Royer1991}]
Given some Hilbert spaces $\C{H}^X$ with some basis $\{\ket{i}^X \}_i$ and $\C{H}^Y$ with some basis $\{\ket{i}^Y \}_i$, let $V: \C{H}^X \rightarrow \C{H}^Y$ be a linear operator. The Choi vector of $V$ is defined as 

\begin{equation}
    \dket{V} \coloneqq \mathbb{1}^X \otimes V \dket{\mathbb{1}}^{XX}.
\end{equation}
\end{defi}

Note that the maximally entangled state is the Choi vector of the identity, which justifies our use of the notation $\dket{\mathbb{1}}^{XX}$ to represent this state.

\subsection{Link product}\label{sec:link}

The link product \cite{Chiribella_2008, Chiribella_2009} is a useful tool which allows us to calculate the Choi matrix/vector of a composition of maps from the Choi matrices/vectors of the maps themselves.

\begin{defi}[Link product for vectors \cite{Wechs_2021}]\label{def:linkvec}
Let $\C{H}^X, \C{H}^Y, \C{H}^Z$ be non-overlapping Hilbert spaces. For two vectors $\ket{a} \in \C{H}^{XY}, \ket{b} \in \C{H}^{YZ}$ we define their link product

\begin{equation}
    \ket{a} * \ket{b} \coloneqq \sum_i \ket{a_i} \otimes \ket{b_i}
\end{equation}

where $\ket{a_i} = (\mathbb{1}^X \otimes \bra{i}^Y) \ket{a}$ and $\ket{b_i} = (\bra{i}^Y \otimes \mathbb{1}^Z) \ket{b}$.

\end{defi}

Let us make a few remarks about the above definition. If the shared space $\C{H}^Y$ is trivial, the link product simplifies to the tensor product $\ket{a}^X * \ket{b}^Z = \ket{a}^X \otimes \ket{b}^Z$. If the shared space $\C{H}^Y$ is the only non-trivial space, the link product simplifies to the inner product, $\ket{a}^Y * \ket{b}^Y = \braket{\bar{a}|b}$ where the entries of $\ket{\bar{a}}$ are the complex conjugated entries of $\ket{a}$ \cite{Wechs_2021}. 

\begin{defi}[Link product for matrices \cite{Chiribella_2008, Chiribella_2009}]\label{def:linkmat}
Let $\C{H}^X, \C{H}^Y, \C{H}^Z$ be non-overlapping Hilbert spaces. For two operators $A \in \C{L}(\C{H}^{XY}), B \in \C{L}(\C{H}^{YZ})$ we define their link product

\begin{gather}
\begin{aligned}
    A * B &\coloneqq \tr_Y ((A^{T_Y} \otimes \mathbb{1}^Z)(\mathbb{1}^X \otimes B)) \\
    &= \sum_{ij} A_{ij} \otimes B_{ij}
\end{aligned}
\end{gather}
where $T_Y$ denotes the partial transpose on $\C{H}^Y$ and $A_{ij} = (\mathbb{1}^X \otimes \bra{i}^Y)A(\mathbb{1}^X \otimes \ket{j}^Y), B_{ij} = (\bra{i}^Y \otimes \mathbb{1}^Z )A(\ket{j}^Y \otimes \mathbb{1}^Z)$.
\end{defi}

Once again, let us consider how this simplifies for trivial Hilbert spaces. If $\C{H}^Y$ is trivial, we obtain the tensor product $A*B = A \otimes B$. If $\C{H}^{XZ}$ is trivial, the link product becomes the trace $A*B = \tr(A^T B)$ \cite{Chiribella_2008, Chiribella_2009}. 

The link product is commutative for both vectors and operators (up to reordering of the resulting tensor product). The $n$-fold link product over vectors $\ket{a_k} \in \C{H}^{A_k}$ or operators $M_k \in \C{L}(\C{H}^{A_k})$ is also associative provided that each Hilbert space appears at most twice, i.e. $\C{H}^{A_k} \cap \C{H}^{A_l} \cap \C{H}^{A_m} = \emptyset$ for all $k \neq l \neq m \neq k$. The link product of hermitian and/or positive semidefinite maps is hermitian and/or positive semidefinite again \cite{Chiribella_2008, Chiribella_2009}.

Let us now discuss the reasons for introducing the link product. We will often want to calculate the Choi vector of a linear operator of the form $V = (\mathbb{1}^{X'} \otimes V_2)(V_1 \otimes \mathbb{1}^{Z})$, where $V_1: \C{H}^{X} \rightarrow \C{H}^{X'Y}$ and $V_2: \C{H}^{YZ} \rightarrow \C{H}^{Z'}$. It then holds that \cite{Wechs_2021}

\begin{equation}
    \dket{V} = \dket{V_1} * \dket{V_2}.
\end{equation}

A similar statement can be made for the Choi matrix of a map $\C{M} = (\C{I}^{X'} \otimes \C{M}_2)(\C{M}_1 \otimes \C{I}^{Z})$ with $\C{M}_1: \C{L}(\C{H}^X) \rightarrow \C{L}(\C{H}^{X'Y})$ and $\C{M}_2: \C{L}(\C{H}^{YZ}) \rightarrow \C{L}(\C{H}^{Z'})$. In this case we have \cite{Chiribella_2008, Chiribella_2009}

\begin{equation}
    M = M_1 * M_2.
\end{equation}

\begin{figure}
    \centering
    \includegraphics[scale=1.5]{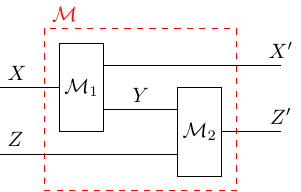}
    \caption{Diagrammatic representation of the action of the link product. The link product turns the circuits defined by the two maps $\C{M}_1, \C{M}_2$ into a new circuit $\C{M}$ from $\C{H}^{XZ}$ to $\C{H}^{X'Z'}$. The wire $\C{H}^Y$ becomes an internal wire.}
    \label{fig:link_circuit}
\end{figure}

Diagrammatically, the action of the link product can be represented by \cref{fig:link_circuit}. The circuits corresponding to the two maps are linked by connecting an output wire of one to the input wire of the other. These two wires correspond to the shared space $\C{H}^Y$.

\subsection{Symmetric tensor product and inner product in Fock spaces}\label{sec:fockinner}

An $n$-message state in a Fock space $\C{F}^\C{T}_A$ can be written as 

\begin{equation}\label{eq:symmprod}
    \bigodot_{k=1}^n \ket{\psi_k, t_k} = \frac{1}{\sqrt{n!}} \sum_{\pi \in S^n} \bigotimes_{k=1}^n \ket{\psi_{\pi(k)}, t_{\pi(k)}}
\end{equation}

(or as a linear combination of such states) where $S^n$ is the symmetric group over $n$ elements. Physically, this corresponds to $n$ messages in the states $\psi_1$,..., $\psi_n$ which arrive at times $t_1$,..., $t_n$ respectively on a wire.

The inner product of such states is given by 

\begin{gather}\label{eq:fock_inner_n}
\begin{aligned}
    \bigodot_{k=1}^n \bra{\psi_k, t_k} \bigodot_{l=1}^n \ket{\phi_l, t'_l} &= \frac{1}{\sqrt{n!}} \frac{1}{\sqrt{n!}} \sum_{\pi \in S^n} \bigotimes_{k=1}^n \bra{\psi_{\pi(k)}, t_{\pi(k)}} \sum_{\pi' \in S^n} \bigotimes_{l=1}^n \ket{\phi_{\pi'(l)}, t'_{\pi'(l)}} \\
    &= \frac{1}{n!} \sum_{\pi, \pi' \in S^n} \prod_{k=1}^n \braket{\psi_{\pi(k)}, t_{\pi(k)}|\phi_{\pi'(k)}, t'_{\pi'(k)}} \\
    &= \sum_{\pi \in S^n} \prod_{k=1}^n \braket{\psi_k, t_k|\phi_{\pi(k)}, t'_{\pi(k)}} \\
    &= \sum_{\pi \in S^n} \prod_{k=1}^n \braket{\psi_k|\phi_{\pi(k)}} \delta_{t_k, t'_{\pi(k)}}.
\end{aligned}
\end{gather}

For arbitrary states in the Fock, we have the following definition.

\begin{defi}[Inner product in Fock spaces \cite{bhatia1996matrix}]
Let $\ket{\psi} = a \ket{\Omega} + \sum_{n=1}^\infty \bigodot_{k=1}^n \ket{\psi_{n,k}, t_{n,k}}, \ket{\phi} = b \ket{\Omega} + \sum_{n=1}^\infty \bigodot_{k=1}^n \ket{\phi_{n,k}, t'_{n,k}} \in \C{F}(\C{H}^A \otimes l^2(\C{T}))$. Then their inner product is given by 

\begin{equation}\label{eq:fock_inner}
    \braket{\psi|\phi} = \overline{a} b + \sum_{n=1}^\infty \sum_{\pi \in S^n} \prod_{k=1}^n \braket{\psi_{n,k}|\phi_{n, \pi(k)}} \delta_{t_{n,k}, t'_{n,\pi(k)}}.
\end{equation}

\end{defi}

\begin{example}[Norm of a Fock space state]
Consider a wire which carries a superposition of no messages and two messages at time $t$, one in state $\ket{0}$ and the other in state $\ket{1}$, such that the probability to obtain either outcome (zero-message or two-messages) when measuring the number of messages is equal. The overall state on the wire can be written as $\ket{\psi} = \frac{1}{\sqrt{2}} \ket{\Omega} + \frac{1}{\sqrt{2}} \ket{0, t} \odot \ket{1, t} = \frac{1}{\sqrt{2}} \ket{\Omega} + \frac{1}{2} (\ket{0, t} \otimes \ket{1, t} + \ket{1, t} \otimes \ket{0, t})$. We can check that this state is normalised either by using the form of $\ket{\psi}$ where we expanded the symmetric tensor product

\begin{gather}
\begin{aligned}
    \braket{\psi|\psi} &= (\frac{1}{\sqrt{2}} \bra{\Omega} + \frac{1}{2} (\bra{0, t} \otimes \bra{1, t} + \bra{1, t} \otimes \bra{0, t}))(\frac{1}{\sqrt{2}} \ket{\Omega} + \frac{1}{2} (\ket{0, t} \otimes \ket{1, t} + \ket{1, t} \otimes \ket{0, t})) \\
    &= \frac{1}{2} \braket{\Omega|\Omega} + \frac{1}{4} (\braket{0, t|0,t} \braket{1,t|1,t} + \braket{1, t|0,t} \braket{0,t|1,t} + \braket{0, t|1,t} \braket{1,t|0,t} + \braket{1, t|1,t} \braket{0,t|0,t}) \\
    &= \frac{1}{2} + \frac{1}{4} 2 = 1
\end{aligned}
\end{gather}
or by using \cref{eq:fock_inner} with $a = b = \frac{1}{\sqrt{2}}, \ket{\psi_{1, 1}} = \ket{\phi_{1, 1}} = \frac{1}{\sqrt{2}} \ket{0}$ and $\ket{\psi_{1, 2}} = \ket{\phi_{1, 2}} = \frac{1}{\sqrt{2}} \ket{1}$

\begin{gather}
\begin{aligned}
    \braket{\psi|\psi} &= \frac{1}{2} \delta_{t, t}+ \frac{1}{2} (\braket{0|0} \braket{1|1} \delta_{t,t} + \braket{0|1} \braket{1|0} \delta_{t,t} \\
    &= \frac{1}{2}  + \frac{1}{2} = 1.
\end{aligned}
\end{gather}

One can also check that the probabilities to find the state to be the vacuum $\ket{\Omega}$ or $\ket{0,t} \odot \ket{1,t}$ are $\frac{1}{2}$ each as required by taking the square of the inner product between these states and $\ket{\psi}$.

\end{example}

\begin{example}[Fock space inner product of a multi-system state]
Consider two wires $A$ and $B$. Let the state on wire $A$ be $\ket{0, t}^A \odot \ket{1, t}^A$ and the state on wire $B$ be $\ket{0, t}^B \odot \ket{0, t'}^B$ with $t \neq t'$. When writing the overall state it does not matter whether we write the state on $A$ or $B$ first, $\ket{0, t}^A \odot \ket{1, t}^A \otimes \ket{0, t}^B \odot \ket{0, t'}^B = \ket{0, t}^B \odot \ket{0, t'}^B \otimes \ket{0, t}^A \odot \ket{1, t}^A$. We just have to make sure that when we take the inner product of such multi-wire states, we separately calculate the inner product of the states on each wire and then take the product over the wires. For example, consider the inner product of the above state with the state where the wires are exchanged

\begin{gather}
\begin{aligned}
    (\bra{0, t}^A& \odot \bra{1, t}^A \otimes \bra{0, t}^B \odot \bra{0, t'}^B)(\ket{0, t}^B \odot \ket{1, t}^B \otimes \ket{0, t}^A \odot \ket{0, t'}^A) \\
    &= (\bra{0, t}^A \odot \bra{1, t}^A)(\ket{0, t}^A \odot \ket{0, t'}^A)(\bra{0, t}^B \odot \bra{0, t'}^B)(\ket{0, t}^B \odot \ket{1, t}^B) = 0
\end{aligned}
\end{gather}
and not 1 as one might expect if one just looked at the order and not the system labels. It is thus important to be aware of the wire label of each message when working with states over multiple wires. 

\end{example}

\begin{remark}[Different conventions for Fock space states]
We note that there are different possible conventions for the symmetric tensor product $\odot$. The one we used in this section is simple to write in terms of the ordinary tensor product (cf. \cref{eq:symmprod}) but the norm of $\bigodot_{k=1}^n \ket{\psi_k, t_k}$ differs in general from the norm of $\bigotimes_{k=1}^n \ket{\psi_k, t_k}$. In particular, a state like $\ket{0} \odot \ket{0} = \frac{1}{\sqrt{2}} (\ket{0} \otimes \ket{0} + \ket{0} \otimes \ket{0}) = \sqrt{2} \ket{0} \otimes \ket{0}$ is not normalised. For applications where normalisation is important, like outcome probabilities, one needs to keep this in mind and properly normalise states. If one is in particular interested in such applications, one may prefer to choose the following convention,
\begin{equation}
    \bigodot_{k=1}^n \ket{\psi_k, t_k} = \frac{\C{N}_0}{\C{N}_s} \sum_{\pi \in S^n} \bigotimes_{k=1}^n \ket{\psi_{\pi(k)}, t_{\pi(k)}}
\end{equation}
where $\C{N}_0$ is the norm of $\bigotimes_{k=1}^n \ket{\psi_{k}, t_{k}}$ and $\C{N}_s$ is the norm of $\sum_{\pi \in S^n} \bigotimes_{k=1}^n \ket{\psi_{\pi(k)}, t_{\pi(k)}}$. With this convention, the norm of the the unsymmetrised and the symmetrised state are the same.



Note that which convention one chooses is essentially just a matter of notation. In particular, it has no effect on the Fock space and its elements, which are what is actually mathematically meaningful. In fact, the main part of this work, including the results in the main part and proofs thereof, is actually completely agnostic to the chosen convention. Only in \cref{sec:controlintarget} we assume the convention of \cref{eq:symmprod} since it makes the results easier to write down and prove. Of course, this is only to simplify equations and one could also use a different convention. 

\end{remark}

\begin{remark}[Inner product and the permanent of a matrix \cite{bhatia1996matrix}]
The inner product in \cref{eq:fock_inner_n} can also equal to the permanent of the matrix $A$ with elements $A_{ij}:=\braket{\psi_i, t_i|\phi_j, t_j}$.
\end{remark}

\subsection{Isometries}\label{sec:isometries}

We briefly review some useful properties of isometries in this section.

\begin{defi}[Isometries]
Let $W^I$ and $W^O$ be inner product spaces. Let $V: W^I \rightarrow W^O$ be a linear map. We call $V$ an isometry if for all $v, w \in W^I$, it holds that $v^\dagger V^\dagger V w = v^\dagger w$, i.e. $V$ conserves the inner product.
\end{defi}

It can be difficult to check that a map $V$ conserves the inner product. The following three equivalent conditions can be used instead.

\begin{lemma}[Characterization of isometries]
\label{lemma:isochar}
Let $V: W^I \rightarrow W^O$ be a linear map. The following four conditions are equivalent.

\begin{enumerate}
    \item $V$ is an isometry.
    \item $V$ conserves the norm, i.e. $w^\dagger V^\dagger V w = w^\dagger w, \forall w \in W^I$.
    \item If $w_1,...,w_N$ is a basis of $W^I$, then $w_i^\dagger V^\dagger V w_j = w_i^\dagger w_j, \forall i,j = 1,...,N$.
    \item If $W^I = \bigoplus_{i=1}^M W^I_i$ where $W^I_1,...,W^I_M$ are orthogonal subspaces that are mapped to orthogonal subspaces by $V$, then the restriction of $V$ to each $W^I_i$, $V|_{W^I_i}$ is an isometry.
\end{enumerate}

\end{lemma}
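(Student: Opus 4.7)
The plan is to establish the equivalences as a small cycle by proving pairwise equivalences $(1)\Leftrightarrow(2)$, $(1)\Leftrightarrow(3)$, $(1)\Leftrightarrow(4)$, since each of these pairs is essentially an exercise in sesquilinearity plus one ingredient specific to that item.

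For $(1)\Leftrightarrow(2)$, the direction $(1)\Rightarrow(2)$ is immediate by setting $v=w$ in the isometry condition. For the converse, I would invoke the polarization identity for complex inner product spaces, which expresses $v^\dagger w$ as a complex linear combination of norms $\|v\pm w\|^2$ and $\|v\pm iw\|^2$. Applying $V^\dagger V$ inside each norm and using norm preservation yields the same expression evaluated on $v, w$, giving $v^\dagger V^\dagger V w = v^\dagger w$. For $(1)\Leftrightarrow(3)$, one direction is trivial. For the converse, I would take arbitrary $v=\sum_i a_i w_i$ and $w=\sum_j b_j w_j$ in $W^I$, expand $v^\dagger V^\dagger V w$ using sesquilinearity, pull the coefficients $\overline{a_i} b_j$ outside, and substitute the hypothesis $w_i^\dagger V^\dagger V w_j = w_i^\dagger w_j$, which reassembles $v^\dagger w$.

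For $(1)\Rightarrow(4)$, the restriction of an isometry to any subspace is again an isometry, because the inner product preservation condition on $W^I$ restricts directly to inner products of vectors in $W^I_i$ (no use is even made of the direct sum structure here). For $(4)\Rightarrow(1)$, I would take $v,w\in W^I$, decompose them as $v=\sum_i v_i$ and $w=\sum_j w_j$ with $v_i, w_j\in W^I_i$, and compute
\begin{equation}
v^\dagger V^\dagger V w \;=\; \sum_{i,j}(Vv_i)^\dagger (Vw_j) \;=\; \sum_i (Vv_i)^\dagger (Vw_i) \;=\; \sum_i v_i^\dagger w_i \;=\; v^\dagger w,
\end{equation}
where the second equality uses that the images $V(W^I_i)$ are pairwise orthogonal, the third uses that each $V|_{W^I_i}$ is an isometry, and the last uses the orthogonality of the $W^I_i$ themselves.

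None of the steps is expected to present a real obstacle; the lemma is essentially a textbook characterization and the proof amounts to applying the polarization identity once, bilinear/sesquilinear expansion twice, and the trivial observation that inner-product preservation is preserved under restriction. The only subtlety worth flagging explicitly in the write-up is that polarization requires a complex (not merely real) inner product space; since the paper works throughout with complex Hilbert spaces this causes no issue, but I would remark on it in passing so the reader sees why $(2)\Rightarrow(1)$ really does hold without extra assumptions.
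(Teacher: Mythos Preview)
Your proposal is correct and follows essentially the same approach as the paper: polarization for $(2)\Rightarrow(1)$, sesquilinear expansion over a basis for $(3)\Rightarrow(1)$, and decomposition into orthogonal components with the cross terms killed by image-orthogonality for $(4)\Rightarrow(1)$. The only cosmetic difference is that the paper routes $(3)\Rightarrow(1)$ and $(4)\Rightarrow(1)$ through norm preservation (invoking the already-established $(1)\Leftrightarrow(2)$) rather than directly verifying inner-product preservation on arbitrary pairs, but this is the same computation written one way or the other.
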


The fourth condition may seem a bit strange at first sight. It essentially states that if a map maps orthogonal subspaces to orthogonal subspaces, it is enough to check that it is an isometry on each of these subspaces separately. Among other things, this condition allows us to consider each space with a fixed number of messages separately in various proofs in this work.

The second condition is frequently used as the definition of an isometry in place of the definition we used. Slightly weaker versions of the third and fourth condition, stating that $V$ is an isometry iff it maps orthonormal bases to orthonormal bases or iff there exists an orthonormal basis that is mapped to an orthonormal basis, are also often included in standard textbooks (cf. \cite{axler2014linear}).

\begin{proof}

``$1 \implies 2"$: The norm (squared) is an inner product, thus if all inner products are conserved, then in particular all norms are conserved.

``$2 \implies 1$": Let $v, w \in W^I$. Then $|v+w|^2 = |V(v+w)|^2 = |Vv|^2 + |Vw|^2 + v^\dagger V^\dagger V w + w^\dagger V^\dagger V v = |v|^2 + |w|^2 + 2 \text{Re}(v^\dagger V^\dagger V w)$ and also $|v+w|^2 = |v|^2 + |w|^2 + 2 \text{Re}(v^\dagger V^\dagger V w)$ where $\text{Re}$ gives the real part of a complex number. Thus, the real part of the inner product $v^\dagger w$ is conserved and repeating the previous calculation with $v-w$ in place of $v+w$ shows that the imaginary part is also conserved. 

``$1 \implies 3$'': If $V$ conserves the inner product between all vectors then in particular it conserves the inner product between basis vectors as these are a subset of all vectors.

``$3 \implies 1$'': Condition 1 is equivalent to condition 2 so we can consider the norm of a vector $w \in W^I$. We can write $w = \sum_{i=1}^N \lambda_i w_i$ for some $\lambda_i \in \mathbb{C}$ as the vectors $w_i$ form a basis. Then $w^\dagger V^\dagger V w = \sum_{ij} \overline{\lambda}_i \lambda_j w^{\dagger}_i V^\dagger V w_j = \sum_{ij} \overline{\lambda}_i \lambda_j w^{\dagger}_i w_j = w^\dagger w$ where we used condition 3 in the second equality.

``$1 \implies 4$'': If $V$ is an isometry on the whole space $W^I$, it is in particular an isometry on any subspace.

``$4 \implies 1$'': We consider again the norm of an arbitrary vector $w \in W^I$. As $W = \bigoplus_{i=1}^M W^I_i$, we can write $w = \sum_{i=1}^M w^I_i$ for vectors $w^I_i \in W^I_i$. Then, $w^\dagger V^\dagger V w = \sum_{ij} w^{I \dagger}_i V^\dagger V w^I_j = \sum_i w^{I \dagger}_i V^\dagger V w^I_i =  \sum_i w^{I \dagger}_i w^I_i = w^\dagger w$. In the second equality, we used that $V$ conserves the orthogonality of the subspaces $W^I_i$ which causes cross terms to vanish, $w^{I \dagger}_i V^\dagger V w^I_j = 0$ for $i \neq j$, and in the third equality, we used that $V$ is an isometry when restricted to each subspace $W^I_i$.
\end{proof}

Of course, we can also mix and match these conditions. For example, if we have split our input space into orthogonal subspaces as in condition 4, we can then use condition 2 to show that our map is an isometry when restricted to each of the subspaces.

\section{More detailed review of QC-QCs}\label{sec:qcqcdetails}
In this section we will give some more details on QC-QCs. In particular, we dive into how \cite{Wechs_2021} formalises the intuitive picture we reviewed in \cref{sec:circuitpm} as well as the notation used. 

The framework considers $N$ agents, $A_1,...,A_N$ with agent $A_k$ having an input space $\C{H}^{A^I_k}$ and an output space $\C{H}^{A^O_k}$. The aforementioned control system is an element of the Hilbert spaces $\C{H}^{C_n}$, with $n=1,...,N$. The basis states of $\C{H}^{C_n}$ are

\begin{equation}
    \ket{\{k_1,...,k_{n_1}\}, k_n}
\end{equation}

or more compactly

\begin{equation}
    \ket{\C{K}_{n_1}, k_n}
\end{equation}

where $\C{K}_{n-1} = \{k_1,...,k_{n-1}\}$ will stand generically for some subset of $\C{N}$ with $n-1$ elements. The meaning of the above basis state is as follows: the control state $\ket{\C{K}_{n-1}, k_n}^{C_n}$ represents that the agent $A_{k_n}$ acted most recently or is about to act and also that the agents $A_{k_1}$,..., $A_{k_{n-1}}$ acted before $A_{k_n}$, but it does not reveal the order in which these previous agents acted (hence the set notation for $\C{K}_{n-1}$). The reason for this last point is so that different paths can interfere with each other.

The action of the circuit conditioned on the control system being in the state $\ket{\C{K}_{n-1}, k_n}^{C_n}$ and the agent $A_{k_{n+1}}$ being the one to receive the target system, is then described with a linear map

\begin{equation}
    V^{\rightarrow k_{n+1}}_{\C{K}_{n-1}, k_n}: \C{H}^{A^O_{k_n} \alpha_n} \rightarrow \C{H}^{A^I_{k_{n+1}} \alpha_{n+1}}
\end{equation}

where $\alpha_n, \alpha_{n+1}$ are ancillary systems that essentially serve the purpose of an internal memory for the circuit. It is assumed that

\begin{equation}\label{eq:defiso0}
    V_{n+1} = \sum_{\substack{\C{K}_{n-1},\\ k_n, k_{n+1}}} V^{\rightarrow k_{n+1}}_{\C{K}_{n-1}, k_n} \otimes \ket{\C{K}_{n-1} \cup k_n, k_{n+1}} \bra{\C{K}_{n-1}, k_n}
\end{equation}

is an isometry.

In order to describe the full action of the circuit, the QC-QC framework introduces for the $n$-th time step the generic input Hilbert space $\C{H}^{\tilde{A}^I_n}$ and the generic output Hilbert space $\C{H}^{\tilde{A}^O_n}$, which are isomorphic to the agent's input/output spaces, i.e. $\C{H}^{\tilde{A}^{I/O}_n} \cong \C{H}^{A^{I/O}_{k_n}}$ for all possible $k_n$.

The linear maps $V^{\rightarrow k_{n+1}}_{\C{K}_{n-1}, k_n}$ are then embedded in these generic spaces via the isomorphism, obtaining

\begin{equation}
    \tilde{V}^{\rightarrow k_{n+1}}_{\C{K}_{n-1}, k_n}: \C{H}^{\tilde{A}^O_{n} \alpha_n}\rightarrow \C{H}^{\tilde{A}^I_{n+1} \alpha_{n+1}}.
\end{equation}

The global action of the circuit in this time step is given via the controlled application of the above operators

\begin{equation}\label{eq:defiso}
    \tilde{V}_{n+1} = \sum_{\substack{\C{K}_{n-1},\\ k_n, k_{n+1}}} \tilde{V}^{\rightarrow k_{n+1}}_{\C{K}_{n-1}, k_n} \otimes \ket{\C{K}_{n-1} \cup k_n, k_{n+1}} \bra{\C{K}_{n-1}, k_n}
\end{equation}

The sum in the above equations goes over $\C{K}_{n-1} \subset \C{N}$ and $k_n, k_{n+1} \in \C{N}\backslash \C{K}_{n-1}, k_n \neq k_{n+1}$, and this is the assumed convention for all such sums.

The local operations of the agents can also be similarly embedded. Taking them to be pure (the behaviour of non-pure operations can be recovered by considering the behaviour of their Kraus operators), $A_{k_n}: \C{H}^{A^I_{k_n}} \rightarrow \C{H}^{A^O_{k_n}}$, we obtain the counter-parts embedded in the generic spaces 

\begin{equation}
    \tilde{A}^{k_n}: \C{H}^{\tilde{A}^I_{n}}\rightarrow \C{H}^{\tilde{A}^O_{n}}.
\end{equation}

Adding these up just like in the case of the internal operations gives us the total action of the agents during this time step

\begin{equation}
    \tilde{A}_n = \sum_{\C{K}_n, k_{n+1}} \tilde{A}_{k_{n+1}} \otimes \ket{\C{K}_n, k_{n+1}} \bra{\C{K}_n, k_{n+1}}.
\end{equation}

Finally, given a set of linear maps $V^{\rightarrow k_{n+1}}_{\C{K}_{n-1}, k_n}$, the process vector of the corresponding QC-QC can be written as

\begin{equation}
    \ket{w} = \sum_{(k_1,..., k_N)} \dket{V^{\rightarrow k_1}_{\emptyset, \emptyset}} * ... * \dket{V^{\rightarrow k_N}_{\{k_1,...,k_{N-2}\}, k_{N-1}}} * \dket{V^{\rightarrow F}_{\C{N}\backslash k_N, k_N}}.
\end{equation}

A particular example of a QC-QC is the Grenoble process, which we already introduced in \cref{sec:switchtopb}. In that section, we described its behaviour in words. The maps defining it mathematically are \cite{Wechs_2021}

\begin{gather}\label{eq:switchkraus}
\begin{aligned}
    &V^{\rightarrow k_1}_{\emptyset, \emptyset} = \frac{1}{\sqrt{3}} \ket{\psi}^{A^I_{k_1}} \\
    &V^{\rightarrow k_2}_{\emptyset, k_1} = \begin{cases}  \ket{0}^{A^I_{k_2}} \bra{0}^{A^O_{k_1}}, k_2 = k_1 + 1 \pmod 3 \\ \ket{1}^{A^I_{k_2}} \bra{1}^{A^O_{k_1}}, k_2 = k_1 + 2 \pmod 3 \end{cases} \\
    &V^{\rightarrow k_3}_{\{k_1\}, k_2} = \begin{cases}  \ket{0}^{A^I_{k_3}} \ket{0}^{\alpha_3} \bra{0}^{A^O_{k_2}} + \ket{1}^{A^I_{k_3}} \ket{1}^{\alpha_3} \bra{1}^{A^O_{k_2}}, k_2 = k_1 + 1 \pmod 3 \\ \ket{0}^{A^I_{k_3}} \ket{1}^{\alpha_3} \bra{0}^{A^O_{k_2}} + \ket{1}^{A^I_{k_3}} \ket{0}^{\alpha_3} \bra{1}^{A^O_{k_2}}, k_2 = k_1 + 2 \pmod 3 \end{cases}\\
    &V^{\rightarrow F}_{\{k_1, k_2\}, k_3} = \mathbb{1}^{A^O_{k_3} \alpha_3 \rightarrow \alpha_F^{(1)}} \otimes \ket{k_3}^{\alpha^{(2)}_F}.
\end{aligned}
\end{gather}

\begin{remark}\label{remark:fullspace}
In \cite{Wechs_2021}, the authors only demand that $V_{n+1}$ is an isometry on its effective input space, which we can recursively define as the image of $V_n$ restricted to its own effective input space. However, we show here that one can w.l.o.g. assume that $V_{n+1}$ is an isometry on the full space. We first prove by induction that the maps $V_{n+1}$ can always be extended to isometries on the full space and then we prove that these extensions still define the same QC-QC. The effective input space of $V_1$ is just the full space, thus we do not need to extend it. Assume now, that all the maps $V_k$ with $k \leq n$ are isometries on the full space. Now note that $\text{Im}(V_n) \oplus \text{Im}(V_n)^\perp$ is equal to the full input space of $V_{n+1}$. It is then always possible to define an isometry $V'_{n+1}: \text{Im}(V_n)^\perp \rightarrow \C{H}^{A^I_{\C{N}}\beta_{n+1} C_{n+1}}$ which can be written in the form of \cref{eq:defiso0} for some ancillary Hilbert space $\C{H}^{\beta_{n+1}}$. Imposing that $\C{H}^{\alpha_{n+1}}$ and $\C{H}^{\beta_{n+1}}$ be orthogonal, the map $V''_{n+1} = V_{n+1} \oplus V'_{n+1}$ is an isometry due to \cref{lemma:isochar} and is of the form of \cref{eq:defiso0}. The QC-QC defined by these isometries is equivalent to the QC-QC we started out with,

\begin{equation}
    \dket{V''_1} * ... * \dket{V''_{N+1}} = \dket{V''_{N+1}(V''_N(...(V''_1)...)} = \dket{V_{N+1}(V_N(...(V_1)...)}
\end{equation}

where in the first equality we used that $\dket{A}*\dket{B} = \dket{A \circ B}$ (dropping identities) and in the second equality we iteratively used the fact that $V''_{n+1}|_{\text{Im}(V_n)} = V_{n+1}$ and $V''_1 = V_1$.

\end{remark}

\section{A general extension of QC-QCs inspired by photonic experiments}\label{sec:controlintarget}

In this section, we present an alternate way to extend QC-QCs to causal boxes. One way to view the extension here is as as a photonic circuit implementing a QC-QC, similar to how \cref{fig:new_QCQC} is a photonic circuit that implements the Grenoble process. It is, however, more abstract than the extension for the Grenoble process. In \cref{sec:switchtopb}, we modelled specific components making up a photonic circuit, like CNOT and COPY gates and polarising beam splitters. Here, as we will see, we will instead use the more abstract operations $V_{n+1}$ of the QC-QC framework, without looking into how they would implemented in detail in an experimental set-up. 

Considering the experimental set-up in \cref{fig:new_QCQC}, we note that the target, the ancilla and the control are all part of the same overall system as they are all encoded in different degrees of freedom of the photon. Importantly, this means that the ancilla and control are both sent to local agents, but these only act trivially on them. The input/output spaces of the agents will thus be $\C{F}^{\C{T}^{I/O}}_{\bar{A}^{I/O}_i \alpha C}$, but we only allow local operations that leave $\C{H}^\alpha$ and $\C{H}^C$ invariant. This is essentially equivalent to routing these two systems past the local operations and directly to the next internal operation, which is not really different from sending it internally (cf. \cref{fig:routingpast}). We can thus still consider agents whose input and output Hilbert spaces do not include the ancillary and control Hilbert spaces and contract over these spaces when calculating the Choi vector.

\begin{figure}
\centering
\includegraphics[scale=1.5]{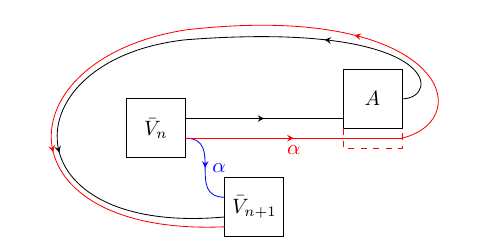}
\caption{Diagrammatic representation of how a causal box can send an ancilla $\alpha$ from one isometry of the sequence representation to the next. The isometry $\bar{V}_n$ can either send it internally to $\bar{V}_{n+1}$ (blue) or send it to the local agent $A$ along with the target system (black line) who then sends it to $\bar{V}_{n+1}$ (red). If $A$ acts trivially on $\alpha$ (indicated by the red, dashed extension of $A$), then the two options are essentially equivalent on the relevant one-message subspace.}
\label{fig:routingpast}
\end{figure}

In analogy to what we did in \cref{sec:switchtopb}, we assume that the circuit acts on each message (photon) separately. For this purpose, we introduce the operation $symm$ which symmetrises tensor products in the same Hilbert space and leaves tensor products over different Hilbert spaces invariant, $symm(\C{H}^X \otimes \C{H}^X) = \C{H}^X \odot \C{H}^X$ and $symm(\C{H}^X \otimes \C{H}^Y) = \C{H}^X \otimes \C{H}^Y$ for $X \neq Y$. More generally, for $N$ Hilbert spaces $\C{H}^{X_n}$ such that $X_n \neq X_m$ if $n \neq m$, we define it as

\begin{equation}
    symm(\bigotimes_{n=1}^N (\ket{\psi^{n, 1}}^{X_n} \otimes \ket{\psi^{n,2}}^{X_n} \otimes ...)) \coloneqq \bigotimes_{n=1}^N (\ket{\psi^{n, 1}}^{X_n} \odot \ket{\psi^{n,2}}^{X_n} \odot ...).
\end{equation}

\begin{example}[Symmetrization]
Consider three systems, $A, B, C$ and the tensor product of states on these systems $\ket{0}^A \ket{1}^A \ket{0}^B \ket{1}^C \ket{1}^C$. The action of $symm$ on this state is

\begin{equation}
    symm(\ket{0}^A \ket{1}^A \ket{0}^B \ket{1}^C \ket{1}^C) = \ket{0}^A \odot \ket{1}^A \otimes \ket{0}^B \otimes \ket{1}^C \odot \ket{1}^C.
\end{equation}

Note that $symm$ gives the same state when applying it to $\ket{1}^A \ket{0}^A \ket{0}^B \ket{1}^C \ket{1}^C$ because $\ket{0}^A \odot \ket{1}^A = \ket{1}^A \odot \ket{0}^A$. In fact, because the order of the tensor factors belonging to different systems does not matter, $\ket{0}^A \ket{1}^B = \ket{1}^B \ket{0}^A$, we can freely exchange any two tensor factors and still obtain the same state after applying $symm$. This is a particularly useful feature. In fact, it is the reason we introduce $symm$ because it allows us to ensure states are properly symmetrised without forcing us to group all messages belonging to the same wire, like we would have to if we used the symmetric tensor product $\odot$.
\end{example}

Given these considerations, we can define the general action of the internal operations in the causal box picture




\begin{gather}
\begin{aligned}
    \bar{V}_1: \C{F}^P \otimes \ket{t=1} &\rightarrow \bigotimes_{k_1} \C{F}(\C{H}^{\bar{A}^I_{k_1} \bar{\alpha}_1 \bar{C}} \otimes \ket{t=2}) \\
    \bar{V}_{n+1}: \bigotimes_{k_n} \C{F}(\C{H}^{\bar{A}^O_{k_n} \bar{\alpha}_n \bar{C}} \otimes \ket{t=2n+1}) &\rightarrow \bigotimes_{k_{n+1}} \C{F}(\C{H}^{\bar{A}^I_{k_{n+1}} \bar{\alpha}_{n+1} \bar{C}} \otimes \ket{t=2n+2}) \\
    \bar{V}_{N+1}: \bigotimes_{k_N} \C{F}(\C{H}^{\bar{A}^O_{k_N} \bar{\alpha}_N \bar{C}} \otimes \ket{t=2N+1}) &\rightarrow \bigotimes_{k_{n+1}} \C{F}(\C{H}^{F \alpha_F} \otimes \ket{t=2N+2}). 
\end{aligned}
\end{gather}

The action of the isometries on single photons is well defined in the QC-QC picture and we simply have to add time stamps and vacuum states to obtain the causal box picture

\begin{gather}
\begin{aligned}\label{eq:photonic_single}
    \bar{V}_{n+1}&(\ket{\psi, t=2n+1}^{\bar{A}^O_{k_n}} \ket{\C{K}_{n-1}, k_n} \bigotimes_{k\neq k_n} \ket{\Omega, t=2n+1}^{\bar{A}^O_{k} C_n}) \\
    &= V_{n+1}(\ket{\psi}^{\bar{A}^O_{k_n}} \ket{\C{K}_{n-1}, k_n}) \ket{t=2n+2} \\
    &= \sum_{k_{n+1}} V^{\rightarrow k_{n+1}}_{\C{K}_{n-1}, k_n}(\ket{\psi}^{A^O_{k_n}}) \ket{t=2n+2} \ket{\C{K}_{n-1} \cup k_n, k_{n+1}} \bigotimes_{k\neq k_n} \ket{\Omega, t=2n+1}^{\bar{A}^O_{k} C_n}.
\end{aligned}
\end{gather}

Additionally, if all wires contain the vacuum, $\bar{V}_{n+1}$ leaves the state invariant besides incrementing the time by 1

\begin{equation}\label{eq:isovac}
    \bar{V}_{n+1}(\bigotimes_{i=1}^N \ket{\Omega,t=2n+1}^{\bar{A}^O_i C_n}) = \bigotimes_{i=1}^N \ket{\Omega,t=2n+2}^{\bar{A}^I_i C_{n+1}}.
\end{equation}

In the multi-message case, we essentially apply the operation to each message individually, analogous to the case of the Grenoble process in \cref{sec:switchtopb}


\begin{gather}
\begin{aligned}\label{eq:pb_ext}
    \bar{V}_1 \bigodot_i &\ket{\psi^i, t=1}^P = symm \Big(\bigotimes_i \sum_{k_1} V^{\rightarrow k_1}_{\emptyset, \emptyset} \ket{\psi^i}^P \ket{t=2} \ket{\emptyset, k_1}\Big) \\
    \bar{V}_{n+1}\Big(\bigotimes_{k_n} &\bigodot_{i=i_{k_n}}^{i_{k_n+1}-1} \ket{\psi^i, t=2n+1}^{\bar{A}^O_{k_n}} \ket{\C{K}^i_{k_{n-1}}, k_n}\Big) \\
    &= symm\Big(\bigotimes_{k_n} \bigotimes_{i=i_{k_n}}^{i_{k_n+1}-1} \bar{V}_{n+1} (\ket{\psi^{i}, t=2n+1}^{\bar{A}^O_{k_n}} \ket{\C{K}^i_{n-1}, k_n})\Big) \\
    &= symm \Big(\bigotimes_{k_n} \bigotimes_{i=i_{k_n}}^{i_{k_n+1}-1} \sum_{k^{k_n}_{n+1}} V^{\rightarrow k^i_{n+1}}_{\C{K}^i_{n-1}, k_n} \ket{\psi^i}^{\bar{A}^O_{k_n}} \ket{t=2n+2} \ket{\C{K}^i_{n}, k_{n+1}^i}\Big) \\
    \bar{V}_{N+1}\Big(\bigotimes_{k_N} &\bigodot_{i=i_{k_N}}^{i_{k_N+1}-1} \ket{\psi^i, t=2N+1}^{A^O_{k_N}} \ket{\C{N} \backslash k_N, k_N}\Big) \\
    &= symm \Big(\bigotimes_{k_N} \bigotimes_{i=i_{k_n}}^{i_{k_n+1}-1} \bar{V}_{N+1} \ket{\psi^i, t=2N+1}^{\bar{A}^O_{k_N}} \ket{\C{N} \backslash k_N, k_N}\Big) \\ &= \bigodot_i V^{\rightarrow F}_{\C{K}_{\C{N} \backslash k_N}, k_N} \ket{\psi^{k_N}}^{\bar{A}^O_{k_N}} \ket{t=2N+2}
\end{aligned}
\end{gather}

where $\C{K}^{k_n}_n \coloneqq \C{K}^{k_n}_{n-1} \cup k_n$ and where $1 = i_1 < i_2 < ... < i_{N+1}$ and we dropped the ancilla to shorten the equations a bit (i.e. $\ket{\psi}^{\bar{A}^I_{k_n}}$ should be understood to include an implicit ancilla, $\ket{\psi}^{\bar{A}^I_{k_n}} \ket{\alpha}^{\alpha_n}$). For the zero-message case, we use \cref{eq:isovac} as the definition, replacing the tensor product over 3 agents with one over $N$ agents.

It should be noted that $symm$ should not be viewed as an operation that is actually carried out by the causal box. It is simply how we denote the fact that messages that are sent from different agents to the same agent end up in the symmetric Fock space of that agent's input space.  Note that we use the symmetric tensor product directly instead of using the more complicated $symm$ in the last line, because all the states are sent to the same agent, the global future, in the last time step.



\begin{restatable}[Sequence representation for QC-QCs]{prop}{symmiso}
\label{prop:symmiso}
The maps $\bar{V}_1,...,\bar{V}_{N+1}$ as defined by \cref{eq:pb_ext} are isometries.
\end{restatable}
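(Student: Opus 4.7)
The plan is to apply \cref{lemma:isochar}, specifically condition 4, by decomposing the input Fock space of each $\bar{V}_{n+1}$ into the orthogonal $m$-message subspaces for $m = 0, 1, 2, \ldots$ and showing that the restriction to each such subspace is an isometry. This decomposition is natural because, as is evident from \cref{eq:pb_ext}, each $\bar{V}_{n+1}$ conserves the total number of messages: it maps every term of the symmetric tensor product individually via $V_{n+1}$ (or $V^{\rightarrow k_{n+1}}_{\C{K}_{n-1},k_n}$) and then applies $symm$, so the image of an $m$-message state is again an $m$-message state. Thus the $m$-message subspaces are preserved, and by linearity the problem reduces to verifying the isometry property on each of them separately. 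The $m=0$ case is handled directly by \cref{eq:isovac} (extended to $N$ agents), which just shifts the vacuum time stamps.

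For the general $m$-message case, I would pick a spanning set of input states of the product form appearing on the left-hand side of \cref{eq:pb_ext}, namely tensor products over agents $k_n$ of symmetrized products of single-message states $\ket{\psi^i,t{=}2n{+}1}^{\bar{A}^O_{k_n}}\ket{\C{K}^i_{n-1},k_n}$. By condition 3 of \cref{lemma:isochar}, it suffices to verify that the inner product of any two such states equals the inner product of their images under $\bar{V}_{n+1}$. The key computational step is to expand both inner products using \cref{eq:fock_inner_n}: each inner product becomes a sum over permutations $\pi$ that match each of the $m$ input messages with an output message, with the contribution of each permutation being a product of single-message inner products of the form $\braket{\psi^i,t|\phi^{\pi(i)},t}\braket{\C{K}^i_{n-1},k|\C{L}^{\pi(i)}_{n-1},k'}$. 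The time-stamp delta functions automatically enforce that non-matching terms vanish, and the control-system inner products enforce the correct pairing of agents.

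Next, I would exploit the crucial fact that on single-message inputs the definition \cref{eq:photonic_single} coincides with the action of $V_{n+1}$ in the QC-QC picture (padded with vacuum and a time-stamp isometry). Since $V_{n+1}$ is an isometry on its full input space (see \cref{remark:fullspace}), each single-message inner product is preserved term by term. Because $\bar{V}_{n+1}$ applies $V_{n+1}$ independently to each of the $m$ messages before symmetrization, the product structure of permutation terms in the expanded inner product is preserved, which means the full sum over permutations is preserved as well. The role of $symm$ here is precisely to ensure that the output ends up in the correct symmetric subspace of the output Fock space, so that \cref{eq:fock_inner_n} applies on both sides and the combinatorics of the permutation sums match.

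The main obstacle will be bookkeeping the case where several input messages originating on different output wires $\bar{A}^O_{k_n}$ are routed by $V_{n+1}$ to the same input wire $\bar{A}^I_{k_{n+1}}$, since this is where $symm$ produces nontrivial symmetric combinations and the output permutation sum no longer factorizes naively along wire labels. I would address this by observing that messages living in different $\bar{A}^O_{k_n}$ spaces are orthogonal before symmetrization (since wire labels are part of the state), so the permutation sum in the output inner product only receives nonzero contributions from permutations that match input messages to output messages with compatible control values $\ket{\C{K}_{n-1},k_n}$; by the QC-QC definition this forces a bijection that mirrors the input permutation, restoring the factorization. The edge cases $\bar{V}_1$ and $\bar{V}_{N+1}$ follow by the same argument, with the global past/future playing the role of a single ``wire'' and with \cref{remark:fullspace} ensuring that $V_1$ and $V_{N+1}$ are isometries on the full relevant spaces.
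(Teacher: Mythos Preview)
Your proposal is correct and follows essentially the same approach as the paper's proof: both decompose into $m$-message sectors via condition~4 of \cref{lemma:isochar}, work with elementary product tensors via condition~3, expand inner products through \cref{eq:fock_inner_n}, and reduce the problem to the single-message isometry property of $V_{n+1}$ applied message-by-message. The paper packages the single-message step into \cref{lemma:krausiso} (which is just the statement that $V_{n+1}$ is an isometry, rewritten in terms of the $V^{\rightarrow k_{n+1}}_{\C{K}_{n-1},k_n}$), and it states your ``obstacle'' observation upfront as the identity that replacing $\bigotimes_{k}\ket{\psi_k}^{A_k}$ by $\bigodot_{k}\ket{\psi_k}^{A_k}$ leaves the inner product unchanged when the $A_k$ are distinct wires---this is exactly the orthogonality argument you give at the end, and it is what lets \cref{eq:fock_inner_n} be applied uniformly to both input and output with the same permutation sum over $S^M$.
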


The proof is given in \cref{sec:proofapp}. Given \cref{prop:symmiso}, what we constructed is therefore a causal box. This causal box is also an extension of the original QC-QC.

\begin{restatable}[Causal box description of QC-QCs]{prop}{symmequivalence}\label{prop:symmequivalence}
The causal box with sequence representation given by the isometries $\bar{V}_1,..., \bar{V}_{N+1}$ from \cref{eq:pb_ext} is an extension of the QC-QC described by the process vector $\ket{w_{\C{N},F}} = \sum_{(k_1,..., k_N)} \dket{V^{\rightarrow k_1}_{\emptyset, \emptyset}} * ... * \dket{V^{\rightarrow k_N}_{\{k_1,...,k_{N-2}\}, k_{N-1}}} * \dket{V^{\rightarrow F}_{\C{N}\backslash k_N, k_N}}$.
\end{restatable}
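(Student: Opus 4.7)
The plan is to show that the two sides of the equation in \cref{def:pbqceq} agree by exploiting the fact that, when the causal box is composed with local operations that are extensions (in the sense of \cref{def:local_equivalence}) of QC-QC local operations, the evolution of the joint system remains inside the one-message subspace at all times. On this subspace the action of each $\bar{V}_{n+1}$ has already been explicitly fixed in \cref{eq:photonic_single} to reproduce the action of the QC-QC internal isometry $V_{n+1}$, up to appending time stamps and padding with vacuum states. Consequently, the symmetrization/multi-message content of \cref{eq:pb_ext} never gets probed, and we recover the QC-QC calculation line-by-line.

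First I would rewrite the RHS using \cref{lemma:compislink}, which identifies sequential composition with the link product, and the definition $\dket{\bar{V}_w}=\dket{\bar{V}_1}*\dket{\bar{V}_2}*\cdots*\dket{\bar{V}_{N+1}}$ arising from the sequence representation. Associativity of the link product (whenever each wire is shared by at most two factors, which is true here because each agent has exactly one input and one output wire connecting them to the causal box, and the wires are labelled by time stamps) lets me group the factors in any convenient order. I group them so that $\bar{V}_{n+1}$ is linked on one side to the Choi vector of the local operation $\bar{A}_{k_n}$ of whichever agent acts at step $n$, and on the other side to $\bar{V}_n$; the global past $\ket{\psi,t=1}^P$ is contracted against $\bar{V}_1$, and $\bar{V}_{N+1}$ exits to the global future at $t=2N+2$.

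Next I would establish an induction on the time step $n$. The inductive claim is that, after applying $\bar{V}_n$ followed by the local operations associated with time step $n$, the state living on the inputs of $\bar{V}_{n+1}$ is a single-message state of the form $\sum_{(k_1,\dots,k_n)}\ket{\phi_{k_1,\dots,k_n},t=2n+1}^{\bar{A}^O_{k_n}}\ket{\mathcal{K}_{n-1},k_n}\otimes\bigotimes_{k\neq k_n}\ket{\Omega,t=2n+1}^{\bar{A}^O_k C_n}$, where the amplitudes $\ket{\phi_{k_1,\dots,k_n}}$ are exactly those produced by the corresponding branch of the QC-QC computation. The base case $n=0$ is immediate from $\ket{\psi,t=1}^P$. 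The inductive step uses two facts: (i) \cref{eq:photonic_single} says that on such single-message inputs, $\bar{V}_{n+1}$ produces the state $\sum_{k_{n+1}} V^{\rightarrow k_{n+1}}_{\mathcal{K}_{n-1},k_n}(\ket{\phi}^{A^O_{k_n}})\ket{t=2n+2}\ket{\mathcal{K}_{n-1}\cup k_n,k_{n+1}}$ with vacuum on the other input wires, which matches the QC-QC update; and (ii) by \cref{def:local_equivalence}, the extended local operation $\bar{A}_{k_{n+1}}$ acts as $A_{k_{n+1}}$ on the non-vacuum message it receives and as the identity on the vacuum states arriving on the other time slots of its input wire, so it preserves the one-message form and reproduces the QC-QC local update.

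Carrying out the induction all the way to $n=N$ and then applying $\bar{V}_{N+1}$ (whose single-message action again reproduces $V^{\rightarrow F}_{\mathcal{N}\setminus k_N,k_N}$ by \cref{eq:photonic_single}) yields on $\ket{F,t=2N+2}$ exactly the sum $\sum_{(k_1,\dots,k_N)} V^{\rightarrow F}_{\mathcal{N}\setminus k_N,k_N}\circ A_{k_N}\circ\cdots\circ V^{\rightarrow k_1}_{\emptyset,\emptyset}$ applied to $\ket{\psi}^P$, tensored with the required $\ket{t=2N+2}$. Recognising this expression as $(\dket{A_1}\otimes\cdots\otimes\dket{A_N}\otimes\ket{\psi}^P)*\ket{w}\otimes\ket{t=2N+2}$ via the defining equation of the QC-QC process vector completes the proof. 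The only nontrivial bookkeeping concerns the vacuum padding on wires $\bar{A}^I_k$ for $k\neq k_{n+1}$ and the way \cref{def:local_equivalence} acts on these through $\bigotimes_{t\in\mathcal{T}^I}(A_i\otimes\proj{t+1}{t}+\proj{\Omega,t+1}{\Omega,t})$; I would verify once and for all that this tensor structure sends the inductive state to the correct successor state, after which the induction is mechanical. The main obstacle is purely notational rather than conceptual, namely keeping track of which wires carry vacuum, which carry the active message, and which time stamps are relevant at each step, so that the link products align correctly across the construction.
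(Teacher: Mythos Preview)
Your proposal is correct and reaches the result by a route that differs in technical organisation from the paper's. The paper first reduces to trivial global past, then works entirely on the Choi side: it identifies the effective subspace (at most one non-vacuum message per time step, vacuum-in implies vacuum-out), expands $\dket{\bar V_w}$ in a basis of that subspace, contracts each basis vector against $\bigotimes_k\dket{\bar A_k}$ using the explicit form in \cref{def:local_equivalence}, computes the coefficient of each basis vector from \cref{eq:photonic_single}, and recognises the sum as $\bigotimes_k\dket{A_k}*\ket{w}$. You instead translate the link product back into a state evolution and run an induction on the time step, tracking a single-message superposition over orderings and invoking \cref{eq:photonic_single} at each step. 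Both arguments ultimately rest on the same observation---that composition with extended local operations confines the dynamics to the one-message sector where $\bar V_{n+1}$ coincides with $V_{n+1}$---but your operational induction is more direct, while the paper's basis expansion stays closer to the Choi-vector formulation of \cref{def:pbqceq}.

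One point to tighten: your sentence about grouping $\bar V_{n+1}$ with ``the local operation $\bar A_{k_n}$ of whichever agent acts at step $n$'' is imprecise, since which agent acts is itself in superposition. The correct regrouping is to use the tensor-product factorisation $\bar A_i=\bigotimes_{t}(A_i\otimes\proj{t+1}{t}+\proj{\Omega,t+1}{\Omega,t})$ and interleave the per-time-slot pieces $\bigotimes_{i=1}^N \bar A_i^{(t=2n)}$ of \emph{all} $N$ agents between $\bar V_n$ and $\bar V_{n+1}$; in each branch only the factor with $i=k_n$ sees a non-vacuum input, the others act as identity on $\ket{\Omega}$, and your induction then proceeds as written.
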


\section{Fine-graining}\label{sec:fineproof}

We provide a detailed proof of \cref{theoremfinegraining} based on the proof sketch outlined in \cref{sec:finegraining}, which involves five steps. The first step is to simply note that the action of any QC-QC on local operations in the QC-QC framework can be regarded as a network of CPTP maps as defined and proven in \cite{Vilasini_2022} for general process matrices (of which QC-QCs are a subset). The same holds true for the causal box description. 
Therefore we only need to prove steps 2-5, which we carry out below.

{\bf Step 2 (encoding-decoding scheme):} Consider the CPTP maps corresponding to a given QC-QC $\mathcal{W}$ and the CPTP map describing the causal box $\mathcal{C}$ in the image of our mapping from QC-QC to causal boxes. We will work with the mapping given by the isometric extension, and in this step, construct an encoder $\mathrm{Enc}$ and decoder $\mathrm{Dec}$ such that $\mathcal{W}=\mathrm{Dec}\circ\mathcal{C}\circ\mathrm{Enc}$. This is one of the properties required for $\mathcal{C}$ to be a fine-graining of $\mathcal{W}$.


For simplicity, we begin by considering the case where the ancillas $\alpha_n$ are all trivial. The encoder defined on an arbitrary basis is given by

\begin{equation}
\label{eq: encoder}
    \bigotimes_n \ket{i_n}^{A^O_n} \mapsto \sum_{(k_1,...,k_N)} \lambda^{(k_1,...k_N)}_{i_1,...,i_N} \bigotimes_{n} \ket{i_{k_n}, t=2n+1}^{\bar{A}^O_{k_n}} 
\end{equation}

with $\lambda^{(k_1,...k_N)}_{i_1,...,i_N} \neq 0$ if $V^{\rightarrow k_{n+1}}_{\{k_1,...,k_{n-1}\}, k_n}\ket{i_{k_n}} \neq 0$ for all $n$. To ensure that this is an isometry we also demand $\sum_{(k_1,...,k_N)} |\lambda^{(k_1,...k_N)}_{i_1,...,i_N}|^2 = 1$ (this suffices as the encoder preserves the orthogonality of different basis vectors).

Next, we apply the causal box\footnote{For the sake of concreteness, one can assume that we use our extension from \cref{sec:another}, but in principle any extension which keeps the control and ancillas internally works here}

\begin{equation}\label{eq:applycb}
    \sum_{(k_1,...,k_N)} \lambda^{(k_1,...k_N)}_{i_1,...,i_N} \bigotimes_{n} V^{\rightarrow k_{n+1}}_{\C{K}_{n-1}, k_n} \ket{i_{k_n}}^{A^O_{k_n}} \ket{t=2n+2} + \ket{\text{ortho}}
\end{equation}

where we split the result into two contributions. The first corresponds to states that we would expect from the QC-QC. In the language of \cref{sec:another}, the terms in this contribution correspond to branches where no control mismatch occurred. The contribution $\ket{\text{ortho}}$ corresponds to branches where at least one control mismatch occurred. In particular, $\ket{\text{ortho}}$ is orthogonal to the QC-QC-like contribution on the left. This follows from orthogonality between $\bar{V}^1_{n+1}|_{\C{H}^{O_n}_{\text{corr}}}$ and $\bar{V}^1_{n+1}|_{\C{H}^{O_n}_{\text{mis}}}$. To see this, define $C_{n+1} := \bar{V}^1_{n+1}|_{\C{H}^{O_n}_{\text{corr}}}$ and $M_{n+1} := \bar{V}^1_{n+1}|_{\C{H}^{O_n}_{\text{mis}}}$. Then, the map yielding the QC-QC-like contribution can be written as $C_{N+1} \circ ... \circ C_{n+1} \circ ... \circ C_1$. On the other hand, for the map yielding $\ket{\text{ortho}}$, we can write $\sum C_{N+1} \circ ... \circ M_{n+1} \circ... \circ C_1$, where every term in the sum has at least one of the mismatched operators and the sum goes over all such terms. We then find

\begin{equation}
    C_1^\dagger ... C_{n+1}^\dagger C_{N+1}^\dagger C_{N+1} ... M_{n+1} ... C_1 = C_1^\dagger ... C_{n+1}^\dagger M_{n+1} ... C_1 = 0.
\end{equation}

Additionally, the norm of \cref{eq:applycb} is 1 since the causal box is an isometry.

Next, we define a projection which when acting on \cref{eq:applycb} yields

\begin{equation}\label{eq:cbandproj}
    \sum_{(k_1,...,k_N)} \lambda^{(k_1,...k_N)}_{i_1,...,i_N} \bigotimes_{n} V^{\rightarrow k_{n+1}}_{\C{K}_{n-1}, k_n} \ket{i_{k_n}}^{A^O_{k_n}} \ket{t=2n+2}.
\end{equation}

Note that each term lies in a different orthogonal subspace (i.e. the subspace corresponding to the term's time ordering). We now define the following rescaling map

\begin{equation}\label{eq:scaling}
    \sum_{(k_1,...,k_N)} \lambda^{(k_1,...k_N)}_{i_1,...,i_N} \bigotimes_{n} V^{\rightarrow k_{n+1}}_{\C{K}_{n-1}, k_n} \ket{i_{k_n}}^{A^O_{k_n}} \ket{t=2n+2} \mapsto \sum_{(k_1,...,k_N)} \bigotimes_{n} V^{\rightarrow k_{n+1}}_{\C{K}_{n-1}, k_n} \ket{i_{k_n}}^{A^O_{k_n}}.
\end{equation}

Let us check linearity. We first consider an ordering $(k_1,...,k_N)$ and define vectors $\ket{i^I_{k_{n+1}}} := V^{\rightarrow k_{n+1}}_{\{k_1,...,k_{n-1}\}, k_n} \ket{i_{k_n}}$. We then define the following map

\begin{equation}\label{eq:terms_scaling}
    \bigotimes_{n} \ket{i^I_{k_n}}^{A^I_{k_n}} \ket{t=2n+2} \mapsto \begin{cases}
        0, \text{ if } \ket{i^I_{k_n}}^{A^I_{k_n}} = 0 \text{ for some } n \\
        1/\lambda^{k_1,...,k_N}_{i_1,...,i_N} \bigotimes_{n} \ket{i^I_{k_n}}^{A^I_{k_n}}, \text{ else}
    \end{cases}
\end{equation}

If the vectors $\bigotimes_{n} \ket{i^I_{k_n}}^{A^I_{k_n}}$  which are non-zero are linearly independent, this is well defined and linear. Otherwise, we have to introduce an additional requirement on the $\lambda^{k_1,...,k_N}_{i_1,...,i_N}$. From \cref{eq:terms_scaling}, we can see that if $\bigotimes_{n} \ket{i^I_{k_n}}^{A^I_{k_n}} = \sum_{j_1,...,j_N} a_{j_1,...,j_N} \bigotimes_{n} \ket{j^I_{k_n}}^{A^I_{k_n}}$, then we must have

\begin{equation}
    1/\lambda^{k_1,...,k_N}_{i_1,...,i_N} \bigotimes_n\ket{i^I_{k_n}}^{A^I_{k_n}} = \sum_{j_1,...,j_N} a_{j_1,...,j_N}/\lambda^{k_1,...,k_N}_{j_1,...,j_N} \bigotimes_{n} \ket{j^I_{k_n}}^{A^I_{k_n}}
\end{equation}

Once we impose this condition, \cref{eq:scaling} is also well defined and linear as it is simply the direct sum of maps of the form of \cref{eq:terms_scaling} (here, we make use that all the terms in \cref{eq:cbandproj} lie in orthogonal subspaces).

\begin{remark}[Trivial existence of an encoder]
One may wonder if there always exists an encoder which fulfills the above equation. Fortunately, this is the case. If we set $\lambda^{k_1,...,k_N}_{i_1,...,i_N} = \frac{1}{\sqrt{N}}$, the above equation simplifies to $\ket{i^I_{k_n}}^{A^I_{k_n}} = \sum_{j_1,...,j_N} a_{j_1,...,j_N} \bigotimes_{n} \ket{j^I_{k_n}}^{A^I_{k_n}}$ which is true by assumption.
\end{remark}

Finally, we define the decoder to be the composition of the projector from \cref{eq:cbandproj} and \cref{eq:scaling}. This is linear as it is the composition of two linear maps. Additionally, the RHS of \cref{eq:scaling} is the output of the QC-QC. Thus, the composition of the encoder, the causal box and the decoder is equal to the QC-QC as desired. This also shows that the decoder is an isometry, even though the projector in \cref{eq:cbandproj} necessarily decreases the norm while the map in \cref{eq:scaling} increases the same. This is because if a linear map is composed with an isometry and this composition yields another isometry, then the map itself must be an isometry (on its effective input space).

If the ancillas are non-trivial, we can write everything in terms of Choi vectors. For example, \Cref{eq:cbandproj} can be written as 

\begin{equation}
    \sum_{(k_1,...,k_N)} \lambda^{(k_1,...k_N)}_{i_1,...,i_N} \bigstar_{n} \dket{V^{\rightarrow k_{n+1}}_{\C{K}_{n-1}, k_n}} * \ket{i_{k_n}}^{A^O_{k_n}} \ket{t=2n+2} 
\end{equation}

where $\bigstar_n$ denotes an $n$-fold link product.

The rescaling map then produces

\begin{equation}
    \sum_{(k_1,...,k_N)} \bigstar_{n} \dket{V^{\rightarrow k_{n+1}}_{\C{K}_{n-1}, k_n}} * \ket{i_{k_n}}^{A^O_{k_n}}
\end{equation}

which is once again what we get for the QC-QC.

{\bf Step 3 (preserving signalling relations):}
We recall the definition of signalling. Given a CPTP map $\mathcal{M}$ from a set In$=\{I_1,...,I_n\}$ of input systems to a set Out$=\{O_1,...,O_m\}$ of output systems, we say that a subset $\mathcal{S}_I\subset \mathrm{In}$ of input systems signals to a subset $\mathcal{S}_O\subset \mathrm{Out}$ of output systems in $\mathcal{M}$ if and only if there exists a local operation $\mathcal{N}_{\mathcal{S}_I}$ on $\mathcal{S}_I$ such that 
\begin{equation}
 \tr_{\mathrm{Out}\backslash \mathcal{S}_O}  \circ \mathcal{M} \neq \tr_{\mathrm{Out}\backslash \mathcal{S}_O}  \circ \mathcal{M} \circ \mathcal{N}_{\mathcal{S}_I},
\end{equation}
where tensor factors of the identity channel are kept implicit.\footnote{Explicitly, $\mathcal{N}_{\mathcal{S}_I}$ here corresponds to $\mathcal{N}_{\mathcal{S}_I}\otimes \mathcal{I}_{\mathrm{In}\backslash \mathcal{S}_I}$.} This captures that the output state on the subsystems $\mathcal{S}_O$ can be changed by a suitable choice of local operation $\mathcal{N}_{\mathcal{S}_I}$ on the subsystems $\mathcal{S}_I$. 

A general QC-QC $\mathcal{W}$ is a CPTP map from the input systems $\{A_n^O\}_{n=1}^N$ to the output systems $\{A_n^I\}_{n=1}^N$. In the causal box picture, we will use the shorthand $\{\bar{A}_n^{O,t=2k}\}_{k=1}^N:=\bar{A}^O_n$ and $\{\bar{A}_n^{I,t=2k+1}\}_{k=1}^N:=\bar{A}^I_n$ for this proof. Then the associated causal box $\mathcal{C}$ is a CPTP map from the inputs $\{\bar{A}_n^O\}_{n=1}^N$ to outputs $\{\bar{A}_n^I\}_{n=1}^N$.
Thus, in going from the QC-QC picture to the causal box picture, each in/output system $A^{I/O}_j$ maps to a set of $N$ systems $\bar{A}^{I/O}_j$. For simplicity, we carry out the rest of the proof for the $N=2$ case, and show that if we had $A^O_1$ signals to $A^I_2$ in $\mathcal{W}$, then we would have $\bar{A}^O_1$ signals to $\bar{A}^I_2$ in $\mathcal{C}$. The same proof method readily generalises to the multiparty case and arbitrary signalling relations.

Consider the encoder as defined in \cref{eq: encoder} and recall that each term of the form $\ket{i_{k_n}, t=2n+1}$ is short hand for 

\begin{equation}
\ket{\Omega, t=3}...\ket{\Omega, t=2(n-1)+1}\ket{i_{k_n}, t=2n+1}\ket{\Omega, t=2(n+1)+1}...\ket{\Omega, t=2N+1},
\end{equation}

i.e. having a non-vacuum state $i_{k_n}$ on the output wire of agent $A_{k_n}$ at time $t=2n+1$ and a vacuum state on the same output wire at all other relevant times. Since the encoder is an isometry, we can restrict our attention to pure states when defining our local operation, w.l.o.g. Notice that any operation of the form $\bar{\mathcal{N}}_{k_n}=\otimes_{t=3}^{2N+1} (\mathcal{N}_{k_n}^t\oplus \ket{\Omega}\bra{\Omega})$ on $\bar{A}_{k_n}$ which applies the same map $\mathcal{N}_{k_n}^t:=\mathcal{N}_{k_n}$ at each time $t$ on non-vacuum states and acts as identity on vacuum states, would only change the non-vacuum term $\ket{i_{k_n}, t=2n+1}^{\bar{A}^O_{k_n}}$ of the above expression to $\ket{\mathcal{N}(i_{k_n}), t=2n+1}^{\bar{A}^O_{k_n}}$ while leaving the vacuum terms unchanged.  It is then easy to see that for any local operation $\mathcal{N}_1$ defined on a QC-QC input system $A_1^O$, we can define an operation $\bar{\mathcal{N}}_1$ as above, on the corresponding causal box input systems $\bar{A}_1^O$ such that applying $\mathcal{N}_1$ before the encoder is the same applying $\bar{\mathcal{N}}_1$ after the encoder, call this Property 1.

\begin{equation}
\label{eq: encoder_info_preserving}
 \mathrm{Enc}\Big(\mathcal{N}_1(\ket{\psi}_{A_1^O,...,A_N^O})\Big)=\bar{\mathcal{N}}_1\Big( \mathrm{Enc}(\ket{\psi}_{A_1^O,...,A_N^O})\Big), \quad \forall \ket{\psi}, 
\end{equation}

Next, we also consider the decoder and show that the following property, call this Property 2. If for some states $\rho$ and $\sigma$ on $A^O_1\otimes A^O_2$, we have
\begin{equation}
\label{eq: dec_prop2}
    \tr_{A^I_1}\circ \mathrm{Dec}\circ\mathcal{C}\circ\mathrm{Enc}(\rho)\neq  \tr_{A^I_1}\circ \mathrm{Dec}\circ\mathcal{C}\circ\mathrm{Enc}\circ (\sigma).   
\end{equation}
Then, we must also have

\begin{equation}
    \tr_{\bar{A}^I_1}\circ\mathcal{C}\circ\mathrm{Enc}(\rho)\neq  \tr_{\bar{A}^I_1}\circ\mathcal{C}\circ\mathrm{Enc}\circ (\sigma).   
\end{equation}
This is equivalent to saying that $\mathrm{Dec}$ cannot map states in the image of $\mathcal{C}\circ \mathrm{Enc}$, which are indistinguishable on $\bar{A}^I_2$ to states which are distinguishable on $A^I_2$. 
To show this, recall that our decoder is formed by composing a projector (trace decreasing map) with a rescaling (trace-increasing map) which yields an overall trace preserving map. It is clear from \cref{eq:scaling} that on the image of the previous operation (the projection), the rescaling only changes the constrant factors $\lambda$ (which are independent of the initial states $\rho, \sigma$) while relabelling the causal box system $\bar{A}_2^I$ to the QC-QC system $A_{k_n}^I$ (ignoring the time stamp of the non-vacuum state). Hence the rescaling map cannot make indistinguishable states on $\bar{A}^I_2$ distinguishable on $A^I_2$. For the projection map, notice that
the states on $\bar{A}_2^I$ that are annihilated by the projector are locally orthogonal to the states that are preserved by the projector, since because the projector projects on to the one message subspace of $\bar{A}_2^I$, while discarding the 0 and multiple message subspaces, which are orthogonal to the one-message space of $\bar{A}_2^I$. This means that every pair of states that can be distinguished on $\bar{A}_2^I$ after the projector has been applied, can also be distinguished before the projector has been applied\footnote{Simply by locally applying the projector on the one-message subspace of $\bar{A}_2^I$ and then performing the same distinguishing measurement.}. This establishes that Property 2 is satisfied.

To show that Properties 1 and 2 together give the desired implication, suppose $A^O_1$ signals to $A^I_2$ in $\mathcal{W}$, i.e. there exists $\mathcal{N}_1$ acting on $A^O_1$ such that
\begin{equation}
    \tr_{A^I_1}\circ \mathcal{W}\neq  \tr_{A^I_1}\circ \mathcal{W}\circ \mathcal{N}_1.
\end{equation}
From Step 1, of this overall proof, we know that $\mathcal{W}=\mathrm{Dec}\circ\mathcal{C}\circ\mathrm{Enc}$. Plugging this into the above, we have that there must exist a state $\rho$ on $A^O_1\otimes A^O_2$ such that
\begin{equation}
    \tr_{A^I_1}\circ \mathrm{Dec}\circ\mathcal{C}\circ\mathrm{Enc}(\rho)\neq  \tr_{A^I_1}\circ \mathrm{Dec}\circ\mathcal{C}\circ\mathrm{Enc}\circ \mathcal{N}_1(\rho).
\end{equation}
Notice that this is equivalent to \cref{eq: dec_prop2} for $ \sigma:=\Big(\mathcal{N}_1\otimes \mathcal{I}_{A^O_2}\Big)(\rho)$. 
Then, using Property 2 we have,

\begin{equation}
    \tr_{\bar{A}^I_1}\circ\mathcal{C}\circ\mathrm{Enc}(\rho)\neq  \tr_{\bar{A}^I_1}\circ\mathcal{C}\circ\mathrm{Enc}\circ \mathcal{N}_1(\rho).
\end{equation}

Next we use Property 1 to obtain

\begin{equation}
    \tr_{\bar{A}^I_1}\circ\mathcal{C}\circ\mathrm{Enc}\neq  \tr_{\bar{A}^I_1}\circ\mathcal{C}\circ\bar{\mathcal{N}}_1\circ \mathrm{Enc},
\end{equation}
for some $\bar{\mathcal{N}}_1$ on $\bar{A}^O_1$. From this it follows that 

\begin{equation}
    \tr_{\bar{A}^I_1}\circ\mathcal{C}\neq  \tr_{\bar{A}^I_1}\circ\mathcal{C}\circ\bar{\mathcal{N}}_1,
\end{equation}
which is equivalent to $\bar{A}^O_1$ signals to $\bar{A}^I_2$ in the causal box $\mathcal{C}$.


{\bf Step 4 (extension to full network):} Consider now, the whole network formed by the composition of $\mathcal{W}$ with the local operations of the QC-QC picture, and the corresponding network formed by composition $\mathcal{C}$ with associated local operations in the causal box picture. We need to show that every subnetwork of the former is associated with a corresponding subnetwork of the latter which is a fine-graining of it. In particular, the CPTP maps $\mathcal{W}$ and $\mathcal{C}$ are particular subnetworks where we have already shown that the latter is a fine-graining of the former in Steps 2 and 3. The proof of the current step mirrors a proof given in \cite{Vilasini_2022} for the quantum switch. We repeat this proof here for completeness. However, we do not review the full details of their framework here, and refer the reader to the original paper for further details on the definitions of networks, subnetworks and fine-graining thereof.

Networks are defined by a set of completely positive linear maps and a set of compositions connecting the output of one map to the input of another (or the same) map, while subnetworks of a given network are obtained by removing some of the maps and compositions. The QC-QC network is formed by composing the out/input systems $A_j^{I/O}$ of $\mathcal{W}$ with the corresponding in/output systems of the local operations. 
The subnetworks of this network can be classified into the following types: 
\begin{enumerate}
    \item[(1)] subnetworks formed only by the local operations and no composition with $\mathcal{W}$ 
    \item[(2)] subnetworks formed by composing only the local operations of a subset of agents, say $\{A_1,...,A_l\}$ for $l<N$ with $\mathcal{W}$ by connecting in and outputs $A_j^I$ and $A_j^O$ of the local operations and $\mathcal{W}$ for each $j\in \{1,...,l\}$ 
    \item[(3)] subnetworks formed by composing the local operations $A_i$ of a subset of agents, say $\{A_1,...,A_l\}$ for $l\leq N$ with $\mathcal{W}$ in sequence, e.g, $\mathcal{W}\circ \Big(\otimes_{i=1}^l A_i\Big)$ or $\Big(\otimes_{i=1}^l A_i\Big)\circ \mathcal{W}$.
    \item[(4)] subnetworks formed by a combination of (2) and (3) where some local operations are fully composed with $\mathcal{W}$ (both in and outputs connected) and some are sequentially composed.
\end{enumerate}


Since we have a clear correspondence between the in/output systems of the QC-QC network and those of the causal box network (just barred versions of the QC-QC systems), it is evident what the corresponding subnetworks in the causal box picture would look like. 
subnetworks of type (1) simply correspond to a tensor product of some subset of local operations. For simplicity, take the subnetwork corresponding to a single local operation $A_j$ of the $j^{th}$ agent in the QC-QC picture, the corresponding subnetwork in the causal box picture is the local operation $\bar{A}_j$ of the causal box picture (\cref{sec:statespace}). We can trivially construct an encoding and decoding scheme that recovers $A_j$ from $\bar{A}_j$ and preserves signalling relations, this encoder can act as an identity channel from the QC-QC input $A_j^I$ to the causal box input $\bar{A}_j^I$ for any fixed input time $t=2k$ (while preparing arbitrary states on the remaining times), the decoder can be an identity channel from the causal box output $\bar{A}_j^O$ at the output time $2k+1$ to the QC-QC output $A_j^O$ (while tracing out the systems $\bar{A}_j^O$ associated with all other times). Since $\bar{A}_j$ acts exactly as $A_j$ on non-vacuum states, we have the desired properties. Extending this argument analogously to any set of local operations, we can see that for all subnetworks of the form (1) in the QC-QC picture, the corresponding causal box subnetworks are its fine-grainings.

subnetworks of type (2) are simply reduced processes on the remaining $N-l$ agents, which are themselves QC-QCs when the original process is a QC-QC. Since the arguments of Step 2 and 3 apply to arbitrary QC-QCs, these immediately cover subnetworks of type (2). subnetworks of type (3) are also covered by using a similar construction as Step 2, noting that sequential composition of a QC-QC with a tensor product of a set of local operations, can always be considered as a new QC-QC. Since type (4) is a combination of types (2) and (3), this can also be covered through similar arguments. This proves that the causal box picture protocol defined by our extension is a fine-graining of the corresponding QC-QC picture protocol.

{\bf Step 5 (acyclic causal structure):} We do not provide a full definition of information-theoretic causal structure, as this will not be needed. Rather we only use a minimal property which is common across different definitions of this concept, the idea is that the connectivity of a quantum circuit (or more generally, a quantum information-theoretic network \cite{Vilasini_2022}) tells us about the absence of causal influences. This depends on the decompositions of the maps in the circuit as the same quantum channel $\C{M}$ can have two decompositions into smaller channels leading different connectivity between the in and outputs of $\C{M}$ (see \cite{Vilasini_2022} for further details). Relative to a given decomposition of an information-theoretic network, a system $S_1$ can be a cause of another system $S_2$ only if there is a path of wires from $S_1$ to $S_2$.\footnote{Note that the directionality comes from the fact that inputs precede outputs. For instance in a network formed by composing a map $\C{M}_1: \C{L}(\C{H}^A)\rightarrow \C{L}(\C{H}^B)$ to $\C{M}_2: \C{L}(\C{H}^B)\rightarrow \C{L}(\C{H}^C)$ sequentially via the $B$ system, we have no path of wires connecting the terminal system $C$ to the initial system $A$ and hence $C$ is not a cause of $A$ here.} 

We now show that the causal box extensions we have constructed have an acyclic causal structure. The causal box in the image of our mapping from QC-QCs always admits a sequence representation of a particular form, in terms of a totally ordered set $\C{T}$ (see \cref{fig:grenobleextensionl}, \cref{fig:another} and \cref{fig:another2}). The associated causal box network consists of the causal box (modelling the QC-QC) as depicted in the aforementioned figures, together with the boxes modelling the local operations and the latter always map inputs at time $t$ to outputs at $t+1>t$. Therefore this form of the sequence representation provides a decomposition of the whole causal box network which has the property that there is a wire connecting the output of one map to the input of another only if the time stamp associated with the former is earlier than that of the latter\footnote{W.l.o.g. for this argument, in extensions such as \cref{fig:another} where the control and ancillary wires are internal, we can associate with them the same time stamps as in an extension where they are external and joint with the corresponding target, such as \cref{fig:grenobleextensionl}. This ensures that all compositions, including internal ones, connect output wires with earlier time stamps to input wires with later time stamps.}. This means that there are no backwards in time causal influences, the only causal influences which exist in this decomposition flow forwards in time. This rules out causal cycles and establishes the existence of an explanation of the causal box network in terms of a well-defined and acyclic information-theoretic causal structure.

\section{Proofs of results}\label{sec:proofs}

\subsection{Proofs for the main part}

\compislink*

\begin{proof}
Using the definition of composition \cref{def:comp}, we can write

\begin{gather}
\begin{aligned}
    \C{M}^{C\hookrightarrow B} (\ket{\psi}^A \bra{\phi}^A) &= \sum_{k,l} \bra{k}^C\C{M}(\ket{\psi}^A\ket{k}^B\bra{l}^B\bra{\phi}^A)\ket{l}^C \\
    &= \sum_{k,l} \bra{k}^C(\C{M}_A \otimes \C{M}_B)(\ket{\psi}^A\ket{k}^B\bra{l}^B\bra{\phi}^A)\ket{l}^C \\
    &= \sum_{k, l} \bra{k}^C (M_A \otimes M_B)*(\ket{\psi}^A\ket{k}^B\bra{l}^B\bra{\phi}^A)\ket{l}^C \\
    &= \sum_{k, l} \underbrace{\bra{k}^C M_A \ket{l}^C}_{(M_A)_{kl}} * (\ket{\psi}^A \bra{\phi}^A) \otimes \sum_{i, j} \underbrace{\bra{i}^B M_B \ket{j}^B}_{(M_B)_{ij}} \underbrace{\braket{i|k}^B\braket{l|j}^B}_{\delta_{ik} \delta_{lj}} \\
    &= \sum_{k,l} (M_A)_{kl} \otimes (M_B)_{kl} * (\ket{\psi}^A \bra{\phi}^A) \\
    &\stackrel{\text{\cref{def:linkmat}}}{=} (M_A*M_B) * (\ket{\psi}^A \bra{\phi}^A).
\end{aligned}
\end{gather}

In the third line, we used that $\C{M}_{A/B}(\rho) = M_{A/B}*\rho$ and in the fourth line we used commutativity of the link product and the second sum is the explicit form of the link product $M_B*(\ket{k}^B \bra{l}^B)$. Finally, in the last line, we used the definition of the link product, taking $\C{H}^B$ and $\C{H}^C$ to be the same Hilbert space. We can then also write $\C{M}^{C\hookrightarrow B} (\ket{\psi}^A \bra{\phi}^A) = M^{C\hookrightarrow B} * \ket{\psi}^A \bra{\phi}^A$. As these equations must hold for all $\ket{\psi}^A, \ket{\phi}^A \in \C{H}^A$, we can drop these states and obtain

\begin{equation}
    M^{C\hookrightarrow B} = M_A*M_B.
\end{equation}

This shows that sequential composition can be expressed as the link product, at least in the case of finite-dimensional Hilbert spaces.


\end{proof}

\anotheriso*

\begin{proof}
A direct sum of isometries on orthogonal subspaces is again an isometry if their images are orthogonal (cf. \cref{sec:isometries}). This is the case here, as each isometry maps the $m$-messages space on the agents' outputs to the $m$-messages space on the agents' inputs and states with different numbers of messages are orthogonal to each other. The maps $\bar{V}_{n+1}^m$ with $m\neq 1$ are isometries by assumption. We thus only need to show that $\bar{V}_{n+1}^1$ is an isometry. Note that $\bar{V}_{n+1}^1 = \bar{V}_{n+1}^1|_{\C{H}^{O_n}_{\text{corr}}} \oplus \bar{V}_{n+1}^1|_{\C{H}^{O_n}_{\text{mis}}}$ and $\bar{V}_{n+1}^1|_{\C{H}^{O_n}_{\text{corr/mis}}}: \C{H}^{O_n}_{\text{corr/mis}} \rightarrow \C{H}^{I_{n+1}}_{\text{corr/mis}}$. Again by definition, we have that $\bar{V}_{n+1}^1|_{\C{H}^{O_n}_{\text{mis}}}$ is an isometry. From \cref{eq:anotherone}, we can see that $\bar{V}^1_{n+1}|_{\C{H}^{O_n}_{\text{corr}}} = V_{n+1} \otimes \proj{t=2n+2}{t=2n+1}$, thus

\begin{equation}
    \bar{V}^{1 \dagger}_{n+1}|_{\C{H}^{O_n}_{\text{corr}}} \bar{V}^1_{n+1}|_{\C{H}^{O_n}_{\text{corr}}} = V^\dagger_{n+1} V_{n+1} \otimes \proj{t=2n+1}{t=2n+1} = \mathbb{1} \otimes \proj{t=2n+1}{t=2n+1}
\end{equation}

which is the identity on the input space of $\bar{V}^1_{n+1}|_{\C{H}^{O_n}_{\text{corr}}}$, meaning the map is an isometry. Therefore, $\bar{V}_{n+1}^1$ is also an isometry.

We can construct a particular example of such isometries by considering a map that updates the control by adding the minimum of $\C{N} \backslash \{\C{K}_{n-1} \cup k'_n\}$ to $\C{K}_{n-1}$ as well as adding it to the ancilla while acting as the identity on everything else (upto subsystem and timestamp labelling),

\begin{gather}\label{eq:anotherexample}
\begin{aligned}
\bar{V}_{n+1}^m& ((\bigotimes_{k_n} \bigodot_i \ket{\psi^{k_n}_i}^{\bar{A}^O_{k_n}}) \ket{\C{K}_{n-1}, k'_n} \ket{\alpha}^{\alpha_n}) \\
&= (\bigotimes_{k_n} \bigodot_i \ket{\psi^{k_n}_i}^{\bar{A}^I_{k_n}}) \ket{\C{K}_{n-1} \cup \text{min}(\C{N} \backslash \C{K}_{n}), k'_n} \ket{\alpha, \text{min}(\C{N} \backslash \C{K}_n)}^{\alpha_{n+1}}
\end{aligned}
\end{gather}

where $\C{K}_n = \C{K}_{n-1} \cup k'_n$ and $\inprod{\alpha, k}{\beta, l} = \delta_{k,l} \inprod{\alpha}{\beta}$.\footnote{We can always assume that the ancillas are large enough for this definition to be well defined. This is because we can always enlarge an ancilla with additional states. Since these states never get populated in a normal run of the QC-QC, this does not change the QC-QC. At the same time, we can assume that the internal operations are defined on these additional states (see \cref{remark:fullspace} for details) which is necessary for \cref{eq:anotherone} to be well defined.} For the one-message mismatched subspace, we analogously have the following (where $k_n\neq k'_n$)

\begin{gather}
\begin{aligned}
\bar{V}_{n+1}^1& (\ket{\psi^{k_n}_i}^{\bar{A}^O_{k_n}} \ket{\C{K}_{n-1}, k'_n} \ket{\alpha}^{\alpha_n}) \\
&= \ket{\psi^{k_n}_i}^{\bar{A}^I_{k_n}} \ket{\C{K}_{n-1} \cup \text{min}(\C{N} \backslash \C{K}_n), k'_n} \ket{\alpha, \text{min}(\C{N} \backslash \C{K}_n)}^{\alpha_{n+1}}
\end{aligned}
\end{gather}

It is enough to show that the inner product of the ancilla and control is conserved,

\begin{gather}
\begin{aligned}
    &\inprod{\C{K}_{n-1} \cup \text{min}(\C{N} \backslash \C{K}_n), k'_n}{\C{L}_{n-1} \cup \text{min}(\C{N} \backslash \C{K}_n), l'_n} \inprod{\alpha, \text{min}(\C{N} \backslash \C{K}_n) \cup k'_n\}}{\beta, \text{min}(\C{N} \backslash \C{L}_n)}  \\
    &=\delta_{\C{K}_{n-1} \cup \text{min}(\C{N} \backslash \C{K}_n),\C{L}_{n-1} \cup \text{min}(\C{N} \backslash \C{L}_n)} \delta_{k'_n, l'_n} \inprod{\alpha}{\beta} \delta_{\text{min}(\C{N} \backslash \C{K}_n), \text{min}(\C{N} / \C{L}_n)} \\
    &= \delta_{\C{K}_{n-1}, \C{L}_{n-1}} \delta_{k'_n, l'_n} \inprod{\alpha}{\beta} \\
    &= \inprod{\C{K}_{n-1}, k'_n}{\C{L}_{n-1}, l'_n} \inprod{\alpha}{\beta}.
\end{aligned}
\end{gather}
\end{proof}

\anotherequivalence*

\begin{proof}
    Using \cref{eq:directsum} and orthogonality between states corresponding to different numbers of messages, we obtain

    \begin{equation}
        \dket{\bar{V}_{N+1}}*...*\dket{\bar{V}_1} = \sum_{m=0}^\infty \dket{\bar{V}^m_{N+1}} *...*\dket{\bar{V}^m_1}.
    \end{equation}

    Now note that $\dket{\bar{V}^m_1} * \ket{\psi, t=1}^P$ is 0 unless $m=1$, therefore the only relevant term in the above equation when we compose with agents fulfilling the assumptions from \cref{sec:statespace} is the one corresponding to $m=1$. Thus, using the form of the local operations are extensions of the local operations in the QC-QC picture, we have


    \begin{gather}
    \begin{aligned}
        \bigotimes_{k=1}^N &\dket{\bar{A}_k} \otimes \ket{\psi, t=1}^P * \dket{\bar{V}^1_{N+1}} *...*\dket{\bar{V}^1_1} \\
        =& \bigotimes_{k=1}^N \bigotimes_{n=1}^N (\dket{A_k} + \ket{\Omega}^{\bar{A}^O_{k}} \ket{\Omega}^{\bar{A}^I_{k}}) \otimes \ket{2n+1} \ket{2n+2} *\dket{\bar{V}^1_{N+1}} *...*\dket{\bar{V}^1_1} * \ket{\psi, t=1}^P\\
        =& \sum_{(k_1,...k_N)} \bigotimes_{n=1}^N \dket{A_{k_n}} \otimes \ket{2n+1} \ket{2n+2} * (\dket{V_{N+1}} \otimes \ket{2n+1} \ket{2n+2} * ... * \dket{V_1} \otimes \ket{1} \ket{2}) \ket{\psi, t=1}^P \\
        =& \sum_{(k_1,...,k_N)} \dket{V_{N+1}} * \dket{A_{k_N}} * ... * \dket{A_{k_1}} * \dket{V_1} * \ket{\psi, t=1}^P\\
        =& \sum_{(k_1,...,k_N)} \dket{V_{\C{K}_{N-1}, k_N}^{\rightarrow F}} * \dket{A_{k_N}} * ... * \dket{A_{k_1}} * \dket{V_{\emptyset, \emptyset}^{\rightarrow k_1}} * \ket{\psi, t=1}^P\\ 
        =& \bigotimes_{k=1}^N \dket{A_k} \otimes \ket{\psi}^P * \dket{V_{N+1}} * ... * \dket{V_1}.
    \end{aligned}
    \end{gather}

    In the second equality, we expanded the product and used that only terms with exactly one non-vacuum operator for each time step contribute. Note that the fact that there is no contribution from the terms with a control mismatch can be seen as a consequence of \cref{lemma:accept}. This shows that the causal box is an extension of the QC-QC.

    
\end{proof}

\accept*

\begin{proof}
    We can show by induction that this is always the case. The global past is assumed to send a single message during the first time step and initialises the control correctly, thus the base case is trivial. Assume now the projective measurements $\{\C{P}^{I_k/O_k}_{accept}, 1 - \C{P}^{I_k/O_k}_{accept}\}$ for $k < n+1$ all yield $accept$. The internal operation thus acts effectively as $\bar{V}_{n+1} \circ \C{P}^{O_n}_{accept} = \bar{V}^1_{n+1}|_{\C{H}^{O_n}_{\text{corr}}}$, which outputs a state in $\C{H}^{I_{n+1}}_{\text{corr}}$ by definition (\cref{eq:anotherone}) and thus $\{\C{P}^{I_{n+1}}_{accept}, 1 - \C{P}^{I_{n+1}}_{accept}\}$ yields $accept$. The local operations conserve the number of messages and cannot cause a control mismatch and so $\{\C{P}^{O_{n+1}}_{accept}, 1 - \C{P}^{O_{n+1}}_{accept}\}$ must also give that outcome. 
\end{proof}

\subsection{Proofs for the appendix}\label{sec:proofapp}

\begin{restatable}[Isometry condition for internal operations]{lemma}{krausiso}
\label{lemma:krausiso}
For $\C{K}_{n-1}, \C{L}_{n-1} \subsetneq \C{N}$ and $k_n \in \C{N} \backslash \C{K}_{n-1}, l_n \in \C{N} \backslash \C{L}_{n-1}$ with $\C{K}_{n-1} \cup k_n = \C{L}_{n-1} \cup l_n$, it holds that

\begin{equation}
\sum_{k_{n+1}} V^{\rightarrow k_{n+1} \dagger}_{\C{K}_{n-1}, k_n} V^{\rightarrow k_{n+1}}_{\C{L}_{n-1}, l_n} = \mathbb{1}^{A^O_{k_{n}} \alpha_n} \delta_{\C{K}_{n-1}, \C{L}_{n-1}} \delta_{k_n, l_n}.
\end{equation}
\end{restatable}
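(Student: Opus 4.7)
The plan is to derive the claimed orthogonality relation directly from the fact that $V_{n+1}$ as defined in \cref{eq:defiso0} is an isometry, i.e.\ $V_{n+1}^\dagger V_{n+1} = \mathbb{1}$ on its input space. The strategy is to expand this identity using the explicit form of $V_{n+1}$, simplify using orthogonality of the control basis, and then extract coefficients via linear independence of operators of the form $\ket{\C{K}_{n-1}, k_n}\bra{\C{L}_{n-1}, l_n}$.

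First, I would write out $V_{n+1}^\dagger V_{n+1}$ by inserting the definition \cref{eq:defiso0} twice. This yields a sum over two sets of indices $(\C{K}_{n-1}, k_n, k_{n+1})$ and $(\C{L}_{n-1}, l_n, l_{n+1})$, with the two control factors combining as $\ket{\C{K}_{n-1}, k_n}\braket{\C{K}_{n-1} \cup k_n, k_{n+1} | \C{L}_{n-1} \cup l_n, l_{n+1}}\bra{\C{L}_{n-1}, l_n}$. Using orthonormality of the control basis, the inner product evaluates to $\delta_{\C{K}_{n-1} \cup k_n,\, \C{L}_{n-1} \cup l_n}\,\delta_{k_{n+1}, l_{n+1}}$. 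This forces the condition $\C{K}_{n-1} \cup k_n = \C{L}_{n-1} \cup l_n$ assumed in the lemma, and after summing over $l_{n+1}$ using the Kronecker delta, the operator part becomes exactly $\sum_{k_{n+1}} V^{\rightarrow k_{n+1}\dagger}_{\C{K}_{n-1}, k_n} V^{\rightarrow k_{n+1}}_{\C{L}_{n-1}, l_n}$.

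Next, I would write the identity on the input space of $V_{n+1}$ in the same form, namely $\mathbb{1} = \sum_{\C{K}_{n-1}, k_n} \mathbb{1}^{A^O_{k_n}\alpha_n} \otimes \ket{\C{K}_{n-1}, k_n}\bra{\C{K}_{n-1}, k_n}$, using that the generic output/ancilla space decomposes according to which agent $k_n$ acts at step $n$ (as encoded by the control). Equating the expansion of $V_{n+1}^\dagger V_{n+1}$ with this expression and using linear independence of the operators $\ket{\C{K}_{n-1}, k_n}\bra{\C{L}_{n-1}, l_n}$ for distinct quadruples $(\C{K}_{n-1}, k_n, \C{L}_{n-1}, l_n)$, I can read off coefficients. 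On the diagonal $(\C{K}_{n-1}, k_n) = (\C{L}_{n-1}, l_n)$ I obtain $\sum_{k_{n+1}} V^{\rightarrow k_{n+1}\dagger}_{\C{K}_{n-1}, k_n} V^{\rightarrow k_{n+1}}_{\C{K}_{n-1}, k_n} = \mathbb{1}^{A^O_{k_n}\alpha_n}$, and off-diagonal (but still satisfying $\C{K}_{n-1} \cup k_n = \C{L}_{n-1} \cup l_n$) the sum must vanish. Combining both cases yields the claimed equality with the Kronecker factor $\delta_{\C{K}_{n-1}, \C{L}_{n-1}}\,\delta_{k_n, l_n}$.

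I expect the main obstacle to be bookkeeping rather than mathematical depth: one must be careful that, under the constraint $\C{K}_{n-1} \cup k_n = \C{L}_{n-1} \cup l_n$ with $k_n \notin \C{K}_{n-1}$ and $l_n \notin \C{L}_{n-1}$, the case $k_n \neq l_n$ automatically forces $\C{K}_{n-1} \neq \C{L}_{n-1}$ (since then $l_n \in \C{K}_{n-1}$ and $k_n \in \C{L}_{n-1}$), so the two Kronecker deltas are consistent and the off-diagonal vanishing is a single statement. A second subtlety is ensuring one uses the version of the QC-QC framework (see \cref{remark:fullspace}) where $V_{n+1}$ is an isometry on the full space, so that the identity decomposition used above is valid without restriction to the effective input space.
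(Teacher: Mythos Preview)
Your proposal is correct and follows essentially the same approach as the paper: both expand $V_{n+1}^\dagger V_{n+1}$ using \cref{eq:defiso0}, use orthonormality of the control basis to collapse the inner product, and then compare coefficients of $\ket{\C{K}_{n-1}, k_n}\bra{\C{L}_{n-1}, l_n}$ against the identity to read off the diagonal and off-diagonal cases. Your added remarks about the consistency of the two Kronecker deltas and the reliance on \cref{remark:fullspace} are accurate and in fact make the bookkeeping slightly more explicit than the paper's own proof.
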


\begin{proof}
The lemma is a direct consequence of the maps $V_{n+1}: \C{H}^{A^O \alpha_n C_{n}} \rightarrow \C{H}^{A^I \alpha_{n+1} C_{n+1}}$ in the QC-QC picture being isometries, 

\begin{equation}\label{eq:QCQCiso}
V_{n+1}^\dagger V_{n+1} = \mathbb{1}^{A^O \alpha_n C_{n}} = \sum_{\C{K}_{n-1}, k_n} \mathbb{1}^{A^O_{k_n} \alpha_n} \otimes \ket{\C{K}_{n-1}, k_n} \bra{\C{K}_{n-1}, k_n}.
\end{equation}

Using the definition of $V_{n+1}$ given in \cref{eq:defiso},

\begin{gather}
\begin{aligned}
    V_{n+1}^\dagger V_{n+1} &= \sum_{\substack{\C{K}_{n-1},\\ k_n, k_{n+1}}} \sum_{\substack{\C{L}_{n-1},\\ l_n, l_{n+1}}} V^{\rightarrow k_{n+1} \dagger}_{\C{K}_{n-1}, k_n} V^{\rightarrow l_{n+1}}_{\C{L}_{n-1}, l_n} \otimes \ket{\C{K}_{n-1}, k_n} \braket{\C{K}_{n-1} \cup k_n, k_{n+1}|\C{L}_{n-1} \cup l_n, l_{n+1}} \bra{\C{L}_{n-1}, l_n} \\
    &= \sum_{\C{K}_n} \sum_{k_n, l_n \in \C{K}_n} \sum_{k_{n+1}} V^{\rightarrow k_{n+1} \dagger}_{\C{K}_n \backslash k_n, k_n} V^{\rightarrow k_{n+1}}_{\C{K}_n \backslash l_n, l_n} \otimes \ket{\C{K}_n \backslash k_n, k_n} \bra{\C{K}_n \backslash l_n, l_n}.
\end{aligned}
\end{gather}

Comparing this to \cref{eq:QCQCiso}, we notice that terms with an off-diagonal projector over the control $\ket{\C{K}_n \backslash k_n, k_n} \bra{\C{K}_n \backslash l_n, l_n}$, $k_n \neq l_n$, must vanish. This is only possible if $\sum_{k_{n+1}} V^{\rightarrow k_{n+1} \dagger}_{\C{K}_n \backslash k_n, k_n} V^{\rightarrow k_{n+1}}_{\C{K}_n \backslash l_n, l_n} = 0$ for $k_n \neq l_n$. Meanwhile, terms where $k_n = l_n$ must be the identity which implies 

\begin{equation}
    \sum_{k_{n+1}} V^{\rightarrow k_{n+1} \dagger}_{\C{K}_n \backslash k_n, k_n} V^{\rightarrow k_{n+1}}_{\C{K}_n \backslash k_n, k_n} = \mathbb{1}^{A^O_{k_n} \alpha_n}.
\end{equation}

Combining these, we obtain the statement of the lemma.

\end{proof}

\symmiso*

\begin{proof}
    We use the convention of \cref{eq:symmprod} for the symmetric tensor product. The proof relies on the fact that the inner product is invariant if one replaces vectors of the form 

    \begin{equation}\label{eq:tenpsi}
        \bigotimes_{k} \ket{\psi_k}^{A_k} 
    \end{equation}

    with vectors 

    \begin{equation}\label{eq:symmpsi}
        \bigodot_{k} \ket{\psi_k}^{A_k}. 
    \end{equation}

    This follows from the fact that the different spaces associated to different agents are orthogonal. We can also show this by explicitly calculating the two inner products (using \cref{eq:fock_inner_n})
    
    \begin{gather}
    \begin{aligned}
        \bigodot_k \bra{\psi_k}^{A_k} \bigodot_l \ket{\phi_l}^{A_l} &= \sum_{\pi} \prod_k \bra{\psi_k}^{A_k} \ket{\phi_l}^{A_\pi(l)} \\
        &= \sum_{\pi} \prod_k \braket{\psi_k}{\phi_k} \delta_{k, \pi(k)} \\
        &= \prod_k \braket{\psi_k}{\phi_k} \\
        &= \bigotimes_k \bra{\psi_k}^{A_k} \bigotimes_l \ket{\psi_l}^{A_l}
    \end{aligned}
    \end{gather}

    where we have used the orthogonality of the different Hilbert spaces.
    
    
    Due to \cref{lemma:isochar}, it suffices to consider the inner products of elementary tensors with a fixed but arbitrary number of messages $M$

    \begin{equation}\label{eq:gen_vec}
        \bigotimes_{k_n} \bigodot_{i=i_{k_n}}^{i_{k_n+1}-1} \ket{\psi^i}^{\bar{A}_{k_n}} \ket{\C{K}^i_{n-1}, k_n}
    \end{equation}

    where $1 = i_1 < i_2 < ... < i_{N+1} = M+1$ and we dropped the time stamps to reduce clutter. This is because the subspaces corresponding to different numbers of messages are orthogonal to each other and the operators conserve the number of messages, allowing us to invoke condition 4 of \cref{lemma:isochar}, while the elementary tensors span the whole space, therefore containing in particular a basis (condition 3).

    Replacing the tensor product in \cref{eq:gen_vec} with a symmetric tensor product and using \cref{eq:fock_inner_n}, the inner product of two such tensor products is

    \begin{equation}
        \sum_{\pi \in S^M} \prod_i \braket{\psi^i|\phi^{\pi(i)}} \braket{\C{K}^i_{n-1}, k_n^i|\C{L}^i_{n-1}, l_n^{\pi(i)}}
    \end{equation}

    where we added indexes $i$ to $k_n$ and $l_n$ so that there is only one product over $i$ instead of one over the same and one over the agents' labels. 
    
    On the other hand, applying $\bar{V}_{n+1}$ to \cref{eq:gen_vec}, we obtain

    \begin{equation}
        symm(\bigotimes_{k_n} \bigotimes_{i=i_{k_n}}^{i_{k_n+1}-1} \sum_{k^i_{n+1}} \bar{V}^{\rightarrow k^i_{n+1}}_{\C{K}^i_{n-1}, k_n} \ket{\psi^i}^{\bar{A}_{k_n}} \ket{\C{K}^i_{n-1} \cup k_n, k^i_{n+1}}).
    \end{equation}

    For the inner product, we can once again use tensors of the form

    \begin{equation}
        \bigodot_i \sum_{k^i_{n+1}} \bar{V}^{\rightarrow k^i_{n+1}}_{\C{K}^i_{n-1}, k_n^i} \ket{\psi^i}^{\bar{A}_{k_n}} \ket{\C{K}^i_{n-1} \cup k_n^i, k^i_{n+1}}).
    \end{equation}

    which yields

    \begin{gather}
    \begin{aligned}
    \sum_{\pi \in S^M} & \prod_i \sum_{k^i_{n+1}, l^{\pi(i)}_{n+1}} \bra{\psi^i} \bar{V}^{\rightarrow k^i_{n+1} \dagger}_{\C{K}^i_{n-1}, k_n^i} \bar{V}^{\rightarrow l^{\pi(i)}_{n+1}}_{\C{L}^{\pi(i)}_{n-1}, l_n^\pi(i)} \ket{\phi^{\pi(i)}} \braket{\C{K}^i_{n-1} \cup k_n^i, k_{n+1}^i|\C{L}^{\pi(i)}_{n-1} \cup l_n^{\pi(i)}, l_{n+1}^{\pi(i)}} \\
    &= \sum_{\pi \in S^M} \prod_i \sum_{k^i_{n+1}} \bra{\psi^i} \bar{V}^{\rightarrow k^i_{n+1} \dagger}_{\C{K}^i_{n-1}, k_n^i} \bar{V}^{\rightarrow k^i_{n+1}}_{\C{L}^{\pi(i)}_{n-1}, l_n^\pi(i)} \ket{\phi^{\pi(i)}} \braket{\C{K}^i_{n-1} \cup k_n^i, k_{n+1}^i|\C{L}^{\pi(i)}_{n-1} \cup l_n^{\pi(i)}, k^i_{n+1}} \\
    &= \sum_{\pi \in S^M} \prod_i \sum_{k^i_{n+1}} \bra{\psi^i} \bar{V}^{\rightarrow k^i_{n+1} \dagger}_{\C{K}^i_{n-1}, k_n^i} \bar{V}^{\rightarrow k^i_{n+1}}_{\C{L}^{\pi(i)}_{n-1}, l_n^\pi(i)} \ket{\phi^{\pi(i)}} \delta_{\C{K}^i_{n-1} \cup k_n^i, \C{L}^{\pi(i)}_{n-1}\ \cup l_n^{\pi(i)}} \\
    &\stackrel{\text{\cref{lemma:krausiso}}}{=} \sum_{\pi \in S^M} \prod_i \braket{\psi^i|\phi^{\pi(i)}} \delta_{\C{K}^i_{n-1}, \C{L}^{\pi(i)}_{n-1}} \delta_{k_n^i, l_n^\pi(i)} \delta_{\C{K}^i_{n-1} \cup k_n^i, \C{L}^{\pi(i)}_{n-1}\ \cup l_n^{\pi(i)}} \\
    &= \sum_{\pi \in S^M} \prod_i \braket{\psi^i|\phi^{\pi(i)}} \braket{\C{K}^i_{n-1}, k_n^i|\C{L}^i_{n-1}, l_n^{\pi(i)}}
    \end{aligned}
    \end{gather}

   where in the second equality we used that if the target spaces are different (i.e. $k^i_{n+1} \neq l^{\pi(i)}_{n+1}$), the states are orthogonal.




   
\end{proof}

\symmequivalence*

\begin{proof}
W.l.o.g., we assume the global past to be trivial. If we have a QC-QC with non-trivial global past, compose this QC-QC with some arbitrary state $\rho^P \in \C{L}(\C{H}^P)$ to obtain a QC-QC with trivial global past and compose the causal box we obtain from the QC-QC with $\rho^P \otimes \proj{t=1}{t=1}$. Note that the behaviour of the original QC-QC and causal box are thus completely determined by QC-QCs and causal boxes with trivial past since $\rho^P$ is arbitrary. Thus, it is enough to prove the statement in the case of trivial global past. Further instead of considering the Choi vector, we can consider its projection to the space

\begin{gather}
\begin{aligned}\label{eq:effHeff}
    \text{Span}[\{\bigotimes_{n=1}^N \ket{i_{k_n}, 2n}^{\bar{A}^I_{k_n}} \otimes \ket{\Omega, (2n)^c}^{\bar{A}^I_{k_n}} \otimes  \ket{j_{k_n}, 2n+1}^{\bar{A}^O_{k_n}} \otimes \ket{\Omega, (2n+1)^c}^{\bar{A}^O_{k_n}}
    \otimes \ket{i_F, 2N+2}^{\bar{F}}\}_{\substack{i_{k_1} \neq \Omega; \\i_{k_n}, j_{k_n}: \\ i_{k_n} = \Omega \\ \implies j_{k_n} = \Omega}}.
\end{aligned}
\end{gather}

This space is obtained by imposing the following constraints: firstly, at any time there is a message on at most a single wire. Secondly, at $t=2$, there cannot be vacuum on all input wires. Thirdly, if the input wire of an agent during one time step carries the vacuum, so does their output wire during the next time step. The first and second constraints follow from the fact that the causal boxes as we defined them conserve the number of messages. Additionally, since we assumed the global past to be trivial, we have that $V_1$ is simply a state (potentially in superposition on different wires), i.e. the number of messages is exactly 1. The third constraint is justified by the fact that we assume that the local agents only output a message at $t=2n+1$ if they received a message at $t=2n$. Thus, terms that do not fulfill this constraint simply vanish when composing with agents.

Let us now consider a basis state from \cref{eq:effHeff} and see what happens when we compose it with the pure local operations from \cref{def:local_equivalence}

\begin{gather}\label{eq:basiscomp}
\begin{aligned}
    \bigotimes_{k=1}^N &\dket{\bar{A}_k} *\bigotimes_{n=1}^N \ket{i_{k_n}, t=2n}^{\bar{A}^I_{k_n}} \ket{\Omega, (2n)^c}^{\bar{A}^I_{k_n}} \ket{j_{k_n}, t=2n+1}^{\bar{A}^O_{k_n}} \ket{\Omega, (2n+1)^c}^{\bar{A}^O_{k_n}} \ket{i_F, t=2N+2}^{\bar{F}} \\
    =& (\bigotimes_{n=1}^N \bigotimes_{t \in \C{T}^I} (\dket{A_n} + \ket{\Omega}^{\bar{A}^I_{k_n}} \ket{\Omega}^{\bar{A}^O_{k_n}}) \otimes \ket{t+1} \ket{t}) * \\
    &\bigotimes_{n=1}^N \ket{i_{k_n}, t=2n}^{\bar{A}^I_{k_n}} \ket{\Omega, (2n)^c}^{\bar{A}^I_{k_n}}  \ket{j_{k_n}, t=2n+1}^{\bar{A}^O_{k_n}} \ket{\Omega, (2n+1)^c}^{\bar{A}^O_{k_n}}
    \ket{i_F, t=2N+2}^{\bar{F}} \\
    =& \prod_{n=1}^N (\dket{A_{k_n}} + \ket{\Omega}^{\bar{A}^I_{k_n}} \ket{\Omega}^{\bar{A}^O_{k_n}})\\
    &* (\ket{i_{k_n}}^{\bar{A}^I_{k_n}} \ket{j_{k_n}}^{\bar{A}^O_{k_n}}) ((\ket{\Omega}^{\bar{A}^I_{k_n}} \ket{\Omega}^{\bar{A}^O_{k_n}}) * (\ket{\Omega}^{\bar{A}^I_{k_n}} \ket{\Omega}^{\bar{A}^O_{k_n}}))^3 \ket{i_F, t=2N+2} \\
    =& \begin{cases}
    \prod_{n=1}^N \dket{A_{k_n}} * (\ket{i_{k_n}}^{A^I_{k_n}} \ket{j_{k_n}}^{A^O_{k_n}}) \ket{i_F, t=2N+2}, &\text{ if } i_{k_n}, j_{k_n} \neq \Omega, \forall n \\ 
    0, &\text{ else} 
    \end{cases}
\end{aligned}
\end{gather}

In the first equality, we used the definition of extensions of local operations. In the second equality, we contracted the time stamps and also used that $\dket{A_{k_n}} * (\ket{\Omega}^{\bar{A}^I_{k_n}} \otimes \ket{\Omega}^{\bar{A}^O_{k_n}}) = 0$ for all $k_n$. Finally, we used that both the local and internal operations map non-vacuum states to non-vacuum states. The contraction thus vanishes unless all $i_{k_n}, j_{k_n}$ are either the vacuum or the non-vacuum. However, as we stated earlier, $i_{k_1} \neq \Omega$ and so only the latter is actually possible.

Additionally, note that there is an obvious isomorphism between the global future state in the above equation and the global future in the QC-QC, namely $\ket{i_F, t=2N+2} \cong \ket{i_F}$. We will thus drop the time stamp and the vacuum states of the global future from here on out.

The coefficient of the basis state in \cref{eq:basiscomp} can be found using the definition of the Choi vector. The Choi vector of some operator $T: \C{H}^X \rightarrow \C{H}^Y$ is $T \dket{\mathbb{1}}^{XX} = \sum_i \ket{i}^X T \ket{i}^X = \sum_{ij} \ket{i}^X \ket{j}^Y \bra{j}^Y T \ket{i}^X = \sum_{ij} \bra{j} T \ket{i} \ket{i}^X \ket{j}^Y$. Applied to our basis state and with the help of \cref{eq:photonic_single}, we find the coefficient to be

\begin{equation}\label{eq:coeff}
    \bra{i_F} V^{\rightarrow F}_{\C{N}\backslash k_N, k_N} \ket{j_{k_N}} * \bigstar_{n=1}^{N-1} \bra{i_{k_{n+1}}} V^{\rightarrow k_{n+1}}_{\C{K}_{n-1}, k_n} \ket{j_{k_n}} * \bra{i_{k_1}} V^{\rightarrow k_1}_{\emptyset, \emptyset}
\end{equation}

where we used the notation $\bigstar_{n=1}^N A_n = A_1 * A_2 * ... * A_N$. 

With \cref{eq:basiscomp} and \cref{eq:coeff}, we can now calculate the contraction $\bigotimes_{k=1}^N \dket{\bar{A}_k} * \dket{\bar{V}}$. 

\begin{gather}
\begin{aligned}
    \bigotimes_{k=1}^N &\dket{\bar{A}_k} * \dket{\bar{V}} \\
    \cong & \sum_{(k_1,...,k_N)} \sum_{\substack{i_1,..., i_N, \\j_1,...., j_N, i_F}} \bra{i_F} V^{\rightarrow F}_{\C{N}\backslash k_N, k_N} \ket{j_{k_N}} * \bigstar_{n=1}^{N-1} \bra{i_{k_{n+1}}} V^{\rightarrow k_{n+1}}_{\C{K}_{n-1}, k_n} \ket{j_{k_n}} * \bra{i_{k_1}} V^{\rightarrow k_1}_{\emptyset, \emptyset} \\
    &\prod^{N}_{n=1} \dket{A_{k_n}} * (\ket{i_{k_n}}^{A^I_{k_n}} \ket{j_{k_n}}^{A^O_{k_n}}) \ket{i_F} \\
    =& \sum_{(k_1,...,k_N)} \bigstar_{n=1}^{N} (\dket{A_{k_n}} * \sum_{i_{k_{n+1}}, j_{k_n}} \bra{i_{k_{n+1}}} V^{k_{n+1}}_{\C{K}_{n-1}, k_n} \ket{j_{k_n}} (\ket{i_{k_n}}^{A^I_{k_n}} \ket{j_{k_n}}^{A^O_{k_n}})) \\
    =& \sum_{(k_1,...,k_N)} \bigstar_{n=1}^{N} \dket{A_{k_n}} * \dket{V^{k_{n+1}}_{\C{K}_{n-1}, k_n}} \\
    =&  \bigotimes_{k=1}^N \dket{A_k} * \ket{w}.
\end{aligned}
\end{gather}

In the above equation, one should interpret $k_{N+1} = F$ and $\ket{j_{k_0}} = 1$. In the third equality, we used again that $\dket{T} = \sum_{ij} \bra{i} T \ket{j} \ket{i}^X \ket{j}^Y$. In the last equality, we used commutativity of the link product and the definition of the process vector of the QC-QC. 

This shows that we have found a causal box extension of the QC-QC. 
\end{proof}

\end{document}